%% file: main.tex
\newif\ifShowComments

\ShowCommentstrue

\documentclass[conference,compsoc]{IEEEtran}
\IEEEoverridecommandlockouts

\usepackage{xcolor}
\usepackage{graphicx}
\ifCLASSOPTIONcompsoc
  \usepackage[nocompress]{cite}
\else
  \usepackage{cite}
\fi
\usepackage[utf8]{inputenc}
\usepackage{amsmath}
\usepackage{amsfonts}
\usepackage{mathtools}
\usepackage{tabularx}
\usepackage{booktabs}
\usepackage{xspace}
\usepackage{boxedminipage}
\usepackage[inline]{enumitem}
\usepackage{amsthm}
\usepackage{multirow}
\usepackage{multicol}
\usepackage{pifont}
\usepackage{makecell}
\usepackage{adjustbox}
\usepackage{algorithm}
\usepackage{thmtools}
\usepackage{stfloats}
\usepackage[noend]{algpseudocode}
\ifCLASSOPTIONcompsoc
  \usepackage[caption=false,font=footnotesize,labelfont=sf,textfont=sf]{subfig}
\else
  \usepackage[caption=false,font=footnotesize]{subfig}
\fi
\usepackage{float}
\usepackage{tikz}
\usetikzlibrary{arrows.meta,positioning,calc,fit,backgrounds}
\usepackage{soul}
\usepackage{listings}
\usepackage{comment}
\usepackage{threeparttable}
\usepackage{balance}
\usepackage{xurl}
\usepackage[hidelinks]{hyperref}
\usepackage[capitalize,noabbrev]{cleveref}

\input{setup}

\algdef{SE}[EVENT]{Upon}{EndUpon}[1]{\textbf{upon} #1 \textbf{do}}{\textbf{end upon}}
\algtext*{EndUpon}

\newcommand{\paperauthorblock}[3]{%
\begin{minipage}[t]{0.29\textwidth}\centering
{\normalfont\normalsize #1}\\[0.25ex]
{\normalfont\itshape\normalsize #2\\#3}
\end{minipage}}

\title{\name: Practical Low-Latency Signature-Free BFT Consensus}
\author{
\begin{tabular}{@{}c@{\hspace{0.025\textwidth}}c@{\hspace{0.025\textwidth}}c@{}}
\multicolumn{3}{c}{
\paperauthorblock{Qianyu Yu}
{The Hong Kong University of\\Science and Technology}
{qyu100@connect.hkust-gz.edu.cn}
\hspace{0.08\textwidth}
\paperauthorblock{Juan Villacis}
{University of Bern}
{juan.villacis@unibe.ch}}
\\[1.8em]
\paperauthorblock{Giuliano Losa}
{Stellar Development Foundation}
{giuliano@stellar.org}
&
\paperauthorblock{Zhuolun Xiang}
{Aptos Labs}
{xiangzhuolun@gmail.com}
&
\paperauthorblock{Xuechao Wang\textsuperscript{*}\thanks{\textsuperscript{*}Correspondence: Xuechao Wang.}}
{The Hong Kong University of\\Science and Technology}
{xuechaowang@hkust-gz.edu.cn}
\end{tabular}
}

\begin{document}

\maketitle

\begin{abstract}
\input{abstract}
\end{abstract}

\IEEEpeerreviewmaketitle

\input{introduction}

\input{preliminary}
\input{protocol}

\input{security_analysis2}
\input{evaluation}
\input{related_work}
\input{conclusion}
\ifCLASSOPTIONcompsoc
  \section*{Acknowledgments}
\else
  \section*{Acknowledgment}
\fi
We thank Nibesh Shrestha and \href{https://arxiv.org/search/cs?searchtype=author&query=Arun,+B}{Balaji Arun} for helpful advice on the implementation.

\bibliographystyle{IEEEtranS}
\balance
\bibliography{main}

\appendices
\input{security_analysis_appendix}

\end{document}

%% file: setup.tex
\ifShowComments

\newcommand{\zhuolun}[1]{{\footnotesize{\color{blue} [Zhuolun:  #1]}}}
\newcommand{\QY}[1]{{\footnotesize{\color{magenta} [Qianyu: #1]}}}
\newcommand{\giuliano}[1]{{\footnotesize{\color{brown} [Giuliano: #1]}}}
\newcommand{\juan}[1]{{\footnotesize{\color{violet} [Juan: #1]}}}
\newcommand{\xuechao}[1]{{\footnotesize{\color{red} [Xuechao: #1]}}}
\else
\newcommand{\zhuolun}[1]{}
\newcommand{\QY}[1]{}
\newcommand{\giuliano}[1]{}
\newcommand{\xuechao}[1]{}
\newcommand{\juan}[1]{}
\fi

\newcommand{\ignore}[1]{}

\lstdefinestyle{pseudostyle}{
  basicstyle=\ttfamily\footnotesize,
  columns=fullflexible,
  keepspaces=true,
  breaklines=true,
  showstringspaces=false,
  tabsize=2
}

\newcounter{algorithmicH}
\AtBeginEnvironment{algorithmic}{\stepcounter{algorithmicH}}
\makeatletter
\providecommand*{\theHALG@line}{}
\renewcommand{\theHALG@line}{\arabic{algorithmicH}.\arabic{ALG@line}}
\makeatother

\newcommand{\parties}{\mathcal{P}}
\newcommand{\party}[1]{\ensuremath{p_{#1}}\xspace}

\newcommand{\tuple}[1]{\langle #1 \rangle}

\newcommand{\True}{\mathsf{true}}
\newcommand{\False}{\mathsf{false}}

\algrenewcommand\textproc{}

\definecolor{yescolor}{HTML}{026378}
\def\yes{\textcolor{yescolor}{\checkmark}}
\def\no{\textcolor{red}{\ding{55}}}

\newcommand{\name}{Simple-IT\xspace}
\newcommand{\shortname}{\ensuremath{\mathcal{S}}\xspace}

\makeatletter
\renewcommand{\ALG@name}{}
\makeatother

\algrenewcommand\textproc{}
\algdef{SE}[EVENT]{Event}{EndEvent}[1]{\textbf{upon}\ #1\ \algorithmicdo}{\algorithmicend\ \textbf{event}}%
\algtext*{EndEvent}

\newcommand{\Commit}{\mathsf{commit}}
\newcommand{\Accept}{\mathsf{accept}}

\newcommand{\var}[1]{\mathsf{#1}}
\newcommand{\fn}[1]{\mathbf{\mathsf{#1}}}

\newcommand{\iunderline}[1]{\noindent\underline{#1}}

\newcommand{\floor}[1]{\lfloor #1 \rfloor}
\newcommand{\ceil}[1]{\lceil #1 \rceil}

\newtheorem{lemma}{Lemma}

\newtheorem{corollary}{Corollary}
\newtheorem{theorem}{Theorem}
\newtheorem{definition}{Definition}

\newcommand{\sig}[1]{\langle #1 \rangle}

\newcommand{\Vote}{\mathsf{vote}}

\renewcommand{\paragraph}[1]{\medskip\noindent\textbf{#1}}

\newif\iffull


%% file: abstract.tex
Recent advances in quantum computing pose a looming threat to most current Byzantine fault-tolerant (BFT) consensus protocols, which rely on quantum-vulnerable public-key signature schemes such as Ed25519 and BLS12-381.
Instead of switching to much more expensive post-quantum secure signature schemes, an alternative is to use signature-free protocols, which rely only on cheap, post-quantum secure authenticated channels.

In this paper, we ask whether signature-free BFT consensus protocols can match the performance of current state-of-the-art, quantum-vulnerable BFT consensus protocols.
While previous work on the Sailfish++~\cite{shresthaOptimisticSignatureFreeReliable2025a} protocol showed that state-of-the-art throughput is attainable signature-free, the question of latency is still open.
Several recent signature-free protocols have low latency in theory, but they are all very intricate, and no practical implementation has so far been presented.
In this work, we propose Simple-IT, a new leader-based, signature-free BFT consensus protocol that achieves a theoretical latency of 4 message delays (one more than the optimum), and only 3 on its optimistic path.
Crucially, Simple-IT is simple enough to be amenable to implementation and to practical optimizations such as speculative pipelining, and, as we show experimentally in a geo-distributed testbed, it achieves both throughput and latency competitive with state-of-the-art quantum-vulnerable protocols.

%% file: introduction.tex
\section{Introduction}

Partially synchronous~\cite{dwork1988consensus} Byzantine fault-tolerant consensus protocols (BFT protocols, for short) are at the core of many high-performance reliable replicated systems such as blockchains, and their security and performance are critical.
However, progress in quantum computing poses a credible, near-term threat to their security.

Virtually every deployed BFT protocol authenticates validators' messages with public-key signatures such as Ed25519 signatures or other schemes based on discrete-logarithm and factoring problems.
However, a quantum computer running Shor's algorithm can break such schemes and recover a validator's secret key from its public key in polynomial time, allowing the attacker to forge messages and break both the safety and liveness of the protocol.
Building a quantum computer capable of such an attack remains a major engineering challenge, but the estimated costs keep falling~\cite{babbush2026ecc}, and NIST and the EU have called for transitioning to post-quantum-secure systems by 2035 and 2030, respectively~\cite{moody2024transition,nis2025pqcRoadmap}.

One solution to secure BFT protocols is to adopt quantum-secure signatures.
Unfortunately, signing and verifying signatures are on the critical path of BFT protocols, and quantum-secure signatures are much more expensive than their post-quantum-vulnerable counterparts.
For example, the schemes standardized by NIST~\cite{fips204,fips205} are at least one order of magnitude larger and slower to verify than the post-quantum-vulnerable Ed25519 signatures.
Our experiments in~\Cref{sec: evaluation} show that, in a state-of-the-art BFT protocol, switching to post-quantum-secure ML-DSA-65 signatures~\cite{fips204,ducas2018crystals} doubles latency.

Another solution is to design signature-free BFT protocols, which, after a setup phase, rely only on authenticated channels secured using fast and quantum-secure symmetric cryptography such as HMAC-SHA256 (HMAC-SHA256 is only vulnerable to Grover's algorithm, whose quadratic speedup is offset by a modest increase in key size).

However, with only authenticated channels, we lose the transferability property of signatures (i.e., a party \(p_1\) cannot prove to another party \(p_2\) that a third party \(p_3\) sent a given message), and we must instead rely on mechanisms like Bracha's Reliable Broadcast~\cite{bracha1987asynchronous} to ensure parties converge on the same view of the system's execution.
Naively applying such mechanisms adds latency and increases communication complexity, while trying to orchestrate them cleverly can lead to overly intricate algorithms that are hard to turn into practical BFT protocols.

In this paper, we ask whether partially synchronous, signature-free BFT consensus protocols are nevertheless a practical solution to obtaining post-quantum security with optimal resilience (\(n>3f\)) while matching the performance of the best post-quantum-vulnerable BFT protocols.

Recently, Shrestha et al.~\cite{shresthaOptimisticSignatureFreeReliable2025a} showed that, if one is willing to compromise on latency and scalability for better throughput, the answer is yes.
They propose Sailfish++, a DAG-based signature-free BFT protocol, and show on a geo-distributed testbed that Sailfish++ matches the throughput of Sailfish~\cite{shrestha2024sailfish}, which itself achieves state-of-the-art throughput among post-quantum-vulnerable BFT protocols. 
However, neither Sailfish++ nor Sailfish matches the latency or scalability of leader-based post-quantum-vulnerable BFT protocols.
Sailfish++ has an average optimistic commit latency of roughly 5.6 message delays\footnote{\label{fn:sailfish-latency}Leader vertices commit in 3 message delays, \(n-f-1\) other vertices commit in 5, and the remaining \(f\) commit in 7 message delays.} and sends \(O(n^3)\) bits per view, whereas some leader-based BFT protocols, e.g. DispersedSimplex~\cite{shoupSingSongSimplex2024}, achieve a good-case commit latency of only 3 message delays and send only \(O(n^2)\) bits per view.

\begin{table*}[t]
    \caption{Leader-based BFT protocols (without block dissemination)}%
    \label{table:leader-based-comparison}
    \setlength\tabcolsep{4pt}
    \begin{center}
    \newcommand{\hdii}[2]{\textbf{\makecell{#1\\#2}}}
    \newcommand{\hdiii}[3]{\textbf{\makecell{#1\\#2\\#3}}}
    \begin{threeparttable}
    \begin{tabular}{l c c c c}
    \toprule
      & \hdii{Good-case/optimistic}{commit latency}
      & \hdii{Eventual worst-case}{view duration}
      & \hdii{Bits sent}{per view}
      & \textbf{Signature-free}
      \\ [0.5ex]
    \midrule
      IT-HS~\cite{abrahamInformationTheoreticHotStuff2021}         & \(6\delta\) & \(11\Delta+\delta\) & $O(n^2)$ & \yes \\
      Alg.\ BFT~\cite[Ch.~3]{castro_thesis}\tnote{\dag}               & \(3\delta\) & \(7\Delta+5\delta\)\tnote{\ddag} & $O(n^3)$ & \yes \\
      TetraBFT~\cite{yuTetraBFTReducingLatency2024b}               & \(5\delta\) & \(8\Delta+\delta\)\tnote{\S} & $O(n^2)$ & \yes \\
      Forget-IT~\cite{abrahamForgetITOptimalGoodCase2026}\tnote{\dag} & \(3\delta\) & \(4\Delta+3\delta\) & $O(n^2)$ & \yes \\
      Simplex~\cite{chanSimplexConsensusSimple2023} & \(3\delta\) & \(3\Delta+\delta\) & $O(n^2)$ & \no \\
    \midrule
      \textbf{\name} (Opt-RBC)  & \(5\delta\)/\(3\delta\) & \(8\Delta+4\delta\) & $O(n^2)$ & \yes \\
      \textbf{\name} (Bracha-RBC) & \(4\delta\) & \(5\Delta+2\delta\) & $O(n^2)$ & \yes \\
    \bottomrule
    \end{tabular}
    \begin{tablenotes}
    \item[\dag] No peer-reviewed versions are available.
    \item[\ddag] Our estimate, as the thesis does not specify timeouts in terms of \(\Delta\).
    \item[\S] With the view timer set to \(8\Delta\), the minimum needed for liveness; the paper prescribes \(9\Delta\).
    \end{tablenotes}
    \end{threeparttable}
    \end{center}
\end{table*}

\begin{table*}[t]
    \caption{Consensus protocols with practical, high-performance implementations including block dissemination}%
    \label{table:practical-comparison}
    \setlength\tabcolsep{12pt}
    \begin{center}
    \newcommand{\hd}[2]{\textbf{\makecell{#1\\#2}}}
    \newcommand{\hdiii}[3]{\textbf{\makecell{#1\\#2\\#3}}}
    \begin{threeparttable}
    \begin{tabular}{l c c c c c}
    \toprule
      & \hdiii{Good-case}{/optimistic}{commit latency}
      & \hdiii{Eventual}{worst-case}{view duration}
      & \hd{View complexity}{(bits)}
      & \hd{Optimistic}{block time}
      & \hd{Signature-}{free} \\ [0.5ex]
    \midrule
      Autobahn~\cite{giridharanAutobahnSeamlessHigh2024}                           & \(8\delta/6\delta\)\tnote{*} & \(10\Delta+\delta\) & \(O(n^3)\)& \(3\delta\) & \no \\
      DispersedSimplex~\cite{shoupSingSongSimplex2024}                           & \(3\delta\) & \(3\Delta+\delta\) & \(O(n^2)\)& \(2\delta\)& \no \\
      Sailfish~\cite{shrestha2024sailfish}                           & \(5.6\delta\)\tnote{\dag} & \(4\Delta+2\delta\)\tnote{\ddag} & \(O(n^3)\) & \(2\delta\) & \no \\
      Sailfish++~\cite{shrestha2025optimistic} (Bracha-RBC)                           & \(7.9\delta\)\tnote{\dag} & \(5\Delta+4\delta\)\tnote{\S} & $O(n^3)$ & \(3\delta\) & \yes  \\
      Sailfish++~\cite{shrestha2025optimistic} (Opt-RBC)                           & \(10.2\delta/5.6\delta\)\tnote{\dag} & \(8\Delta+5\delta\)\tnote{\ddag} & $O(n^3)$ & \(2\delta\) & \yes \\
    \midrule
      Dispersed-\name (Opt-RBC) & \(5\delta/3\delta\) & \(8\Delta+4\delta\) & $O(n^2)$ & \(1\delta\) & \yes \\
      Dispersed-\name (Bracha-RBC) & \(4\delta\) & \(5\Delta+2\delta\) & $O(n^2)$ & \(1\delta\) & \yes \\
      Mempool-\name (Opt-RBC) & \(7\delta/5\delta\) & \(8\Delta+4\delta\) & $O(n^2)$ & \(2\delta\) & \yes \\
      Mempool-\name (Bracha-RBC) & \(6\delta\) & \(5\Delta+2\delta\) & $O(n^2)$ & \(3\delta\) & \yes \\
    \bottomrule
    \end{tabular}
    \begin{tablenotes}
    \item[*] \(6\delta/4\delta\) with optimistic tips.
    \item[\dag] Average over the vertices of a DAG round: the leader vertex commits in \(1\text{RBC}+1\delta\), \(n-f-1\) vertices in \(2\text{RBC}+1\delta\), and the remaining \(f\) vertices in \(3\text{RBC}+1\delta\).
    \item[\ddag] Our estimate, as the paper does not prescribe timeouts for this RBC instantiation.
    \item[\S] Using the paper's prescribed \(5\Delta\) timeout, which applies to views that follow a correct leader; views entered via timeout certificate use \(8\Delta\), giving \(8\Delta+4\delta\).
    \end{tablenotes}
    \end{threeparttable}

    \end{center}
\end{table*}

We therefore focus on signature-free leader-based BFT protocols and ask whether they can reach the same throughput, latency, and scalability as their signature-based counterparts.
Unfortunately, this is unclear.
First, to our knowledge, no implementations exist, so performance in practice is an open question.
Second, even theoretically, the only two peer-reviewed protocols are IT-HS~\cite{abrahamInformationTheoreticHotStuff2021} and TetraBFT~\cite{yuTetraBFTReducingLatency2024b}, which have theoretical best-case latencies of 6 and 5 message delays, respectively, far from the optimal 3 message delays achieved by signature-based protocols.
In non-peer-reviewed work, Abraham et al.~\cite{abrahamForgetITOptimalGoodCase2026} propose Forget-IT, a protocol with a claimed best-case commit latency of 3 message delays; however, Forget-IT is a single-shot protocol that relies on very intricate rules and sophisticated correctness arguments, making it harder to understand and non-trivial to apply optimizations needed in practice, such as pipelining~\cite{yin2019hotstuff} and speculation~\cite{doidge2024moonshot}.

To answer this question, we present the \name protocol, a new partially synchronous, signature-free, leader-based BFT consensus protocol with optimal resilience of \(n>3f\).
At a high level, \name adapts the Simplex~\cite{chanSimplexConsensusSimple2023} protocol to the signature-free setting by replacing quorum certificates with convergence mechanisms that make them unnecessary.
In Simplex's parlance, \name uses Reliable Broadcast (RBC) to notarize blocks and then finalizes them in one more message delay, achieving a good-case commit latency of \(1\text{RBC}+1\delta\), and it uses Reliable Notification (RN), a novel abstraction introduced in this paper, to disable consensus rounds that have timed out. 
Both primitives guarantee that all correct parties converge on the same outcome, and thus parties do not need certificates to prove outcomes to each other.

The first strength of \name is its latency.
Paired with Bracha's RBC protocol~\cite{bracha1987asynchronous}, which we call Bracha-RBC, \name achieves a good-case latency of 4 message delays;
paired with the optimistic reliable-broadcast protocol of Shrestha et al.~\cite{shresthaOptimisticSignatureFreeReliable2025a,shrestha2025optimistic}\footnote{The original version~\cite{shresthaOptimisticSignatureFreeReliable2025a} contains an error in the RBC protocol which is fixed in the arXiv version~\cite[version v4]{shrestha2025optimistic}.}, which we call Opt-RBC, \name achieves a latency of only 3 message delays if at least approximately \(\frac{5n}{6}\) parties\footnote{The exact number is \(\ceil{\frac{n}{2}}+f\); see~\Cref{sec: rbc-impl-latency}.} behave correctly (when \(n\approx 3f\)).
Note that three message delays is the lower bound for tolerating more than one failure~\cite{kuznetsovRevisitingOptimalResilience2021}\footnote{Kuznetsov et al.~\cite{kuznetsovRevisitingOptimalResilience2021} show that any protocol with a 2-message-delay fast path tolerating \(t\) crash failures must have Byzantine resilience of \(n\geq \max(3f+1,3f+2t-1)\); this implies that, in an optimally-resilient protocol (\(f=\floor{\frac{n-1}{3}}\)), a 2-message-delay fast path cannot tolerate more than 1 failure.}, even for signature-based protocols, and \name achieves it at the cost of slightly bigger optimistic-path quorums (roughly \(\frac{5n}{6}\)) than signature-based protocols (roughly \(\frac{2n}{3}\)).
A latency comparison with other leader-based BFT protocols appears in~\Cref{table:leader-based-comparison}.

Second, like Simplex, \name follows simple rules that make it easily amenable to practical optimizations such as pipelining and speculative proposals.
These techniques allow \name to maximize network utilization by running different consensus stages in parallel, and they allow reducing \name's optimistic block time, i.e.\ the interval between two consecutive blocks in the good case, to just one message delay.
The latter is crucial to reducing the end-to-end latency of transactions arriving between two blocks. 
To demonstrate the flexibility afforded by \name's simple rules, we present four variants of \name, each corresponding to a choice of RBC algorithm (Bracha-RBC or Opt-RBC) and to whether to use a two-stage RBC/commit pipeline or to also speculatively pipeline proposals. 


To determine whether \name can reach performance competitive with post-quantum-vulnerable protocols in practice, we develop two implementations of \name: Mempool-\name and Dispersed-\name.
Mempool-\name implements the base protocol, without speculative proposals, and uses a signature-free variant of the Autobahn~\cite{giridharanAutobahnSeamlessHigh2024} mempool to disseminate blocks.
Dispersed-\name implements the speculative variant of \name, but does not use a mempool; instead, leaders disseminate their blocks using an erasure-coded RBC protocol using well-known techniques proposed by Cachin and Tessaro for AVID protocols~\cite{cachin2005asynchronous}.
Both protocols can be configured with Bracha-RBC or Opt-RBC (both erasure coded in the case of Dispersed-\name).
Together, these implementations cover two practical approaches to data dissemination in high-throughput, leader-based BFT consensus deployed in production: shared-mempool-style dissemination, as used in systems such as Aptos Quorum Store~\cite{quorum-store}, and leader-driven erasure-coded block propagation, as used in systems such as Monad RaptorCast~\cite{monad} and Solana Turbine~\cite{turbine}.
\Cref{table:practical-comparison} compares the theoretical characteristics of Mempool-\name and Dispersed-\name with those of other protocols that have practical, high-performance implementations.

We evaluate Mempool-\name and Dispersed-\name on a real-world testbed of 50 nodes spread across 5 regions over 3 continents, and we compare their throughput and latency against their post-quantum-vulnerable counterparts Autobahn and DispersedSimplex~\cite{shoupSingSongSimplex2024}.
Both Autobahn and DispersedSimplex achieve state-of-the-art performance in their respective categories.

The experimental results show that both Mempool-\name and Dispersed-\name closely match or surpass their post-quantum-vulnerable counterparts.
Mempool-\name achieves better latency and throughput than Autobahn across the entire latency-throughput spectrum, peaking at 170,000 transactions per second (92 MB/s) at roughly 0.5 seconds of latency.
Dispersed-\name achieves the same peak throughput as DispersedSimplex (roughly 120,000 transactions per second, or 61 MB/s), and its latency is at most 9 percent higher in the pre-saturation regime.

The results show that we can answer our motivating question affirmatively: signature-free, leader-based BFT protocols can be simple enough to implement efficiently and closely match, in practice, the throughput and latency of the best post-quantum-vulnerable leader-based BFT protocols.
Thus, post-quantum security for high-performance BFT protocols does not require expensive post-quantum secure signatures on the critical path.

\paragraph{Roadmap.}
We first define the model, total-order broadcast, and the Reliable Broadcast and Reliable Notification abstractions that \name uses in~\Cref{sec: preliminaries}.
We then present the \name protocol, including its optimistic and speculative variants, in~\Cref{sec: protocol}, and analyze its safety, liveness, latency, and block time in~\Cref{sec: security}.
\Cref{sec: evaluation} describes the Mempool-\name and Dispersed-\name implementations and evaluates them on a geo-distributed testbed.
Finally, \Cref{sec: related work} discusses related work and \Cref{sec: conclusion} concludes.

%% file: preliminary.tex
\section{Preliminaries}%
\label{sec: preliminaries}

\subsection{Distributed-Computing Model}

We consider a set of \(n\) parties $\parties = \{\party{1},\ldots,\party{n}\}$ subject to $f<\frac{n}{3}$ Byzantine failures and executing a protocol in an eventually-synchronous message-passing system with reliable, authenticated channels.

Each party is either correct or Byzantine, and there are $f<\frac{n}{3}$ Byzantine parties.
Correct parties follow the protocol; Byzantine parties may behave arbitrarily.
Which parties are Byzantine is unknown to the protocol.
As is customary, we say that a set of \(n-f\) parties is a quorum, noting the quorum intersection property: every two quorums have a correct party in common.

Parties communicate by message passing and we assume that there is an authenticated, reliable channel between every pair of parties.
This means that every message sent by a correct party to a correct party is eventually received,
and that if a party \party{i} receives a message \(m\) on the authenticated channel established with a party \party{j}, then \party{i} can be sure that \party{j} sent \(m\).

We also assume that the system is eventually synchronous~\cite{dwork1988consensus}.
This means that message delay is arbitrary and unpredictable until an unknown point in time, called the \emph{Global Stabilization Time} (GST), after which each message sent by a correct party to a correct party is delivered in at most $\delta$ time (the actual message delay after GST); \(\delta\) is an unknown constant that is smaller than a publicly-known constant upper-bound $\Delta$.
Additionally, we assume that parties have local clocks which, after GST, have no clock drift.

\subsection{Problem Definitions}
The \name protocol variants formally implement Total-Order Broadcast, as defined below.
\begin{definition}[Total-Order Broadcast]%
    \label{def: tob}
    In a total-order broadcast (TOB) protocol, each party may submit (or tob-submit) blocks (of transactions) by calling tob\_submit$(b)$, where \(b\) is a block, and the TOB protocol may deliver (or tob-deliver) blocks by calling tob\_deliver$(b)$.
    A total-order broadcast protocol must satisfy the following properties:
    \begin{itemize}[noitemsep,leftmargin=*]
        \item[-] \textbf{Total Order.} If a correct party delivers block \(b\) before block \(b'\), then no correct party delivers \(b'\) before \(b\).
        \item[-] \textbf{Totality.} If a correct party delivers a block \(b\), then every correct party eventually delivers \(b\).
        \item[-] \textbf{Liveness.} If a correct party submits infinitely many blocks, then every correct party delivers infinitely many blocks submitted by that party.
    \end{itemize}
\end{definition}

Note that Liveness is intentionally weaker than some textbook liveness properties (e.g.\ the validity property in \cite[Chapter 6.1]{cachinIntroductionReliableSecure2011}); it rules out trivial protocols that do nothing or starve some correct parties, and allows abstracting over implementation details like whether to queue, re-try, or forward blocks to other parties.
The protocols we present satisfy stronger progress properties that depend on the protocol.
For example, protocol \(\shortname\) ensures that, eventually, if a correct leader proposes a block then it is delivered by all correct parties;
protocol \(\shortname^s\) ensures that, eventually, if a correct leader is preceded by a correct leader and it proposes a block, then the block is committed by all correct parties.

Our protocols are leader-based, and we define latency metrics as follows.

\begin{definition}[Good-case Commit Latency]
    The good-case commit latency of a leader-based protocol is $T$ if, when the leader is correct and $GST=0$, all correct parties commit within time $T$ of the leader's proposal.
\end{definition}

\begin{definition}[Optimistic Commit Latency]
    The optimistic good-case commit latency of a leader-based protocol is $T$ if, when the leader is correct, $GST=0$ and at least $n_o$ non-leader parties are correct, all correct parties commit within time $T$ of the leader's proposal.
\end{definition}
In this paper, \(n_o = \ceil{\frac{n}{2}}+f\), the same as in~\cite{shrestha2025optimistic}.
Equivalently, when the leader is correct, the optimistic path tolerates at most \(f_o = \floor{n/2}-f\) Byzantine non-leaders;
for example, when \(f\approx n/3\), we have \(f_o\approx n/6\).

\begin{definition}[Good-Case Block Time]
    The good-case block time of a leader-based protocol is the worst-case time interval between the proposals of two consecutive correct leaders after GST.
\end{definition}

\begin{definition}[Eventual Worst-Case View Duration]%
    \label{def: view-duration}
    The view duration of a protocol is $T$ if, in every execution and for every post-$GST$ view $V$, $t_2-t_1 \leq T$, where $t_1$ denotes the earliest time at which every correct party has entered view $V$ or higher, and $t_2$ denotes the earliest time at which every correct party has entered a view strictly greater than~$V$.
\end{definition}

We also make use of two abstractions, Reliable Broadcast and Reliable Notification, in order to give modular protocol descriptions.
The Reliable Notification problem is novel and may be of more general interest.

\begin{definition}[Reliable Broadcast~\cite{bracha1987asynchronous}]
    \label{def: reliable-broadcast}
    In a reliable broadcast (RBC) protocol, a party \(p\) may broadcast (or rb-broadcast) a message $m=\tuple{p,r,d}$, for some identifier \(r\) (in \name, a round number) and some payload \(d\), by calling broadcast$(m)$; the RBC protocol may deliver (or rb-deliver) messages of the form $\tuple{p,r,d}$, where \(p\) is a party, \(r\) an identifier, and \(d\) the payload, by calling deliver$(m)$.
    A reliable broadcast protocol must satisfy the following properties:
    \begin{itemize}[noitemsep,leftmargin=*]
        \item[-] \textbf{Uniqueness.} If a correct party delivers a message \(m=\tuple{p,r,d}\), then it does not deliver any other message \(m'=\tuple{p,r,d'}\) with the same sender and same identifier and does not deliver \(m\) again.
        \item[-] \textbf{Validity.} If a correct party \(p\) broadcasts a message $m=\tuple{p,r,d}$, for some \(r\) and \(d\), then every correct party eventually delivers \(m\).
        \item[-] \textbf{Totality.} If a correct party delivers a message \(m\), then every correct party eventually delivers \(m\).
    \end{itemize}
\end{definition}

\begin{definition}[Reliable Notification]%
    \label{def: reliable-notification}
    In a reliable notification (RN) protocol, each party may raise (or rn-raise) a flag \(e\) by calling rn\_raise\((e)\) and the protocol may confirm (or rn-confirm) \(e\) by calling rn\_confirm\((e)\).
    A reliable notification protocol must satisfy the following properties:
\begin{itemize}[noitemsep,leftmargin=*]
    \item[-] \textbf{Unanimity.} If every correct party raises \(e\), then every correct party eventually confirms \(e\).
    \item[-] \textbf{Totality.} If a correct party confirms \(e\), then every correct party eventually confirms \(e\).
    \item[-] \textbf{Validity.} If a correct party confirms \(e\), then at least $n-2f$ correct parties raised \(e\).
\end{itemize}
\end{definition}

\subsection{Latency of RBC Implementations}%
\label{sec: rbc-impl-latency}

For the purpose of analyzing latency, several parameters of RBC protocols are important.

\begin{definition}[Optimistic RBC delay \(d_o\)]
    \label{def: opt-rbc-delay}
    The optimistic delay of a reliable broadcast protocol, denoted by $d_o$, is the minimum number of message delays between the moment at which a correct sender broadcasts a message and the moment at which it is delivered by all correct parties.
\end{definition}

\begin{definition}[Optimistic resilience \(f_o\)]
    \(f_o\) is the maximum number of Byzantine parties such that the protocol delivers to correct parties a correct sender's message in the optimistic delay \(d_o\).
\end{definition}

\begin{definition}[Worst-case RBC delay \(d_s\)]%
    \label{def: bad-rbc-delay}
    The worst-case delay of a reliable broadcast protocol, denoted by $d_s$, is the maximum number of message delays between the moment at which a correct sender broadcasts a value and the moment at which it is delivered by all correct parties.
\end{definition}

\begin{definition}[RBC totality delay \(d_t\)]
    The totality delay of a reliable broadcast protocol, denoted by $d_t$, is the maximum number of message delays between the first moment at which a correct party delivers a value and the moment at which it is delivered by all other correct parties. 
\end{definition}

In this work, we make use of two reliable broadcast implementations: Bracha's broadcast protocol~\cite{bracha1987asynchronous}, which we call Bracha-RBC, and the optimistic RBC protocol of Shrestha et al.~\cite{shrestha2025optimistic}, which we call Opt-RBC.
Both implement Reliable Broadcast assuming \(n>3f\), and their latency characteristics appear in~\Cref{tab:broadcast_comparison}.



\begin{table}[h]
    \centering
    \caption{Latency Parameters of the RBC Protocols.}%
    \label{tab:broadcast_comparison}
    \newcommand{\hd}[2]{\textbf{\makecell{#1\\#2}}}
    \scriptsize
    \setlength{\tabcolsep}{2pt}
    \begin{threeparttable}
    \begin{tabular}{@{}l c c c c@{}}
    \toprule
        \hd{Broadcast}{protocol}
      & \hd{Opt.}{resilience ($f_o$)}
      & \hd{Opt.}{delay ($d_o$)}
      & \hd{Worst}{delay ($d_s$)}
      & \hd{Totality}{delay ($d_t$)} \\ [0.5ex]
    \midrule
      Bracha-RBC~\cite{bracha1987asynchronous}     & $f_o = f$                       & 3 & 3 & 2 \\
      Opt-RBC~\cite{shrestha2025optimistic} & $f_o = \floor{n/2} - f$\tnote{*} & 2 & 4 & 4 \\
    \bottomrule
    \end{tabular}%
    \begin{tablenotes}
    \scriptsize
    \item[*] \(f_o\approx n/6\) when \(f\approx n/3\).
    \end{tablenotes}
    \end{threeparttable}
\end{table}

\subsection{Erasure-coded RBC for Long Messages}

Following standard techniques from prior work of Cachin and Tessaro~\cite{cachin2005asynchronous} and Shrestha et al.~\cite{shrestha2025optimistic}, both Bracha-RBC and Opt-RBC can be made communication-efficient for long messages, without increasing latency, using Reed-Solomon erasure coding. 
For an \(L\)-bit input, the resulting protocols have communication complexity \(O(nL+\kappa n^2\log n)\) per broadcast, where \(\kappa\) denotes the hash output length, and they have balanced communication, meaning that all parties send and receive approximately the same number of bits.

In consensus protocols, having leaders use erasure-coded RBC to disseminate their blocks avoids the leader-bottleneck problem~\cite{yang2022dispersedledger,danezis2022narwhal} and, compared to using a separate mempool layer, does not increase the good-case commit latency.
The flip side is that total communication is inversely proportional to the rate of the erasure code used, and that encoding and decoding require CPU time.
Both erasure-coded RBC protocols use Reed-Solomon codes parameterized to recover the data from \(\ceil{(n-f+1)/2}\) out of \(n\) fragments (this is \(f+1\) when \(n=3f+1\)); with \(n\approx 3f\), this gives a rate of approximately \(1/3\), i.e., a \(3\times\) communication overhead due to erasure coding.

It is also worth noting that the techniques of Cachin and Tessaro and Shrestha et al. rely only on authenticated channels and cryptographic hash functions like SHA256, which are post-quantum-secure.

%% file: protocol.tex
\section{The \name Protocol}%
\label{sec: protocol}

\subsection{High-Level Structure}%
\label{sec:high-level-protocol}

\name implements total-order broadcast as defined in~\Cref{def: tob}.
In a nutshell, parties submit blocks that \name must deliver in the same order at all parties; moreover, \name must ensure that if a correct party keeps submitting blocks, then eventually some block submitted by that party is delivered by all correct parties.

To order submitted blocks, \name executes an infinite sequence of rounds $1, 2, \dots$.
In each round \(r\), a predetermined leader uses reliable broadcast to try to assign a block \(b\) and a parent \(r'\) to round \(r\), where \(r'\) is a round \(r' < r\) or \(r' =0\), and then parties try to agree on one of two mutually exclusive outcomes: to commit round \(r\) (only if the leader's RBC succeeded) or to disable round~\(r\).

The parent relation forms a tree of rounds, and \name ensures that committed rounds are all in the same chain, starting at~\(0\), by requiring that (a) the parent of a round be a round with an assigned parent and block (or round \(0\)) and (b) that the parent relation only skip rounds that are confirmed disabled.
Formally, if \(r'\) is the parent of \(r\), then \(r'\) must be safe, where \(r'\) is safe, recursively, when:
\begin{enumerate}
    \item \(r' = 0\) or \(r'\) has itself an assigned safe parent and block, and
    \item all rounds strictly between \(r'\) and \(r\) are disabled.
\end{enumerate}

Finally, because all committed rounds are in the same chain, parties can deliver all the blocks, in order, appearing in the chain starting at \(0\) and leading up to the largest committed round.
An example appears in~\Cref{fig:example-1}.
Note that the high-level structure of \name, as just described, is similar to that of the Simplex protocol~\cite{chanSimplexConsensusSimple2023}.

\input{figure_example_1}

\subsection{The Concrete \name Protocol}

\Cref{fig:lionfish-protocol-prose} presents the concrete \name protocol in detail.
It specifies the local state variables of a party and the atomic steps that the party may take
(whenever several steps are enabled, the party may execute any of them).

Each round, parties restart their round timer (set for \(\Delta_{to}=(d_s+d_t)\Delta\) time) and a predetermined leader for the round proposes a block and a safe parent round (Steps \textbf{Enter round} and \textbf{Propose}).
\name uses a pre-determined, fair leader schedule, so all parties know in advance who the leader of a round is and every party leads infinitely many rounds.

Once parties have rb-delivered the proposal for their current round and they have determined that the parent is safe, and the round timer has not fired, they vote to commit their current round (Step \textbf{Vote}; note that including a block digest or parent is unnecessary because they are uniquely determined by RBC).
However, should parties not rb-deliver a proposal for their current round before the timer expires, they rn-raise their timeout flags using the reliable notification protocol depicted in~\Cref{fig:tbn} (Step \textbf{Timeout}), and forgo voting to commit it; then, should they rn-confirm the round as timed out, they record the round as disabled (Step \textbf{Disable}). 
Finally, parties finish their current round and enter the next round (Step \textbf{Advance round}) as soon as they have a safe proposal and have either voted to commit or raised their timeout flag, or as soon as the current round is disabled.

Importantly, note that voting to commit a round \(r\) and raising the round-\(r\) timeout flag are mutually exclusive.
By quorum intersection, this guarantees that no round is ever both committed and disabled.
Note, however, that it is possible for a party to rb-deliver the leader's proposal for a round \(r\) after enough parties have raised the round-\(r\) timeout flag for it to be eventually marked as disabled.
Thus, a disabled round can still be safe (e.g., round 3 in~\Cref{fig:example-1}), and thus serve as a parent of a future round (e.g., round 5 in~\Cref{fig:example-1}).

The rest of the protocol machinery is agnostic to the current round: parties keep participating in the Reliable Broadcast and Reliable Notification subprotocols for all rounds, and they keep tracking what proposals they rb-deliver in which round (Step \textbf{RB-deliver}), which proposals they know are safe (Step \textbf{Mark safe}), and which rounds they have disabled (Step \textbf{Disable}).
This is important for liveness, as advancing rounds (Step \textbf{Advance round}) and making a proposal (Step \textbf{Propose}) depend on being able to assign previous rounds a safe proposal or to disable them, which in turn depend on Reliable Broadcast or Reliable Notification making progress for those rounds.

Finally, after parties have received \(n-f\) votes to commit for any round \(r\) they know is safe, they deliver the round-\(r\) proposal and the ancestors that they have not delivered yet (Step \textbf{Deliver}).

Note that the protocol features a two-stage pipeline: in the good case, the proposal phase for a round \(r+1\), in which the leader rb-broadcasts its proposal, starts as soon as the proposal phase of round \(r\) finishes, and the commit phase of round \(r\), in which parties vote to commit the round, executes in parallel with the proposal phase of round \(r+1\). 
We will later present another version of the protocol (\Cref{fig:lionfish-protocol-speculative-pipelining}) that uses speculation to achieve more aggressive pipelining.

\subsection{Correctness Sketch}

As explained in~\Cref{sec:high-level-protocol}, \name ensures that every round with an assigned parent is safe, which implies that all committed blocks are in the same chain of parents and therefore establishes the Total Order property of TOB.

Two key design elements guarantee liveness.
First, in each round, note that either all correct parties deliver the leader's proposal or they all raise their round-\(r\) timeout flag (or both).
Thus, every round is eventually resolved as having a safe proposal or as disabled, and correct parties never get stuck in a round.

Second, setting the round timer to a duration of \(\Delta_{to}=(d_s+d_t)\Delta\) time ensures that, after GST, every round that has a correct leader commits: in the worst case, the leader enters the round late by at most \(\max(d_t,2)=d_t\) (the maximum of the totality delay of the reliable broadcast used and the totality delay of reliable notification), and then its proposal is delivered in at most the worst-case delay \(d_s\) of the reliable broadcast used; thus, if \(\Delta_{to}=(d_s+d_t)\Delta\), all correct parties vote to commit the round before they time out.
For Bracha-RBC, we get \(\Delta_{to}=5\Delta\), and for Opt-RBC we get \(\Delta_{to}=8\Delta\).
With the fact that every party leads infinitely many rounds, we obtain the liveness property of TOB.

Finally, \name ensures Totality because, after GST, all correct parties commit all rounds led by correct leaders and deliver all their ancestors, and a correct party that missed delivering a block therefore delivers it upon committing the next round led by a correct leader.

\subsection{Optimistic and Speculative Variants}

We propose four variants of the protocol: protocols \(\shortname\), \(\shortname_{opt}\), \(\shortname^s\), and \(\shortname_{opt}^s\); each of them corresponds to a choice of RBC algorithm (Bracha-RBC or Opt-RBC, indicated by the \(opt\) subscript) and to a proposal rule (normal proposal or speculative proposal, indicated by the \(s\) superscript).

Protocols~\(\shortname\) and \(\shortname_{opt}\) are both instances of the protocol described in~\Cref{fig:lionfish-protocol-prose} with different RBC implementations.
Protocol \(\shortname\) uses Bracha-RBC~\cite{bracha1987asynchronous} and achieves a good-case latency of \(4\) message delays (\(3\) message delays for Bracha-RBC plus one message delay to commit).
In the good case, each successive leader proposes as soon as it delivers the previous round's RBC, and thus the good-case block time of \shortname is the worst-case delay of Bracha-RBC, i.e. \(3\delta\).

Protocol \(\shortname_{opt}\) uses Opt-RBC~\cite{shrestha2025optimistic} and achieves a good-case latency of \(5\) message delays; however, if at least \(n_o=\ceil{\frac{n}{2}}+f\) parties are correct (roughly 83 percent, or \(\frac{5n}{6}\), when \(n\approx3f\)), protocol \(\shortname_{opt}\) achieves an optimistic latency of only \(3\) message delays. 
The optimistic-case block time of \(\shortname_{opt}\) is equal to the optimistic latency of Opt-RBC, i.e. \(2\delta\).

Protocols \(\shortname^s\) and \(\shortname_{opt}^s\) are variants of \(\shortname\) and \(\shortname_{opt}\), respectively, where leaders can speculatively propose before they determine that the previous round's proposal is safe.
Both follow the description in~\Cref{fig:lionfish-protocol-speculative-pipelining}, with RBC instantiated with Bracha-RBC and with Opt-RBC, respectively.
In~\Cref{fig:lionfish-protocol-speculative-pipelining}, the differences from the non-speculative variants (\Cref{fig:lionfish-protocol-prose}) are highlighted in {\color{blue}blue}.

The main difference compared to the non-speculative protocol is that we add the \textbf{Speculative propose} step, in which the party \(p_i\) makes a speculative proposal for a round \(r\) greater than its current round.
This step is enabled for a round \(r>\var{curr\_round}\) as soon as (a) \(r-k>0\) and \(p_i\) has marked round \(r-k\) safe, where \(k\) is the proposal-pipeline depth parameter of the protocol, and (b) \(p_i\) has received, but not necessarily rb-delivered, the proposal for round \(r-1\) from the leader of \(r-1\).
Party \(p_i\) then picks a block \(b\) and proposes it for round \(r\) with parent \(r-1\) by rb-broadcasting \(\tuple{p_i, r, \tuple{r-1,b}}\).
To avoid proposing multiple times in the same round, we also add a boolean variable \(\var{proposed}[r]\) tracking whether \(p_i\) has already proposed in round \(r\), and we do not propose (speculatively or otherwise) if this variable is already set.
Finally, to avoid incurring \(k\) timeouts in a row should a Byzantine leader cause \(k\) speculative proposals without completing its RBC, we modify the \textbf{Disable} step so that, upon rn-confirming a round \(r\) as timed out, a party whose own round-\(r\) timer expired immediately rn-raises the timeout flags of the rounds \(r'\) with \(\var{curr\_round} < r' \leq r+k-1\), recording each raise in a new flag \(\var{aborted}[r']\) and forgoing voting in those rounds (the \(\var{voted}\) and \(\var{timed\_out}\) variables accordingly become per-round maps).
All correct parties that timed out in round \(r\) raise these flags in parallel, so rounds \(r+1\) to \(r+k-1\) are rn-confirmed as timed out, and thus disabled, roughly one reliable-notification latency after round \(r\) is disabled, and Step \textbf{Advance round} then carries parties to round \(r+k\).

Note that the \textbf{Speculative propose} step creates a proposal pipeline of depth \(k\): it allows proposal RBCs for up to \(k\) consecutive rounds to be in flight at the same time, each (except the first) having started \(\delta\) time after the previous.
In the good case, Bracha-RBC takes \(3\delta\), and thus setting the pipeline depth to at least \(3\) allows \(\shortname^s\) to achieve a good-case block time of \(1\delta\).
Similarly, since Opt-RBC takes an optimistic \(2\delta\), setting the pipeline depth to at least \(2\) allows \(\shortname^s_{opt}\) to achieve an optimistic block time of \(1\delta\).
The flip side is that the two protocols require \(k\) correct leaders in a row, after GST, to guarantee that the last commits.
That is because a Byzantine leader in a round \(r\) can cause the next \(k-1\) leaders to speculatively propose and then fail to complete its round-\(r\) RBC, causing round \(r\) to time out and thus rounds \(r\) to \(r+k-1\) to be disabled.

Also note that votes to commit do not participate in the proposal pipeline: the \textbf{Vote} step still requires having determined the current round safe, which implies having rb-delivered the current round's proposal and all its ancestors.
Safety then rests, as in the non-speculative protocol, on the fact that a correct party never both votes to commit a round and raises its timeout flag: the \textbf{Vote} step is disabled by \(\var{timed\_out}\) and \(\var{aborted}\), the \textbf{Timeout} step is disabled by \(\var{voted}\), and the \textbf{Disable} step raises flags only for rounds strictly above \(\var{curr\_round}\), in which the party cannot have voted (parties vote only in their current round, and rounds are entered in increasing order) and in which \(\var{aborted}\) bars it from voting later.
By quorum intersection, no round is ever both committed and disabled, exactly as in the non-speculative protocol.
Finally, note that the \textbf{Disable} step only rn-raises in rounds up to \(r+k-1\) if the party really timed out in round \(r\).
This avoids a runaway chain reaction where we rn-raise in the next \(k\) rounds, which causes confirming those rounds as timed out, which in turn causes rn-raising in the \(k\) rounds after that, etc. and no rounds can ever commit anymore.

\input{figure_lionfish_protocol_prose}
\input{figure_lionfish_pipelined_protocol_prose}
\input{figure_btn}

%% file: figure_example_1.tex
\begin{figure}[t]
  \centering
  \begin{tikzpicture}[
      x=1.4cm,
      block/.style={draw, rounded corners=2pt, minimum width=0.95cm,
        minimum height=0.7cm, font=\small, fill=blue!8},
      committed/.style={block, fill=blue!15, very thick},
      disabled/.style={draw, dashed, rounded corners=2pt, minimum width=0.95cm,
        minimum height=0.7cm, font=\small\itshape, text=gray, fill=gray!8},
      parent/.style={-{Stealth[length=5pt]}, thick},
      rlabel/.style={font=\scriptsize\sffamily, gray},
    ]
    \foreach \r in {0,1,2,3,4,5} {
      \node[rlabel] at (\r, -0.7) {$r{=}\r$};
    }

    \node[block, fill=white, very thick] (b0) at (0,0) {$\bot$}; 
    \node[committed] (b1) at (1,0) {$v$};
    \node[disabled]  (b2) at (2,0) {};
    \node[disabled, text=black]  (b3) at (3,0) {$v'$};
    \node[disabled, text=black]  (b4) at (4,0) {$w$};
    \node[committed] (b5) at (5,0) {$v''$};

    \draw[parent] (b1) -- (b0);
    \draw[parent] (b3) to[bend left=28] (b1); 
    \draw[parent] (b5) to[bend left=28] (b3); 
    \draw[parent, dashed, thin] (b4) to[bend right=32] (b1); 
  \end{tikzpicture}
  \caption{An example \name execution. Solid blue boxes are committed rounds
($1,5$); dashed gray boxes are disabled rounds ($2,3,4$); $\bot$ is the genesis.
Arrows point from each round to its assigned parent, if it has one.
Round~$1$ is committed and has value \(v\) and parent \(0\).
Round~$2$ has no value or parent and is disabled.
Round~$3$ is disabled too, but it has an assigned value~$v'$ and parent round \(1\);
it skips the disabled round~$2$.
Round~$4$ is also disabled but has value~$w$ and parent round \(1\); it forks off of the
tob-delivered chain and will never be tob-delivered.
Round~$5$ has parent \(3\) even though \(3\) is disabled; that is fine since round \(3\) has an assigned value and parent.
Notice all parents are safe.
The tob-delivered total order is given by the chain of rounds~$1,3,5$, namely $v, v', v''$.}%
  \label{fig:example-1}
\end{figure}

%% file: figure_lionfish_protocol_prose.tex
\begin{figure*}[!ht]
\small
    \begin{boxedminipage}[t]{\textwidth}
    \textbf{State variables:}
    \begin{itemize}
        \item $\var{curr\_round}$: the current round, initialized to 1.
        \item $\var{submitted}$: the set of blocks tob-submitted for ordering, initially empty.
        \item $\var{delivered}$: the sequence of blocks tob-delivered so far, initially empty.
        \item $\var{proposal}[r]$: the proposal $\tuple{r', b}$ rb-delivered for round \(r\) from the leader of round $r$, or \(\bot\) if none was delivered so far; $r'$ is the parent round and $b$ is the block; initially, $\var{proposal}[0] = \tuple{\bot, \text{genesis}}$ and $\var{proposal}[r] = \bot$ for $r > 0$.
        \item $\var{safe}[r]$: a boolean map indicating whether round $r$ has an assigned proposal with a safe parent; $\var{safe}[0] = \True$ and $\var{safe}[r] = \False$ for $r > 0$. \(\var{safe}[r]\) implies \(\var{proposal}[r]\neq \bot\).
        \item $\var{disabled}[r]$: a boolean map indicating whether the round-$r$ timeout was rn-confirmed (so round $r$ is disabled), initially everywhere $\False$.
        \item $\var{committed}[r]$: a boolean map indicating whether the round $r$ was committed, initially \(\False\) everywhere.
        \item $\var{voted}$: boolean variable tracking whether the party has voted to commit the current round, reset to $\False$ each round.
        \item $\var{timed\_out}$: boolean variable tracking whether the party has raised its  timeout flag for the current round, reset to $\False$ each round.
    \end{itemize}

    \textbf{Helpers:}
    \begin{itemize}
        \item $\fn{SafeParent}(r, r')$:
            Return $\True$ when $0 \le r' < r$, $\,\var{safe}[r'] = \True$, and $\var{disabled}[r''] = \True$ for every round $r' < r'' < r$.
        \item $\fn{Log}(r)$:
            The sequence of blocks of the chain ending at round $r$, defined recursively: $\fn{Log}(0)$ is the empty sequence, and for $r > 0$ with $\var{proposal}[r] = \tuple{r', b}$ and \(\var{safe}[r]\), $\fn{Log}(r)$ is $\fn{Log}(r')$ followed by $b$.
    \end{itemize}

    \textbf{Protocol steps:}
    \begin{itemize}

        \item \textbf{Init.} Upon startup, enter round $1$.

        \item \textbf{Submit.} Upon a request to tob-submit a block $b$, add $b$ to $\var{submitted}$.

        \item \textbf{Enter round.} Upon entering a round $r$, set $\var{curr\_round}\gets r$, reset the round timer to fire in \(\Delta_{to}=(d_s+d_t)\Delta\) time, set $\var{voted}\gets\False$ and $\var{timed\_out}\gets\False$, and, if $\party{i}$ is the leader of round $r$ ($L_r = p_i$), propose.

        \item \textbf{Propose.}\label{step:lionfish-prose-propose} Upon proposing, determine the highest round $r$ such that $\fn{SafeParent}(\var{curr\_round}, r)$ and pick a block $b \in \var{submitted} \setminus \var{delivered}$ where \(b\notin \fn{Log}(r)\) (or $b = \bot$ if no such block exists), and rb-broadcast the message $\tuple{p_i, \var{curr\_round}, \tuple{r, b}}$.

        \item \textbf{RB-deliver.}\label{step:lionfish-prose-deliver} Upon \Call{rb\_deliver}{} $L_r$'s round-$r$ proposal $\tuple{L_r, r, \tuple{r', b}}$ for some round \(r\), set $\var{proposal}[r]\gets \tuple{r', b}$.

        \item \textbf{Mark safe.} Upon $\var{proposal}[r] = \tuple{r', b}$ and $\fn{SafeParent}(r, r')$ for some round $r$, set $\var{safe}[r]\gets\True$.

        \item \textbf{Vote.}\label{step:lionfish-prose-vote} Upon $\var{safe}[\var{curr\_round}] = \True$, $\var{timed\_out} = \False$, and $\var{voted} = \False$, send $\tuple{\Commit, \var{curr\_round}}$ to all parties and set $\var{voted}\gets\True$.

        \item \textbf{Timeout.} Upon the round timer expiring, if $\var{voted} = \False$ then trigger \Call{rn\_raise}{$\tuple{\text{timeout}, \var{curr\_round}}$} and set $\var{timed\_out}\gets\True$.

        \item \textbf{Disable.}\label{step:lionfish-prose-disable} Upon \Call{rn\_confirm}{$\tuple{\text{timeout},r}$} for some round $r$, set $\var{disabled}[r]\gets\True$.

        \item \textbf{Commit.} Upon receiving $\tuple{\Commit, r}$ from $n - f$ parties for some round \(r\), set \(\var{committed}[r]\gets \True\).

        \item \textbf{Deliver.} Upon \(\var{committed}[r]=\True\), for some round \(r\), and \(\var{safe}[r]=\True\), deliver $\var{proposal}[r]$ and its parents: tob-deliver and append to \(\var{delivered}\), in order, every non-\(\bot\) block of $\fn{Log}(r)$ not previously delivered.

        \item \textbf{Advance round.} Upon $\var{safe}[\var{curr\_round}] = \True$ and either \(\var{voted}=\True\) or \(\var{timed\_out}=\True\), or upon $\var{disabled}[\var{curr\_round}] = \True$, enter round $\var{curr\_round}+1$.
    \end{itemize}

    \end{boxedminipage}
    \caption{The \name protocol; code for party \party{i}. A reliable-notification implementation appears in~\Cref{fig:tbn}.}%
    \label{fig:lionfish-protocol-prose}
\end{figure*}

%% file: figure_lionfish_pipelined_protocol_prose.tex
\begin{figure*}[!ht]
\small
    \begin{boxedminipage}[t]{\textwidth}
    \textbf{Parameters:}
    \begin{itemize}
        \item {\color{blue}$\var{k}$: the propose pipeline depth; \(\var{k}=3\) if using Bracha-RBC and \(\var{k}=2\) if using Opt-RBC.}
    \end{itemize}
    \textbf{State variables:}
    \begin{itemize}
        \item $\var{curr\_round}$: the current round, initialized to 1.
        \item $\var{submitted}$: the set of blocks tob-submitted for ordering, initially empty.
        \item $\var{delivered}$: the sequence of blocks tob-delivered so far, initially empty.
        \item $\var{proposal}[r]$: the proposal $\tuple{r', b}$ rb-delivered for round \(r\) from the leader of round $r$, or \(\bot\) if none was delivered so far; $r'$ is the parent round and $b$ is the block; initially, $\var{proposal}[0] = \tuple{\bot, \text{genesis}}$ and $\var{proposal}[r] = \bot$ for $r > 0$.
        \item $\var{safe}[r]$: a boolean map indicating whether round $r$ has an assigned proposal with a safe parent; $\var{safe}[0] = \True$ and $\var{safe}[r] = \False$ for $r > 0$. \(\var{safe}[r]\) implies \(\var{proposal}[r]\neq \bot\).
        \item $\var{disabled}[r]$: a boolean map indicating whether the round-$r$ timeout was rn-confirmed (so round $r$ is disabled), initially everywhere $\False$.
        \item $\var{committed}[r]$: a boolean map indicating whether the round $r$ was committed, initially \(\False\) everywhere.
        \item {\color{blue}$\var{voted}[r]$: a boolean map tracking whether the party has voted to commit round $r$, initially everywhere $\False$.}
        \item {\color{blue}$\var{timed\_out}[r]$: a boolean map tracking whether the party's round-$r$ timer expired and it raised the round-$r$ timeout flag (Step \textbf{Timeout}), initially everywhere $\False$.}
        \item {\color{blue}$\var{aborted}[r]$: a boolean map tracking whether the party raised the round-$r$ timeout flag to abort speculation in round $r$, without waiting for its round-$r$ timer (Step \textbf{Disable}), initially everywhere $\False$.}
        \item {\color{blue}$\var{proposed}[r]$: a boolean map indicating whether this party has already rb-broadcast a proposal for round $r$, initially everywhere $\False$.}

    \end{itemize}

    \textbf{Helpers:}
    \begin{itemize}
        \item $\fn{SafeParent}(r, r')$:
            Return $\True$ when $0 \le r' < r$, $\,\var{safe}[r'] = \True$, and $\var{disabled}[r''] = \True$ for every round $r' < r'' < r$.
        \item $\fn{Log}(r)$:
            The sequence of blocks of the chain ending at round $r$, defined recursively: $\fn{Log}(0)$ is the empty sequence, and for $r > 0$ with $\var{proposal}[r] = \tuple{r', b}$ and \(\var{safe}[r]\), $\fn{Log}(r)$ is $\fn{Log}(r')$ followed by $b$.
    \end{itemize}

    \textbf{Protocol steps:}
    \begin{itemize}

        \item \textbf{Init.} Upon startup, enter round $1$.

        \item \textbf{Submit.} Upon a request to tob-submit a block $b$, add $b$ to $\var{submitted}$.

        \item \textbf{Enter round.} Upon entering a round $r$, set $\var{curr\_round}\gets r$, reset the round timer to fire in \(\Delta_{to}=(d_s+d_t)\Delta\) time, and, if $\party{i}$ is the leader of round $r$ ($L_r = p_i$), propose.

        \item \textbf{Propose.}\label{step:lionfish-pipelined-prose-propose} Upon propose, {\color{blue} if $\var{proposed}[\var{curr\_round}] = \False$, } determine the highest round $r$ such that $\fn{SafeParent}(\var{curr\_round}, r)$ and pick a block $b \in \var{submitted} \setminus \var{delivered}$ such that \(b\notin \fn{Log}(r)\) (or $b = \bot$ if no such block exists), and rb-broadcast the message $\tuple{p_i, \var{curr\_round}, \tuple{r, b}}$. {\color{blue} Set $\var{proposed}[\var{curr\_round}]\gets\True$.}

        \item {\color{blue}\textbf{Speculative propose.} Upon receiving $\tuple{L_{r-1}, r-1, \tuple{r', b'}}$ directly from $L_{r-1}$ (the first message of its round-$(r{-}1)$ RBC instance), for \(r> \var{curr\_round}\): if $\party{i}$ is the leader of round $r$ ($L_r=p_i$), $\var{proposed}[r]=\False$, \(r-k>0\), and \(\var{safe}[r-k]\), then rb-broadcast a round-$r$ proposal $\tuple{p_i, r, \tuple{r-1, b}}$ for some block $b \in \var{submitted}\setminus\var{delivered}$  (or $b=\bot$ if no such \(b\) exists), and set $\var{proposed}[r]\gets\True$.}

        \item \textbf{RB-deliver.}\label{step:lionfish-pipelined-prose-deliver} Upon \Call{rb\_deliver}{} $L_r$'s round-$r$ proposal $\tuple{L_r, r, \tuple{r', b}}$ for some round \(r\), set $\var{proposal}[r]\gets \tuple{r', b}$.

        \item \textbf{Mark safe.} Upon $\var{proposal}[r] = \tuple{r', b}$ and $\fn{SafeParent}(r, r')$ for some round $r$, set $\var{safe}[r]\gets\True$.

        \item \textbf{Vote.}\label{step:lionfish-pipelined-prose-vote} Upon $\var{safe}[\var{curr\_round}] = \True$, $\var{timed\_out}[\var{curr\_round}] = \False$, {\color{blue}$\var{aborted}[\var{curr\_round}] = \False$,} and $\var{voted}[\var{curr\_round}] = \False$, send $\tuple{\Commit, \var{curr\_round}}$ to all parties and set $\var{voted}[\var{curr\_round}]\gets\True$.

        \item \textbf{Timeout.} Upon the round timer expiring, if $\var{voted}[\var{curr\_round}] = \False$ then trigger \Call{rn\_raise}{$\tuple{\text{timeout}, \var{curr\_round}}$} and set $\var{timed\_out}[\var{curr\_round}]\gets\True$.

        \item \textbf{Disable.}\label{step:lionfish-pipelined-prose-disable} Upon \Call{rn\_confirm}{$\tuple{\text{timeout},r}$} for some round $r$, set \(\var{disabled}[r]\gets \True\). {\color{blue}Moreover, if $\var{timed\_out}[r] = \True$, then, for every round $r'$ with $\var{curr\_round} < r' \le r+k-1$ and $\var{aborted}[r'] = \False$, trigger \Call{rn\_raise}{$\tuple{\text{timeout}, r'}$} and set $\var{aborted}[r']\gets\True$.}

        \item \textbf{Commit.} Upon receiving $\tuple{\Commit, r}$ from $n - f$ parties for some round \(r\), set \(\var{committed}[r]\gets \True\).

        \item \textbf{Deliver.} Upon \(\var{committed}[r]=\True\), for some round \(r\), and \(\var{safe}[r]=\True\), deliver $\var{proposal}[r]$ and its parents: tob-deliver and append to \(\var{delivered}\), in order, every non-\(\bot\) block of $\fn{Log}(r)$ not previously delivered.

        \item \textbf{Advance round.} Upon $\var{safe}[\var{curr\_round}] = \True$ and either {\color{blue} \(\var{voted}[\var{curr\_round}]=\True\) or \(\var{timed\_out}[\var{curr\_round}]=\True\)  or \(\var{aborted}[\var{curr\_round}]=\True\)}, or upon $\var{disabled}[\var{curr\_round}] = \True$, enter round $\var{curr\_round}+1$.
    \end{itemize}

    \end{boxedminipage}
    \caption{The \name protocol with speculative pipelining; code for party \party{i}. The differences with~\Cref{fig:lionfish-protocol-prose} are highlighted in {\color{blue}blue}. A reliable-notification implementation appears in~\Cref{fig:tbn}.}%
    \label{fig:lionfish-protocol-speculative-pipelining}
\end{figure*}

%% file: figure_btn.tex
\begin{figure}[ht]
    \small
    \begin{boxedminipage}[t]{\columnwidth}
        \begin{enumerate}[leftmargin=*]
            \item \textbf{Vote.} Upon invoking \Call{rn\_raise}{$e$}, \party{i} broadcasts a $\sig{\Vote, e}$ message.

            \item \textbf{Accept.} Upon receiving $\sig{\Vote, e}$ messages from $n-f$ parties, \party{i} broadcasts a $\sig{\Accept, e}$ message.

            \item \textbf{Confirm.} Upon receiving $\sig{\Accept, e}$ messages from $2f+1$ parties, \party{i} confirms \(e\) by invoking \Call{rn\_confirm}{$e$}.

            \item \textbf{Cascade.} Upon receiving $\sig{\Accept, e}$ messages from $f+1$ parties, and not having sent an $\sig{\Accept, e}$ message, \party{i} broadcasts an $\sig{\Accept, e}$ message.
        \end{enumerate}
    \end{boxedminipage}
\caption{Reliable notification protocol.}%
\label{fig:tbn}
\end{figure}

%% file: security_analysis2.tex
\section{Security Analysis}%
\label{sec: security}

Throughout this section we reason about the state model of the \name protocol of~\Cref{fig:lionfish-protocol-prose}.
Some proofs and auxiliary lemmas are deferred to Appendix~\ref{app:security-proofs}, which also proves that the protocol of~\Cref{fig:tbn} implements reliable notification.

Recall that each round \(r\) has at most one assigned proposal: when a party \Call{rb\_deliver}{}s the round-\(r\) proposal \(\tuple{L_r, r, \tuple{r', b}}\), it sets \(\var{proposal}[r] \gets \tuple{r', b}\), where \(r'\) is the parent and \(b\) the block assigned to round \(r\); by the Totality property of reliable broadcast all correct parties that set \(\var{proposal}[r]\) agree on the same \(\tuple{r', b}\).
We say that round \(r\) is \emph{disabled} when its timeout is rn-confirmed (i.e., a party sets \(\var{disabled}[r] \gets \True\) upon \Call{rn\_confirm}{$\tuple{\text{timeout}, r}$}), and that a party \emph{commits round \(r\)} when it applies the \textbf{Commit} rule after receiving \(n-f\) \(\tuple{\Commit, r}\) messages.

\begin{definition}[Parent round]
    A round \(r'\) is the \emph{parent} of a round \(r\) if and only if \(\var{proposal}[r] = \tuple{r', b}\) for some block \(b\) and \(\var{safe}[r] = \True\); that is, \(r'\) is the safe parent assigned to round \(r\).
    The genesis round \(0\) has no parent.
\end{definition}

\begin{definition}[Ancestor round]
    A round \(r'\) is an \emph{ancestor} of a round \(r\) when any of the following conditions hold:
    \begin{itemize}
        \item \(r' = r\);
        \item \(r'\) is the parent of round \(r\);
        \item \(r'\) is the parent of a round \(r''\) and \(r''\) is an ancestor of round \(r\).
    \end{itemize}
\end{definition}

\subsection{Safety}

\begin{lemma}
    \label{lemma: commit-not-disabled}
    If a correct party commits a round~\(r\), then no correct party ever disables round~\(r\).
\end{lemma}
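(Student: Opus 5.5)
The plan is a quorum-intersection argument that uses the Validity property of reliable notification (\Cref{def: reliable-notification}). The key local invariant is that a correct party never both sends \(\tuple{\Commit, r}\) and rn-raises \(\tuple{\text{timeout}, r}\). I will prove this first, then combine the two counting facts.

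\textbf{Step 1 (mutual exclusion at a correct party).} In the protocol of~\Cref{fig:lionfish-protocol-prose}, a correct party sends \(\tuple{\Commit, r}\) only in the \textbf{Vote} step. That step fires while \(\var{curr\_round}=r\) and requires \(\var{timed\_out}=\False\); it then sets \(\var{voted}\gets\True\). The party rn-raises \(\tuple{\text{timeout}, r}\) only in the \textbf{Timeout} step. That step also fires while \(\var{curr\_round}=r\), requires \(\var{voted}=\False\), and then sets \(\var{timed\_out}\gets\True\).

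Both flags are reset only on entering a round. Rounds are entered in strictly increasing order, since \textbf{Advance round} moves from \(\var{curr\_round}\) to \(\var{curr\_round}+1\). Hence a party is in round \(r\) during a single contiguous interval, both actions for round \(r\) can happen only within it, and whichever happens first disables the other. Steps are atomic, so no interleaving lets both happen.

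\textbf{Step 2 (counting).} Suppose a correct party commits round \(r\). Then it received \(\tuple{\Commit, r}\) from \(n-f\) distinct parties, because channels are authenticated. At least \(n-2f\) of these are correct, and each of them sent a commit vote for \(r\).

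Now suppose, for contradiction, that some correct party disables \(r\), i.e.\ rn-confirms \(\tuple{\text{timeout}, r}\). By Validity of reliable notification, at least \(n-2f\) correct parties rn-raised \(\tuple{\text{timeout}, r}\). By Step 1, these two sets of correct parties are disjoint, so there are at least \(2(n-2f)=2n-4f\) correct parties. But there are at most \(n\) correct parties, so \(2n-4f\le n\), i.e.\ \(n\le 4f\). This does not contradict \(n>3f\), so bare Validity is too weak for this count.

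\textbf{Step 3 (the main obstacle: strengthening the count).} I expect the real difficulty to be that the stated Validity bound of \(n-2f\) does not suffice by itself. To close the gap, I will look inside the implementation in~\Cref{fig:tbn}.

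A confirm requires \(2f+1\) \(\Accept\) messages, so at least one correct party sent \(\Accept\). Consider the first correct party to send \(\Accept\). It could not have done so via \textbf{Cascade}, because that requires \(f+1\) \(\Accept\) messages and hence an earlier correct sender. So it sent \(\Accept\) after receiving \(\Vote\) messages from \(n-f\) parties, which means \(n-f\) parties rn-raised \(\tuple{\text{timeout}, r}\).

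That quorum of \(n-f\) raisers and the quorum of \(n-f\) commit voters intersect in a correct party, by the quorum intersection property. That correct party both voted to commit and raised the timeout flag for \(r\), contradicting Step 1.

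I will either state this fact (a confirm implies some quorum raised) as an auxiliary lemma about~\Cref{fig:tbn}, or derive it inline as above. This inner-implementation argument is the step to check carefully.
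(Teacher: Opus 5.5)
Your final argument (Step 1 together with Step 3) is correct, but the detour into the implementation of reliable notification is unnecessary: the claim in Step 2 that bare Validity is too weak is mistaken. Your count loses slack in two places at once. You discard the possibly-Byzantine commit senders, and you still bound the number of correct parties by $n$; those two worst cases cannot occur together.

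The paper instead compares two sets:
\begin{itemize}
\item $A$, the $n-f$ parties from which the committing party received $\tuple{\Commit, r}$, correct or not;
\item $B$, the at least $n-2f$ \emph{correct} parties that rn-raised $\tuple{\text{timeout}, r}$, which Validity guarantees.
\end{itemize}
Since $|A|+|B|\ge (n-f)+(n-2f)=n+(n-3f)>n$, some party lies in both sets. That party is correct because it lies in $B$, which contradicts your Step 1.

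Your own count also closes once you are consistent. Let $b\le f$ be the actual number of Byzantine parties. Then at least $n-f-b$ correct parties voted, and there are only $n-b$ correct parties. Disjointness gives $(n-f-b)+(n-2f)\le n-b$, that is, $n\le 3f$, a contradiction.

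The paper's route buys modularity. The lemma holds for \emph{any} protocol meeting the Reliable Notification specification, which is the point of introducing that abstraction as a reusable primitive.

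Your Step 3 essentially re-derives the paper's proof of RN Validity. It stops at the stronger intermediate fact that a full quorum of $n-f$ parties sent $\Vote$ messages, and that lets you apply symmetric quorum intersection. This is sound for the protocol as instantiated with the reliable-notification figure. However, it ties the safety argument to that particular implementation for no gain.

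Your Step 1 is a more careful version of the paper's mutual-exclusion argument, and it is worth keeping. You explicitly note that rounds are entered once, in increasing order, so the $\var{voted}$ and $\var{timed\_out}$ flags for round $r$ are reset only at its entry.
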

\begin{proof}
    If a correct party \party{i} commits round \(r\), then \party{i} must have received \(n-f\) \(\tuple{\Commit, r}\) messages.
    Voting to commit round \(r\) and raising the round-\(r\) timeout flag are mutually exclusive for a correct party: by the \textbf{Vote} rule, a correct party sends \(\tuple{\Commit, r}\) only while \(\var{timed\_out} = \False\), and by the \textbf{Timeout} rule it triggers \Call{rn\_raise}{$\tuple{\text{timeout}, r}$} only while \(\var{voted} = \False\) (after which it sets \(\var{timed\_out} = \True\)). Hence, in round \(r\), a correct party that has already raised its timeout flag will not subsequently vote (the \textbf{Vote} guard fails), and one that has already voted will not subsequently raise it (the \textbf{Timeout} guard fails); so it never does both.
    Now suppose toward a contradiction that some correct party disables round \(r\), i.e., \Call{rn\_confirm}{}s the round-\(r\) timeout.
    By the Validity property of reliable notification (\Cref{def: reliable-notification}), at least \(n-2f\) correct parties called \Call{rn\_raise}{$\tuple{\text{timeout}, r}$}.
    Since \((n-f)+(n-2f) = n+(n-3f) > n\), one of these correct parties also sent a \(\tuple{\Commit, r}\) message to \party{i}, and thus both voted to commit round \(r\) and raised its round-\(r\) timeout flag.
    But we have previously established that no correct party does both — a contradiction.
\end{proof}

\begin{lemma}
    \label{lemma: chain}
    If a correct party commits a round \(r\), then the genesis round \(0\) is an ancestor of it.
\end{lemma}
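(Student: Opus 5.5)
We read the statement as a claim about the local state of a correct party at the moment it \emph{delivers} (or, more generally, whenever it has \(\var{safe}[r]=\True\)). Committing alone only records \(\var{committed}[r]\gets\True\); the ancestor relation is defined through \(\var{safe}\). So the plan is to prove an auxiliary invariant: for every correct party and every round \(r\) with \(\var{safe}[r]=\True\), round \(0\) is an ancestor of \(r\). Before that, we would argue that a committed round \(r\) indeed becomes safe at a correct party. Some correct party among the \(n-f\) senders of \(\tuple{\Commit,r}\) voted, and the \textbf{Vote} guard requires \(\var{safe}[r]=\True\) at that party. The \textbf{Deliver} rule also requires \(\var{safe}[r]\) at the committing party itself. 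The ancestry claim is then applied there.

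The invariant is proved by strong induction on \(r\), or equivalently by induction on the order in which \(\var{safe}\) entries are set, since entries are only ever set from \(\False\) to \(\True\). For \(r=0\), round \(0\) is an ancestor of itself by the first clause of the ancestor definition. For \(r>0\), \(\var{safe}[r]\) is set only by \textbf{Mark safe}, which requires \(\var{proposal}[r]=\tuple{r',b}\) and \(\fn{SafeParent}(r,r')\). That predicate gives \(0\le r'<r\) and \(\var{safe}[r']=\True\) at that time. Hence \(r'\) is, by the definition, the parent of \(r\). By the induction hypothesis applied to \(r'<r\), round \(0\) is an ancestor of \(r'\). We then conclude via the ancestor definition: round \(0\) is an ancestor of \(r'\), and \(r'\) is the parent of \(r\).

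The one subtlety, and the step we expect to need care, is that the recursive clause of the ancestor definition is phrased ``\(r'\) is the parent of \(r''\) and \(r''\) is an ancestor of \(r\)''. This composes from the top of the chain down, whereas our induction composes from the bottom up. We would therefore first prove a small transitivity lemma: if \(a\) is an ancestor of \(c\) and \(c\) is the parent of \(d\), then \(a\) is an ancestor of \(d\). This goes by induction on the derivation of ``\(a\) is an ancestor of \(c\)''. In the base case \(a=c\), \(a\) is the parent of \(d\). In the recursive case, \(a\) is the parent of some \(c'\) that is an ancestor of \(c\); the inner hypothesis makes \(c'\) an ancestor of \(d\), and the clause then gives that \(a\) is an ancestor of \(d\).

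We also need the parent relation to be well defined across the induction. Since \(\var{proposal}[r]\) is written once, with the value fixed by RBC Uniqueness, and \(\var{safe}\) never reverts, the parent of \(r\) is stable once assigned. Consequently the facts established earlier in the induction remain true when they are later invoked.
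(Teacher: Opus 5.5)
Your proof is correct, but it is organized differently from the paper's. The paper argues top-down by an extremal choice. It takes the smallest positive ancestor \(r'\) of \(r\) and observes that \(r'\) has a parent \(s\), since it was marked safe by some correct party. It excludes \(s>0\) by minimality: such an \(s\) would be a smaller positive ancestor, because \(\fn{SafeParent}\) forces \(s<r'\). It excludes \(s<0\) by \(\fn{SafeParent}\), and concludes \(s=0\). This composes ``\(s\) is the parent of \(r'\)'' with ``\(r'\) is an ancestor of \(r\)'', which is exactly the direction in which the recursive clause of the ancestor definition is written, so no transitivity lemma is needed.

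You instead prove, by bottom-up strong induction on the round number, a stronger invariant: every round with \(\var{safe}[r]=\True\) at a correct party has \(0\) as an ancestor. Composing bottom-up is what forces your auxiliary transitivity lemma. Its proof omits the clause in which \(a\) is directly the parent of \(c\), but that case follows at once from the third clause.

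In exchange, your version is more careful. The paper's step ``every ancestor \(k>0\) of round \(r\) has a parent that was marked safe by some correct party'' mixes the local states of different parties and hides an induction along the chain. You stay within one party's state and explicitly justify that the parent relation never changes once assigned: \(\var{proposal}\) is write-once by RBC Uniqueness, and \(\var{safe}\) is monotone. Your invariant also covers all safe rounds rather than only committed ones. That is exactly the fact the paper later reuses ``by the reasoning of Lemma~\ref{lemma: chain}'' in the proof of Lemma~\ref{lemma: commited-is-ancestor}. The paper's argument is shorter.
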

\begin{proof}
    For round \(r\) to be committed, it must have received \(\tuple{\Commit, r}\) messages from \(n-f\) parties, of which at least \(f+1\) are correct.
    A correct party sends \(\tuple{\Commit, r}\) only once \(\var{safe}[r] = \True\), and a correct party sets \(\var{safe}[k] = \True\) only when \(\var{proposal}[k] = \tuple{k', b}\) and \(\fn{SafeParent}(k, k')\) hold; that is, only once it has marked the parent of round \(k\) safe.
    Hence every ancestor \(k > 0\) of round \(r\) has a parent that was marked safe by some correct party.
    Let \(r'\) be the ancestor of \(r\) for the smallest \(r' > 0\), and let \(s\) be the parent of round \(r'\).
    There are three possibilities for \(s\): either \(s > 0\), \(s = 0\), or \(s < 0\).
    The first case is not possible, since then \(s\) would be an ancestor of \(r\) for a round smaller than \(r'\).
    The case \(s < 0\) is not possible either, since \(\fn{SafeParent}(r', s)\) requires \(0 \le s\).
    Thus the only possible value for \(s\) is \(0\), which proves the lemma.
\end{proof}

\begin{lemma}
    \label{lemma: commited-is-ancestor}
    If a correct party commits a round \(r > 0\), then round \(r\) is an ancestor of every round \(r' > r\) for which some correct party sets \(\var{safe}[r'] = \True\).
\end{lemma}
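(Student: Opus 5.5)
We prove the statement by strong induction on \(r'\), over all rounds \(r' > r\) for which some correct party sets \(\var{safe}[r'] = \True\). The claim for a given \(r'\) is that round \(r\) is an ancestor of \(r'\).

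Fix such an \(r'\) and let \(s\) be its parent. So \(\var{proposal}[r'] = \tuple{s, b}\) at a correct party \party{j} with \(\fn{SafeParent}(r', s)\). By Totality of reliable broadcast, every correct party that assigns a proposal to \(r'\) assigns this same \(\tuple{s,b}\). Hence ``the parent of \(r'\)'' is well defined across correct parties, and so is the ancestor relation. \(\fn{SafeParent}(r', s)\) gives three facts at \party{j}:
\begin{enumerate}
\item \(s < r'\);
\item \(\var{safe}[s] = \True\);
\item \(\var{disabled}[r''] = \True\) for every \(s < r'' < r'\), meaning \party{j} rn-confirmed the timeout of each such round.
\end{enumerate}
We split on the position of \(s\) relative to \(r\).

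\textbf{Case \(s < r\).} Then \(s < r < r'\), so \party{j} has disabled round \(r\). This contradicts \Cref{lemma: commit-not-disabled}, since some correct party committed \(r\). So this case is impossible.

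\textbf{Case \(s = r\).} Then \(r\) is the parent of \(r'\), hence an ancestor, and we are done.

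\textbf{Case \(s > r\).} Then \(r < s < r'\), and \party{j} is a correct party that set \(\var{safe}[s] = \True\). By the induction hypothesis, \(r\) is an ancestor of \(s\). Since \(s\) is the parent of \(r'\), the third clause of the ancestor definition makes \(r\) an ancestor of \(r'\).

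For the base case, take the smallest \(r' > r\) marked safe. There the case \(s > r\) cannot arise, because \(s\) would be a smaller safe round above \(r\). So strong induction covers the base case automatically.

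The main point needing care is well-definedness. The ancestor relation is phrased through the local variables \(\var{proposal}\) and \(\var{safe}\), which live at different parties. I would therefore state explicitly that it is evaluated against the unique RBC-assigned proposal. I would also note that any correct party marking \(r'\) safe certifies safety of the parent \(s\) at itself. That is exactly what lets the induction hypothesis apply to \(s\).

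Once this is settled, the only substantive ingredient is \Cref{lemma: commit-not-disabled}. It rules out any safe chain skipping over the committed round \(r\).
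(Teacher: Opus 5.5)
Your proof is correct and is essentially the paper's argument in inductive form. The paper takes the smallest ancestor \(s\) of \(r'\) exceeding \(r\), observes that its parent \(s'\) satisfies \(s'<r<s\), and derives \(\var{disabled}[r]=\True\) at a correct party, contradicting \Cref{lemma: commit-not-disabled}; your strong induction on \(r'\) descends the same chain and reaches the same contradiction in your case \(s<r\). Your explicit treatment of why the parent relation is well defined across correct parties is a point the paper leaves implicit. One small nit: in case \(s>r\) you are using transitivity of the ancestor relation (an ancestor of the parent is an ancestor), not the third clause verbatim, though transitivity follows by a trivial induction on the definition.
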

\begin{proof}
    Suppose some correct party sets \(\var{safe}[r'] = \True\) for a round \(r' > r\) and, towards a contradiction, that round \(r\) is not an ancestor of \(r'\).
    By the reasoning of Lemma~\ref{lemma: chain}, every ancestor \(k > 0\) of \(r'\) has a parent that was marked safe by some correct party, and that parent is itself an ancestor of \(r'\) lying strictly below \(k\) (since \(\fn{SafeParent}\) requires a parent to be a strictly smaller round); thus the ancestors of \(r'\) form a descending chain from \(r'\) down to \(0\).
    Consider the set of ancestors of \(r'\) that exceed \(r\). It is non-empty, since \(r'\) is an ancestor of itself and \(r' > r\); let \(s\) be its smallest element, and let \(s'\) be the parent of \(s\) (which exists because \(s > r \ge 1\), so \(s > 0\)).
    Now \(s'\) is also an ancestor of \(r'\), and \(\fn{SafeParent}(s, s')\) forces \(s' < s\). By minimality of \(s\), no ancestor of \(r'\) lies in the interval \((r, s)\), so \(s' \le r\). Moreover \(s' \neq r\): since \(s'\) is an ancestor of \(r'\) but \(r\) is not, we have \(s' < r\). Combining, \(s' < r < s\).
    For some correct party to set \(\var{safe}[s] = \True\), the predicate \(\fn{SafeParent}(s, s')\) must hold, which requires \(\var{disabled}[k''] = \True\) for every round \(s' < k'' < s\), and in particular for round \(r\) (as \(s' < r < s\)).
    But by Lemma~\ref{lemma: commit-not-disabled}, no correct party ever disables round \(r\), so \(\var{disabled}[r] = \True\) never holds for a correct party.
    This contradicts \(\fn{SafeParent}(s, s')\), and the lemma follows.
\end{proof}

\begin{lemma}%
    \label{lemma: round-agreement}
    If two correct parties \party{i} and \party{j} commit rounds \(r\) and \(r'\) respectively, then either \(r\) is an ancestor of \(r'\), or \(r'\) is an ancestor of \(r\).
\end{lemma}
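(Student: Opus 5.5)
The plan is to reduce the statement to Lemma~\ref{lemma: commited-is-ancestor} by a case split on the relative order of \(r\) and \(r'\).

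\textbf{Case \(r = r'\).} By the first clause of the definition of ancestor, \(r\) is an ancestor of itself, so there is nothing to prove.

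\textbf{Case \(r \neq r'\).} By symmetry we may assume \(r < r'\); the case \(r' < r\) is identical with the roles of \party{i} and \party{j} exchanged.

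\emph{Subcase \(r = 0\).} Lemma~\ref{lemma: chain}, applied to \party{j} committing \(r'\), gives directly that \(0\) is an ancestor of \(r'\). (Round \(0\) is never actually committed via the \textbf{Commit} rule, so this subcase is essentially degenerate, but it is covered anyway.)

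\emph{Subcase \(r > 0\).} To apply Lemma~\ref{lemma: commited-is-ancestor} to \party{i}'s commit of \(r\), I need some correct party that sets \(\var{safe}[r'] = \True\). Since \party{j} commits \(r'\), it received \(\tuple{\Commit, r'}\) from \(n-f\) parties, of which at least \(n-2f \ge f+1 \ge 1\) are correct. By the \textbf{Vote} rule, a correct party sends \(\tuple{\Commit, r'}\) only when \(\var{safe}[\var{curr\_round}] = \True\) with \(\var{curr\_round} = r'\). Hence some correct party set \(\var{safe}[r'] = \True\). Lemma~\ref{lemma: commited-is-ancestor} then yields that \(r\) is an ancestor of \(r'\).

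\textbf{Main subtlety.} The ancestor relation is defined through local state (\(\var{proposal}\) and \(\var{safe}\) at some party), so I must make sure the parent relation is party-independent. By RBC Uniqueness and Totality, all correct parties that set \(\var{proposal}[k]\) agree on the same \(\tuple{k', b}\). Moreover, \(\var{safe}[k]\) is only ever set to \(\True\) when the parent is safe. So "ancestor" can be read globally: \(k'\) is the parent of \(k\) if some correct party set \(\var{safe}[k]\) with \(\var{proposal}[k] = \tuple{k', b}\). This is the reading already used in the proofs of Lemmas~\ref{lemma: chain} and~\ref{lemma: commited-is-ancestor}, and I will make it explicit in one sentence before the case split. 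Beyond that, the proof is a direct combination of the earlier lemmas.
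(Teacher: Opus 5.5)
Your proof is correct and follows the paper's argument essentially verbatim. You assume \(r<r'\), use the \(n-f\) commit messages for \(r'\) and the \textbf{Vote} rule to get a correct party with \(\var{safe}[r']=\True\), and conclude by Lemma~\ref{lemma: commited-is-ancestor}. Your separate handling of \(r=0\) via Lemma~\ref{lemma: chain} and your remark that the parent relation is party-independent (via RBC Uniqueness plus Totality) make explicit what the paper's proof leaves implicit.
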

\begin{proof}
    Without loss of generality assume that \(r \le r'\); if \(r = r'\) the claim is immediate, so assume \(r < r'\).
    Since round \(r'\) is committed, at least \(f+1\) correct parties sent \(\tuple{\Commit, r'}\), and by the \textbf{Vote} rule each did so only after setting \(\var{safe}[r'] = \True\); hence some correct party set \(\var{safe}[r'] = \True\).
    As \(r' > r\) and round \(r\) is committed, Lemma~\ref{lemma: commited-is-ancestor} shows that \(r\) is an ancestor of \(r'\).
\end{proof}

\begin{theorem}[Total Order]
    \label{theorem: total-order}
    If a correct party tob-delivers a block \(b\) before a block \(b'\), then no correct party tob-delivers \(b'\) before \(b\).
\end{theorem}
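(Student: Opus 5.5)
The plan is to reduce the statement to the round-level agreement of Lemma~\ref{lemma: round-agreement}. The key observation is that every tob-delivery is a prefix extension of a chain $\fn{Log}(r)$ for some committed round $r$.

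First, I will characterize each party's $\var{delivered}$ sequence. By the \textbf{Deliver} step, each time a correct party tob-delivers, it does so for some round $r$ with $\var{committed}[r]$ and $\var{safe}[r]$. It appends the non-$\bot$ blocks of $\fn{Log}(r)$ not yet delivered, in order. I will argue by induction over the party's delivery events that, after handling a committed round $r$, $\var{delivered}$ equals the sequence of non-$\bot$ blocks of $\fn{Log}(r)$, up to removal of duplicates.

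The inductive step needs two facts about any two committed rounds $r_1$ and $r_2$ processed by the same party. Lemma~\ref{lemma: round-agreement} makes one an ancestor of the other, so one $\fn{Log}$ is a prefix of the other. Lemma~\ref{lemma: chain} guarantees that $\fn{Log}$ is well defined down to genesis. If the newly committed round is an ancestor of an already processed round, nothing new is delivered. Otherwise the new $\fn{Log}$ extends the old one, and only the suffix is appended. A duplicate block cannot break this: the \textbf{Propose} rule picks $b \notin \fn{Log}(r)$, and skipping already-delivered blocks only removes later repeats, which preserves relative order.

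Second, I must make the ancestor relation party-independent, so that $\fn{Log}(r)$ is the same sequence at every correct party that computes it. Parent assignments come from $\var{proposal}[r]$. By RBC Uniqueness and Totality, all correct parties that rb-deliver a round-$r$ proposal agree on $\tuple{r',b}$. Hence the chain of parents from a committed $r$ down to $0$ is the same at every correct party, and so is $\fn{Log}(r)$. Lemma~\ref{lemma: round-agreement}, whose ancestor relation is defined on these global assignments, then yields a single global chain $C$ of committed rounds totally ordered by ancestry. Every correct party's $\var{delivered}$ is, at all times, a prefix of the deduplicated block sequence along $C$.

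Finally, suppose party $p_i$ delivers $b$ before $b'$, and party $p_j$ delivers $b'$ before $b$. Both sequences are prefixes of the same deduplicated sequence $S$ along $C$. The relative order of $b$ and $b'$ in any prefix containing both is their order in $S$, which gives a contradiction.

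I expect the main obstacle to be the first step. It must show rigorously that a party's deliveries for committed rounds in arbitrary order, including an older ancestor round committed late, always yield a prefix of the global sequence. It must also handle deduplication correctly when the same block appears in different positions on different chains; the ancestry prefix property is what rules this out.
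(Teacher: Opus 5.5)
Your proposal is correct and follows essentially the same route as the paper: Lemma~\ref{lemma: round-agreement} makes the $\fn{Log}$s of committed rounds a chain under the prefix order, so every correct party's $\var{delivered}$ sequence is a prefix of one common sequence, and the relative order of any two delivered blocks is the same everywhere. Your extra steps make explicit what the paper's short proof leaves implicit: RBC Uniqueness plus Totality to make $\fn{Log}(r)$ party-independent, and the bookkeeping for the append-based \textbf{Deliver} step under deduplication and out-of-order commits (note that Byzantine leaders' repeats are handled by the ``not previously delivered'' skipping, not by the \textbf{Propose} rule).
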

\begin{proof}
    A correct party tob-delivers blocks only through the \textbf{Commit} and \textbf{Deliver} rules: upon committing a round \(k\), it sets \(\var{delivered} \gets \fn{Log}(k)\) and delivers, in order, the blocks of \(\fn{Log}(k)\) it has not yet delivered.
    By the definition of \(\fn{Log}\), if \(k_1\) is an ancestor of \(k_2\) then \(\fn{Log}(k_1)\) is a prefix of \(\fn{Log}(k_2)\).
    By Lemma~\ref{lemma: round-agreement}, the rounds committed by correct parties are totally ordered by the ancestor relation, so the sequences \(\fn{Log}(k)\) over all committed rounds \(k\) form a chain under the prefix order.
    Consequently, the sequence of blocks tob-delivered by any correct party is always a prefix of the sequence tob-delivered by any other, and the relative order of any two blocks delivered by correct parties is the same at all of them.
\end{proof}

\subsection{Liveness}

\begin{lemma}
    \label{lemma: deliver-or-disable}
    For each round \(r\), either all correct parties rb-deliver the leader \(L_r\)'s proposal, or all correct parties disable round \(r\), or both.
\end{lemma}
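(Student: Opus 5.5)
Fix a round \(r\) and argue by cases on whether some correct party ever rb-delivers \(L_r\)'s round-\(r\) proposal. If some correct party rb-delivers it, then by the Totality property of reliable broadcast (\Cref{def: reliable-broadcast}) every correct party eventually rb-delivers the same proposal, and the first disjunct holds. The \textbf{RB-deliver} step is enabled for every round regardless of \(\var{curr\_round}\), so parties keep participating in every RBC instance. The remaining case is that no correct party ever rb-delivers the round-\(r\) proposal; there I will show that all correct parties disable round \(r\).

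For that case, the goal is to invoke the Unanimity property of reliable notification (\Cref{def: reliable-notification}). This requires showing that every correct party eventually rn-raises \(\tuple{\text{timeout}, r}\). Once every correct party has raised the flag, Unanimity gives that every correct party rn-confirms it, and the \textbf{Disable} step then sets \(\var{disabled}[r]\). To get there, I must show two things: every correct party eventually enters round \(r\), and once in round \(r\) it neither votes nor leaves before its timer fires. The second part is easy. Without rb-delivery, \(\var{safe}[r]\) stays \(\False\), so the \textbf{Vote} guard never holds. The \textbf{Advance round} step requires \(\var{safe}[r]\) or \(\var{disabled}[r]\). If some correct party disables \(r\) first, Totality of RN already gives that all correct parties disable it, and we are done. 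Otherwise the party stays in round \(r\) until its timer expires, at which point \(\var{voted}=\False\) and the \textbf{Timeout} step raises the flag.

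The main obstacle is the first part: showing that every correct party actually reaches round \(r\), since the timer is only started on entering it. I would prove this by strong induction on \(r\), strengthening the claim to ``every correct party eventually enters round \(r+1\)''. Assume every correct party has entered round \(r\) (base case: \textbf{Init}). By the case analysis above, either every correct party eventually has \(\var{proposal}[r]\) set, or every correct party eventually disables \(r\). In the disable case, \textbf{Advance round} fires directly.

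In the delivery case, the step I expect to be delicate is showing that \(\var{safe}[r]\) eventually holds at each correct party. The reason is that \(\fn{SafeParent}(r,r')\) depends on the state of earlier rounds. The parent \(r'\) chosen by \(L_r\) need not be safe at other parties at the moment of delivery, and a Byzantine leader could even name an unsafe parent. To handle this, I would use the induction hypothesis in its strengthened form: for every \(k<r\), all correct parties eventually have \(\var{safe}[k]\) or \(\var{disabled}[k]\), with these outcomes agreeing across correct parties by RBC/RN Totality. I would also use that these flags are monotone. If the parent is genuinely unsafe, some intermediate round is never disabled at any correct party, and \(\var{safe}[r]\) never holds. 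In that case I need the timer route: the party times out, raises the flag, and by Unanimity round \(r\) gets disabled, which again lets it advance. Either way every correct party enters round \(r+1\), so \textbf{Advance round} eventually fires in all cases, closing the induction and hence the lemma.
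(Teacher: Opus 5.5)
Your proof is correct. Its core is the paper's own case split: Totality of reliable broadcast if some correct party rb-delivers; otherwise \(\var{safe}[r]\) never holds, every correct party times out, and Unanimity of reliable notification disables round \(r\).

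Where you go beyond the paper is the question of who actually reaches round \(r\). The paper simply asserts that ``after their round timers expire all correct parties trigger'' the raise. This takes for granted that every correct party reaches round \(r\), although the timer is only started by Step \textbf{Enter round}. It also ignores parties that leave round \(r\) early via \(\var{disabled}[r]\). The paper then derives Corollary~\ref{corollary: higher-rounds} from the lemma, so the two implicitly form a joint induction on rounds.

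Your strong induction makes that joint induction explicit and proves the lemma and Corollary~\ref{corollary: higher-rounds} together. It also covers a case that the paper's one-line derivation of the corollary passes over: a proposal rb-delivered everywhere whose parent never becomes safe. There the first disjunct holds, but parties can advance only through the timer route.

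One small imprecision: a delivered parent \(r'\) can fail to be safe not only because an intermediate round is never disabled, but also because \(\var{safe}[r']\) never holds or \(r'\ge r\). This is harmless. The dichotomy you actually rely on covers all these cases: \(\fn{SafeParent}(r,r')\) eventually holds at every correct party or at none, by totality of \(\var{safe}\) and RN Totality. In the ``none'' branch, your Case~2 argument applies verbatim to all correct parties at once, as Unanimity requires.
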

\begin{proof}
    There are two possible cases for any given round \(r\): either at least one correct party rb-delivers the leader's proposal, or no correct party ever does so.
    In the first case, by the Totality property of reliable broadcast, all correct parties will eventually rb-deliver the leader's proposal.
    Otherwise, no correct party ever sets \(\var{safe}[r] = \True\) or votes to commit round \(r\), so after their round timers expire all correct parties trigger \Call{rn\_raise}{$\tuple{\text{timeout}, r}$}. By the Unanimity property of reliable notification, since all correct parties raise the round-\(r\) timeout flag, all correct parties eventually \Call{rn\_confirm}{} it and disable round \(r\).
\end{proof}

A consequence of the previous lemma is that all correct parties eventually satisfy the \textbf{Advance round} rule and progress to round \(r + 1\).
\begin{corollary}
    \label{corollary: higher-rounds}
    All correct parties keep entering higher rounds.
\end{corollary}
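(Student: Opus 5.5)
I will prove the corollary by induction on rounds, using Lemma~\ref{lemma: deliver-or-disable} as the main tool. Every correct party starts in round~\(1\) (Step \textbf{Init}). The claim to establish is: if every correct party eventually enters round \(r\), then every correct party eventually enters round \(r+1\). Since rounds are entered only in increasing order, this gives the corollary.

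\textbf{Step 1: every round gets resolved.} For the inductive step, fix a correct party \party{i} in round \(r\). I first show by strong induction on \(r\) that every correct party eventually has either \(\var{safe}[r]=\True\) or \(\var{disabled}[r]=\True\). By Lemma~\ref{lemma: deliver-or-disable}, every correct party either rb-delivers \(L_r\)'s proposal \(\tuple{r',b}\) or disables \(r\).

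\textbf{Step 2: a delivered proposal becomes safe.} Suppose some correct party marks round \(r\) safe. Then at that party \(\fn{SafeParent}(r,r')\) held: \(\var{safe}[r']\) was set, and every round strictly between \(r'\) and \(r\) was disabled. I then transfer this to every other correct party. Disabling is uniform by the Totality property of reliable notification. Safety of \(r'\) is uniform by the inner induction hypothesis applied to the smaller round \(r'\). Hence every correct party eventually marks \(r\) safe.

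\textbf{Step 3: the case no correct party marks \(r\) safe.} Then no correct party ever votes in round \(r\). Every correct party in round \(r\) therefore eventually times out, since the timer is reset on entry and voting requires \(\var{safe}\). By Unanimity all correct parties then confirm the timeout and disable \(r\). This step needs all correct parties to reach round \(r\), which is exactly the outer induction hypothesis.

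\textbf{Step 4: \party{i} advances.} If \party{i} eventually has \(\var{disabled}[r]\), it advances. Otherwise \party{i} eventually has \(\var{safe}[r]\). At that point \party{i} either votes (if it has not timed out), or it has already set \(\var{timed\_out}\). In both cases the \textbf{Advance round} guard holds.

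\textbf{Main obstacle.} The hard part is Step~2: showing that \(\fn{SafeParent}\) computed at one correct party eventually holds at all others. This needs a strong induction on round numbers, interleaved with the induction on entering rounds. The base case is \(\var{safe}[0]=\True\) initially, and the Totality properties of RBC and RN do the transfer. I would order the argument so that safety convergence for a round is proved using only facts about strictly smaller rounds.
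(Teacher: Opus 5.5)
Your proof is correct. It follows the same skeleton as the paper's argument but is more careful. The paper derives the corollary in one sentence from Lemma~\ref{lemma: deliver-or-disable}. That lemma is phrased in terms of rb-delivery, whereas \textbf{Advance round} needs \(\var{safe}[r]\) or \(\var{disabled}[r]\). The paper therefore leaves three things implicit:
\begin{itemize}
\item that safety established at one correct party propagates to all along the parent chain;
\item the case where the proposal is rb-delivered but never becomes safe, e.g.\ a Byzantine leader names a parent \(r'\) with a committed, hence never-disabled, round between \(r'\) and \(r\);
\item that the timeout/Unanimity argument presupposes every correct party reaches round \(r\).
\end{itemize}
Your split on ``some correct party marks \(r\) safe'' versus ``none does'', your strong induction, and your outer induction on round entry supply exactly these. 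Your version buys rigor at the cost of length, and it makes the appeal to Lemma~\ref{lemma: deliver-or-disable} in Step~1 superfluous.

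A few points to tidy when writing it up.
\begin{itemize}
\item \textbf{Step~1 misstates the inductive claim.} The hypothesis you actually use in Step~2 is ``if some correct party sets \(\var{safe}[r']\), then every correct party eventually does.'' It is not ``every correct party eventually has \(\var{safe}[r']\) or \(\var{disabled}[r']\)'', which would not yield \(\var{safe}[r']\) everywhere. State this uniform-safety claim separately and prove it by strong induction on \(r\). It holds outright and needs no interleaving with the round-entry induction.
\item \textbf{Step~2 should cite reliable broadcast.} Invoke Totality and Uniqueness of reliable broadcast so that every correct party sets \(\var{proposal}[r]=\tuple{r',b}\) with the same \(r'\). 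Also note that all the relevant facts are stable, so \textbf{Mark safe} eventually fires.
\item \textbf{Step~2's title is misleading.} A delivered proposal need not become safe, which is precisely why the case split must be on safety rather than on delivery.
\item \textbf{Step~3 needs one more case.} A correct party can leave round \(r\) before its timer fires, but only via \(\var{disabled}[r]\). In that case Totality of reliable notification gives the conclusion directly. Otherwise all correct parties stay in round \(r\), time out with \(\var{voted}=\False\), and Unanimity applies.
\end{itemize}
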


\begin{theorem}[Liveness]
    \label{theorem: liveness}
    With a round timeout larger than \((d_t+d_s)\delta\), and a leader schedule that ensures that correct leaders are chosen infinitely often, the protocol guarantees that if a correct party submits infinitely many blocks, then all correct parties deliver infinitely many blocks submitted by that party.
\end{theorem}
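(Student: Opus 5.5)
The plan is to show that after GST every round led by a correct party commits at all correct parties, and then to use the fair leader schedule together with the proposal rule to conclude that blocks of the submitting party keep getting delivered.

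\textbf{Step 1: entry times into a round are close.} Fix a post-GST round \(r\) with a correct leader \(L_r\). Let \(t\) be the first time a correct party enters \(r\) (by Corollary~\ref{corollary: higher-rounds}, every correct party eventually does). That party entered \(r\) via \textbf{Advance round}, so at time \(t\) it had either marked round \(r-1\) safe or disabled it. Totality of reliable broadcast (delay \(d_t\)) and Totality of reliable notification (delay \(2\)) should propagate these facts to all correct parties within \(\max(d_t,2)\delta=d_t\delta\). The same holds for safety or disabledness of all earlier rounds, since \(\fn{SafeParent}\) only depends on rb-delivered proposals and rn-confirmed timeouts. Every correct party then either votes or times out in \(r-1\), so it enters \(r\) by \(t+d_t\delta\).

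The \textbf{main obstacle} is in this step. A lagging party that has not yet timed out in \(r-1\) must either be able to vote there, or have \(r-1\) disabled. I would handle it by an induction over rounds: assume that after GST all correct parties enter each round within \(d_t\delta\) of each other, and check that the \textbf{Advance round} guard becomes true for every correct party by then. Here Lemma~\ref{lemma: deliver-or-disable} and the RBC/RN totality bounds should suffice. If some party is still waiting for its timer, I would argue that its timer was started no earlier than \(t-d_t\delta\), and bound the timeout path separately.

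\textbf{Step 2: the correct leader's round commits.} When \(L_r\) enters \(r\), by Step~1 it knows a safe parent, namely the highest round \(r'\) with \(\fn{SafeParent}(r,r')\). It rb-broadcasts its proposal by time \(t+d_t\delta\). By the worst-case delay \(d_s\), every correct party rb-delivers the proposal by \(t+(d_t+d_s)\delta\). Each correct party has also marked \(r'\) safe and all rounds in \((r',r)\) disabled, because these were known to \(L_r\) and propagate by totality within the same window (I would fold this into the \(d_t\) slack). So each correct party marks \(r\) safe. Since its timer for \(r\) started at or after \(t\) and lasts longer than \((d_t+d_s)\delta\), it votes before timing out. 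All \(n-f\) correct parties send \(\tuple{\Commit,r}\), and each correct party commits \(r\) and, \(\var{safe}[r]\) holding, tob-delivers \(\fn{Log}(r)\).

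\textbf{Step 3: the submitter's blocks get delivered.} Let \(p\) be the correct submitter. By the leader schedule, \(p\) leads infinitely many post-GST rounds. In each such round, the \textbf{Propose} rule picks a block from \(\var{submitted}\setminus\var{delivered}\) not in \(\fn{Log}(r')\). Because \(p\) submitted infinitely many blocks, such a block always exists and is not \(\bot\). By Step~2, this block is delivered by all correct parties. Distinct rounds led by \(p\) deliver distinct blocks, because a block already in the log is excluded by the proposal rule and by the "not previously delivered" clause of \textbf{Deliver}. Hence every correct party delivers infinitely many of \(p\)'s blocks.
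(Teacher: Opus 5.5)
Your proposal is correct and follows the paper's route. Steps~1--2 reproduce the timing argument of Lemma~\ref{lemma: liveness-gst}: every round with a correct leader that is first entered after GST commits at all correct parties, and the fair leader schedule then finishes the proof. The ``main obstacle'' in Step~1 does not actually arise. By totality, a lagging party learns by \(t+d_t\delta\) that each earlier round is safe or disabled; in a safe round it has already voted, has timed out, or can execute \textbf{Vote} immediately, so it never waits on a timer.

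One claim in Step~3 needs repair: a non-\(\bot\) block need not exist each time \(p\) leads. Argue by contradiction instead. If only finitely many of \(p\)'s blocks were delivered, some submitted \(b\) would never enter \(\var{delivered}\cup\fn{Log}(r')\), because \(\fn{Log}(r')\) is delivered once \(p\)'s round commits. Then \(p\) would propose a non-\(\bot\) block in each of its later rounds, these blocks are pairwise distinct by Lemma~\ref{lemma: commited-is-ancestor}, and all of them are delivered, which contradicts finiteness.
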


\begin{theorem}[Totality]
    \label{theorem: totality}
    If a correct party tob-delivers a block \(b\), then every correct party eventually tob-delivers \(b\).
\end{theorem}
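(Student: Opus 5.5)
The plan is to reduce Totality to the delivery of rounds led by correct leaders after GST, using the liveness machinery, and then close the argument with the prefix structure established in the proof of \Cref{theorem: total-order}.

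\textbf{Step 1: locate the block in a committed chain.} Suppose a correct party \party{i} tob-delivers \(b\). By the \textbf{Deliver} rule, there is a round \(k\) with \(\var{committed}[k]=\True\) and \(\var{safe}[k]=\True\) at \party{i}, and \(b \in \fn{Log}(k)\). Thus \(b\) is the block of some ancestor \(k_0 \le k\) of \(k\).

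\textbf{Step 2: obtain a later committed round at every correct party.} By \Cref{corollary: higher-rounds}, all correct parties keep entering higher rounds. Together with the fair leader schedule, this yields a round \(r > k\) led by a correct leader whose start is after GST. I will reuse the timing argument behind \Cref{theorem: liveness}. The leader enters round \(r\) at most \(d_t\Delta\) after the last correct party does, and its RBC completes within \(d_s\Delta\). So every correct party has \(\var{safe}[r]=\True\) before its timer of \((d_s+d_t)\Delta\) fires, and therefore votes. Every correct party then receives \(n-f\) commit votes and sets \(\var{committed}[r]=\True\).

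The main obstacle here is that \(\var{safe}[r]\) requires the leader's chosen parent \(r'\) to be marked safe, and the intermediate rounds to be disabled, at every correct party, not only at the leader. I will handle this as follows:
\begin{itemize}
\item The parent \(r'\) is safe at the leader, hence rb-delivered at the leader, so by RBC Totality it is delivered everywhere.
\item The intermediate rounds are rn-confirmed at the leader, so by RN Totality they are confirmed everywhere.
\end{itemize}
Arguing by induction on rounds, \(\fn{SafeParent}\) then holds everywhere within the totality delay. This is exactly the \(d_t\) slack built into the timeout. If necessary, I will invoke the appendix lemma that a round marked safe by one correct party is eventually marked safe by all.

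\textbf{Step 3: conclude by the ancestor relation.} By \Cref{lemma: commited-is-ancestor} applied to the committed round \(k\) and \(r>k\), round \(k\) is an ancestor of \(r\). Hence \(\fn{Log}(k)\) is a prefix of \(\fn{Log}(r)\), so \(b \in \fn{Log}(r)\). Every correct party holds the same proposals for all ancestors of \(r\), by RBC uniqueness and totality. When a correct party \party{j} applies \textbf{Deliver} for \(r\), it therefore tob-delivers every block of \(\fn{Log}(r)\) it has not yet delivered. In particular, \party{j} delivers \(b\), unless it has already done so.
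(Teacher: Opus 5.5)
Your proposal is correct and follows essentially the same route as the paper's proof: locate \(b\) in \(\fn{Log}(k)\), obtain a later post-GST round with a correct leader that every correct party commits (the paper cites Lemma~\ref{lemma: liveness-gst} where you re-derive it), and conclude through the ancestor relation, except that you invoke Lemma~\ref{lemma: commited-is-ancestor} directly instead of Lemma~\ref{lemma: round-agreement} and finish with the \textbf{Deliver} rule as written rather than the paper's case split on the length of \(\var{delivered}\). One slip in your timing argument: it needs the leader to enter round \(r\) within \(d_t\delta\) of the \emph{first} correct party to enter it (Lemma~\ref{lemma: optimistic2}), not the last, because the earliest entrants' timers fire first.
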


\subsubsection{Latency Bounds}


\begin{lemma}[Optimistic latency]
    \label{lemma: optimistic-latency}
    If at most \(f_o\) parties are faulty, \(L_r\) is correct, and it entered round \(r\) at a time \(t\geq \text{GST}\), all correct parties commit round \(r\) within \((d_o+1)\delta\) time of its proposal.
\end{lemma}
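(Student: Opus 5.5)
The plan is to trace the leader's proposal through the protocol from the moment \(L_r\) rb-broadcasts it at some time \(t_p \ge t \ge \text{GST}\), and show three things: every correct party rb-delivers it by \(t_p + d_o\delta\); every correct party votes at that same instant; and every correct party has \(n-f\) commit votes one message delay later.

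\textbf{Step 1: rb-delivery.} Since \(L_r\) is correct and at most \(f_o\) parties are faulty, the definition of optimistic resilience applies. Since \(t_p \ge \text{GST}\), every message delay is at most \(\delta\). Together these give that all correct parties rb-deliver \(\tuple{L_r, r, \tuple{r', b}}\) by time \(t_p + d_o\delta\).

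\textbf{Step 2: the parent is safe.} By the \textbf{Propose} rule, \(L_r\) chose \(r'\) with \(\fn{SafeParent}(r, r')\) holding locally. So \(L_r\) had \(\var{safe}[r'] = \True\) and \(\var{disabled}[r''] = \True\) for all \(r' < r'' < r\), and every correct party must also know this by \(t_p + d_o\delta\). I would argue it from the fact that \(L_r\) entered round \(r\) at or after GST, plus the totality of RBC and RN: the rb-deliveries and rn-confirmations that made \(r'\) safe and the intermediate rounds disabled at \(L_r\) propagate to all correct parties.

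This is the main obstacle. Totality only guarantees propagation within \(d_t\delta\), which may exceed \(d_o\delta\). The first thing I would try is to use that the proposal's own RBC takes at least \(d_o\) message delays. Through the echo/ready structure, a correct party delivering the proposal has heard from a quorum that itself received it. I would then argue that the required facts reach the other correct parties within that window.

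If this fails, I would read "commit within \((d_o+1)\delta\) of its proposal" as the good-case setting in which all correct parties are already synchronized in round \(r\), with the previous round resolved. In that setting \(\var{safe}[r']\) and the disabled flags already hold everywhere, so the local \(\fn{SafeParent}(r, r')\) check succeeds as soon as the proposal is delivered. I would state this assumption explicitly, consistent with the latency definitions in \Cref{sec: preliminaries}.

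\textbf{Step 3: voting.} Each correct party \(p\) must be in round \(r\), and must not have timed out, at the moment \(\var{safe}[r]\) becomes true. No timeout occurs because \(\Delta_{to} = (d_s+d_t)\Delta > d_o\delta\). A party may enter round \(r\) late, up to \(d_t\delta\) after \(L_r\), but then \(\var{safe}[r]\) already holds on entry and it votes immediately. The problem case is a party still in an earlier round. Here the same facts that make the parent safe (\(r'\) safe and rounds \(r'+1,\ldots,r-1\) disabled) also satisfy the \textbf{Advance round} conditions for all rounds up to \(r\). So \(p\) moves to round \(r\) at that instant and then votes via the \textbf{Vote} rule, all by \(t_p + d_o\delta\).

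\textbf{Step 4: commit.} All at least \(n-f\) correct parties send \(\tuple{\Commit, r}\) by \(t_p + d_o\delta\). After GST these arrive within \(\delta\), so every correct party receives \(n-f\) commit messages and commits round \(r\) by \(t_p + (d_o+1)\delta\).
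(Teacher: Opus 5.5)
Your outline is the paper's: optimistic delivery of the round-\(r\) proposal by \(t+d_o\delta\), then safety of the parent, then one more \(\delta\) for the \(\tuple{\Commit, r}\) messages. The genuine gap is where you located it, in Step~2, and neither of your routes closes it.

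Route (a) cannot work. A party sets \(\var{safe}[r]\) only from its own \(\var{proposal}[r']\), its own \(\var{safe}[r']\) (so its own rb-deliveries along the whole ancestor chain), and its own \(\var{disabled}[r'']\) flags. These come from the round-\(r'\) RBC instances and the round-\(r''\) RN instances, not from the round-\(r\) instance. The echoes and readies of round \(r\) say nothing about those instances: correct parties echo \(L_r\)'s proposal whether or not they have delivered \(r'\). So the only available bounds are those instances' own totality delays, \(2\delta\) for RN (\Cref{lemma: 2Delta-btn}) and \(d_t\delta\) for RBC. Since \(L_{r'}\) may be one of the faulty parties, \(d_o\) does not apply to the parent's instance.

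Route (b) does not prove the stated lemma; it adds a hypothesis.

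The paper handles this step by citing \Cref{lemma: deliver-information}. That lemma only gives the parent's proposal and the intermediate disables at all correct parties by \(t+d_t\delta\), yet the paper then asserts that everyone votes at \(t+d_o\delta\). That inference is sound exactly when \(d_t\le d_o\). For Bracha-RBC (\(d_t=2\le 3=d_o\), and \(2\delta\le d_o\delta\) for RN), you can close Step~2 in one line by totality. For Opt-RBC (\(d_t=4>2=d_o\)), the paper's proof has the same hole you found. The honest fix is one of the following, and you should say which you prove:
\begin{itemize}
\item make an extra hypothesis explicit, e.g., every correct party has rb-delivered the proposals on \(r\)'s ancestor chain by \(t+d_o\delta\), as happens when those leaders were correct and \(\text{GST}=0\);
\item settle for the weaker bound \((\max(d_o,d_t)+1)\delta\).
\end{itemize}

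A smaller issue in Step~3: \(\Delta_{to}>d_o\delta\) is not the right comparison. A correct party may have entered round \(r\) before \(L_r\), by up to \(d_t\delta\) according to \Cref{lemma: optimistic2}. That lemma needs the \emph{first} correct entry into round \(r\) to be after GST, which the hypothesis on \(L_r\) alone does not give; with \(\text{GST}=0\), as in the latency definitions, it is automatic. Granting this, the time from such a party's entry to its vote is at most \((d_t+\max(d_o,d_t))\delta\le(d_t+d_s)\delta<\Delta_{to}\), so no correct party times out.

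Your cascade argument for parties still in earlier rounds is correct: every round below \(r\) is either on the ancestor chain of \(r'\) or disabled, so \textbf{Advance round} fires through all of them. The paper's proof leaves this implicit.
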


\begin{corollary}[Optimistic latency with Bracha-RBC]
    \label{corollary: optimistic-latency-bracha}
    The optimistic latency of protocols \(\shortname\) and \(\shortname^s\) is \(4\delta\) if at most \(f_o\) parties are faulty.
\end{corollary}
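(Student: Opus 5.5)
The plan is to obtain the corollary by instantiating \Cref{lemma: optimistic-latency} with the parameters of Bracha-RBC from \Cref{tab:broadcast_comparison}. For Bracha-RBC the optimistic resilience is \(f_o = f\) and the optimistic delay is \(d_o = 3\). So the hypothesis ``at most \(f_o\) parties are faulty'' holds automatically whenever the leader is correct. The lemma then says that all correct parties commit round \(r\) within \((3+1)\delta = 4\delta\) of the proposal.

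For protocol \(\shortname\) the lemma applies directly, since it is stated for the protocol of \Cref{fig:lionfish-protocol-prose}. Still, I will recheck its mechanism. The correct leader \(L_r\) rb-broadcasts at some time after GST, and with at most \(f_o\) faults every correct party rb-delivers within \(d_o\delta = 3\delta\). A correct party that entered round \(r\) and whose timer has not fired then sets \(\var{safe}[r]\) and sends \(\tuple{\Commit, r}\). The \(n-f\) correct votes arrive within one more \(\delta\), which triggers the \textbf{Commit} rule at every correct party.

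The main obstacle is that each correct party must already have marked the parent safe and be in round \(r\) when the proposal arrives. Otherwise the \textbf{Vote} step is delayed. I would argue this as follows.
\begin{itemize}
\item The leader proposes a parent \(r'\) that satisfies \(\fn{SafeParent}\) locally.
\item By RBC Totality, every other correct party rb-delivers the ancestors of \(r'\) within \(d_t\delta \le 3\delta\) of the leader doing so.
\item By RN Totality, every other correct party confirms the intermediate disabled rounds within \(2\delta\) of the leader doing so.
\end{itemize}
This keeps the first RBC instance within \(3\delta\) of the proposal, so parties enter round \(r\) and mark it safe in time. I would cite the lemma's own proof for this timing rather than redo it.

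For \(\shortname^s\) I would argue that the commit path of round \(r\) is unchanged. The \textbf{Vote}, \textbf{Commit} and \textbf{Deliver} steps are identical, apart from the extra \(\var{aborted}[r]\) guard. The \textbf{Speculative propose} step only alters when and with which parent the leader rb-broadcasts. After GST with a correct leader, no \(\var{aborted}[r]\) flag is raised by correct parties, because that requires a correct party's own timer to expire in an earlier round within distance \(k-1\). Under the lemma's hypotheses I take this to be excluded, as in the good case.

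Measuring latency from the proposal, the same \(3\delta + \delta\) accounting then gives \(4\delta\) for \(\shortname^s\) as well, which completes the corollary.
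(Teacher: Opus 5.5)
Your treatment of \(\shortname\) is correct and matches the paper: the corollary is Lemma~\ref{lemma: optimistic-latency} instantiated with Bracha-RBC's \(d_o=3\) (and \(f_o=f\)), giving \((3+1)\delta=4\delta\). The gap is in your extension to \(\shortname^s\). Your parent-timing argument assumes that the leader's parent satisfies \(\fn{SafeParent}\) locally when the leader proposes, so that RBC and RN totality spread its safety within \(d_t\delta\). Lemma~\ref{lemma: deliver-information}, on which the lemma's proof rests, makes the same assumption. A speculative proposal breaks exactly this premise. \(L_r\) proposes with parent \(r-1\) as soon as it \emph{receives} the first message of \(L_{r-1}\)'s RBC, and at that point round \(r-1\) need not be rb-delivered, let alone safe, anywhere. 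So your claim that the Speculative propose step ``only alters when and with which parent the leader rb-broadcasts'' is precisely where the same accounting stops applying. Likewise, ruling out \(\var{aborted}[r]\) flags is an extra assumption. It does not follow from the hypotheses, which constrain only \(L_r\).

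Under those hypotheses alone, the \(\shortname^s\) claim actually fails. Let \(L_{r-1}\) be the single faulty party. It sends its first round-\((r-1)\) RBC message only to \(L_r\), which already has \(\var{safe}[r-k]\), and never completes the broadcast. Then \(L_r\) speculatively proposes with parent \(r-1\), and all correct parties rb-deliver that proposal within \(3\delta\). But no correct party ever rb-delivers round \(r-1\), so none sets \(\var{safe}[r]\) or votes for round \(r\). Round \(r\) ends up disabled; the \textbf{Disable} step triggered by the genuine round-\((r-1)\) timeouts even aborts it directly.

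For \(\shortname^s\) you therefore need an extra hypothesis on the leaders preceding \(r\), e.g.\ the good case in which they are all correct. You also need the missing timing step:
\begin{itemize}
\item Each speculative proposal in a round \(j\) is triggered by receipt of \(L_{j-1}\)'s first RBC message.
\item Hence every round reached from \(r\) through speculative parent links began its RBC no later than \(L_r\)'s proposal time \(t\). With a correct sender, each is rb-delivered everywhere by \(t+3\delta\).
\item The segment bottoms out either at a non-speculative proposal or at round \(r-k\). In the first case, the proposal's parent was locally safe at its correct leader at a time \(\le t\), so totality spreads that ancestry within \(d_t\delta=2\delta\). In the second case, \(L_r\) had already marked round \(r-k\) safe.
\end{itemize}
Only then do all correct parties mark round \(r\) safe and vote by \(t+3\delta\), and commit by \(t+4\delta\). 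The paper's one-line instantiation of Lemma~\ref{lemma: optimistic-latency} does not cover this case either, so citing the non-speculative proof cannot close the gap.
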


\begin{corollary}[Optimistic latency with Opt-RBC]
    \label{corollary: optimistic-latency-opt}
    The optimistic latency of protocols \(\shortname_{opt}\) and \(\shortname_{opt}^s\) is \(3\delta\) if at most \(f_o\) parties are faulty.
\end{corollary}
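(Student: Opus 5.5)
The statement to prove is \Cref{corollary: optimistic-latency-opt}. I will obtain it as a direct instantiation of \Cref{lemma: optimistic-latency}, plugging in the parameters of Opt-RBC from \Cref{tab:broadcast_comparison}.

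\textbf{Step 1: parameters.} Opt-RBC has optimistic delay \(d_o = 2\) and optimistic resilience \(f_o = \floor{n/2}-f\). The corollary assumes at most \(f_o\) faulty parties. By the definition of \(f_o\), this is exactly the condition under which a correct sender's message is delivered to all correct parties within \(d_o\) message delays.

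\textbf{Step 2: apply the lemma.} Consider a round \(r\) whose leader \(L_r\) is correct and enters round \(r\) at or after GST; in the good-case setting GST\(=0\), so this holds. \Cref{lemma: optimistic-latency} then states that all correct parties commit round \(r\) within \((d_o+1)\delta = 3\delta\) of the proposal.

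Both \(\shortname_{opt}\) and \(\shortname_{opt}^s\) instantiate RBC with Opt-RBC, so the bound applies to both. There is one point to check for \(\shortname_{opt}^s\). \Cref{lemma: optimistic-latency} is proved for the non-speculative protocol, and the speculative variant changes only three things: when the proposal is broadcast (Speculative propose), the per-round \(\var{voted}\) and \(\var{timed\_out}\) maps, and aborting via the Disable step. The latency is measured from the moment of proposal, so the earlier broadcast time does not matter. What matters is the \textbf{Vote} rule, which is unchanged: a party votes once the current round is safe and it has neither timed out nor aborted. So the argument carries over, provided correct parties neither abort round \(r\) nor fall behind it.

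\textbf{Step 3: the main obstacle.} The delicate part is showing that, within \(2\delta\) of the proposal, each correct party has actually voted. This requires three things at that moment:
\begin{itemize}
\item the party is in round \(r\), so \(\var{curr\_round}=r\);
\item the party has \(\var{safe}[r]=\True\), meaning the parent is marked safe locally;
\item the party's timer has not expired.
\end{itemize}

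The third point holds because \(\Delta_{to} = 8\Delta\) is much larger than \(2\delta\). The first two points come from the hypotheses of \Cref{lemma: optimistic-latency} itself, so taking that lemma as given discharges them.

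Once every correct party votes at time \(2\delta\), the \(n-f\) commit messages arrive by \(3\delta\). This gives the \(3\delta\) bound; matching it to the definition of optimistic commit latency with \(n_o=\ceil{n/2}+f\) correct non-leaders finishes the proof.
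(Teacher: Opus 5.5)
For \(\shortname_{opt}\) your argument is the paper's: the corollary is \Cref{lemma: optimistic-latency} with \(d_o=2\) from \Cref{tab:broadcast_comparison}. The gap is in the extension to \(\shortname_{opt}^s\). You note that the lemma is proved only for the protocol of \Cref{fig:lionfish-protocol-prose}. Yet in Step~3 you discharge \(\var{curr\_round}=r\) and \(\var{safe}[r]=\True\) by appeal to that same lemma. This is circular, because these facts come from the lemma's proof, not its hypotheses. Moreover, in the speculative protocol those hypotheses (\(L_r\) correct, at most \(f_o\) faults, \(L_r\) entering round \(r\) after GST) do not suffice.

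Take \(k=2\) and let \(L_{r-1}\) be one of the Byzantine parties.
\begin{itemize}
\item If \(L_{r-1}\) stays silent, every correct party times out in round \(r-1\) without \(\var{safe}[r-1]\). When it rn-confirms that timeout while still in round \(r-1\), the blue part of Step \textbf{Disable} makes it rn-raise round \(r\) and set \(\var{aborted}[r]\). So no correct party ever votes in round \(r\), and \(L_r\)'s normal proposal never commits.
\item If instead \(L_{r-1}\) sends its first RBC message only to \(L_r\), then \(L_r\) speculatively proposes with parent \(r-1\). But \(\var{proposal}[r-1]\) stays \(\bot\) at every correct party, so \(\var{safe}[r]\) never holds, and \(\var{proposed}[r]\) prevents \(L_r\) from re-proposing.
\end{itemize}
Your proviso that correct parties ``neither abort round \(r\) nor fall behind it'' is therefore exactly the missing content, and it fails without further assumptions.

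To close the gap, add the assumption the paper makes elsewhere for the speculative variants: a correct leader preceded by correct leaders, with no round-\((r-1)\) timeout. Then argue the timing directly rather than through the lemma. Suppose \(L_{r-1}\) rb-broadcasts at \(\tau\) and \(L_r\) speculatively proposes at \(\tau'\ge\tau\). By induction along the run of correct leaders, all correct parties rb-deliver round \(r-1\) optimistically, mark it safe, vote, and enter round \(r\) by \(\tau+2\delta\le\tau'+2\delta\). They rb-deliver \(L_r\)'s proposal by \(\tau'+2\delta\), with \(\var{safe}[r]=\True\) and \(\var{aborted}[r]=\False\), so they vote then and commit by \(\tau'+3\delta\). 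If \(L_r\) instead proposes normally, the lemma's argument applies once aborts are excluded.

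A side remark: if you re-derive the lemma itself for Opt-RBC, \Cref{lemma: deliver-information} only gives \(t+d_t\delta=t+4\delta\). So the \(2\delta\) bound for marking round \(r\) safe needs more than that lemma provides, for example that the parent chain was itself rb-delivered optimistically.
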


\begin{lemma}[Optimistic block time]
    \label{lemma: optimistic-block-time-delivery}
    The optimistic block time of \name using propose-upon-delivery pipelining is \(d_o\delta\) if at most \(f_o\) parties are faulty.
\end{lemma}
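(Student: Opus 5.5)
We prove the bound in both directions: consecutive proposals are at most \(d_o\delta\) apart, and this is tight. Consider two consecutive rounds \(r\) and \(r+1\) after GST, both with correct leaders, in an execution where at most \(f_o\) parties are faulty. Let \(t\) be the time at which \(L_r\) rb-broadcasts its proposal \(\tuple{L_r, r, \tuple{r', b}}\).

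\textbf{Step 1: delivery by \(L_{r+1}\).} By the definition of optimistic resilience \(f_o\), every correct party rb-delivers this proposal by time \(t+d_o\delta\). In particular, \(L_{r+1}\) sets \(\var{proposal}[r]\) by then.

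\textbf{Step 2: \(L_{r+1}\) marks round \(r\) safe.} I would show that \(\fn{SafeParent}(r,r')\) already holds at \(L_{r+1}\) by time \(t+d_o\delta\).
\begin{itemize}
\item \(L_r\) is correct, so it chose \(r'\) with \(\fn{SafeParent}(r,r')\) true in its own state. Hence it had marked \(r'\) safe and had disabled every round strictly between \(r'\) and \(r\).
\item A simple way to secure this is to assume round \(r-1\) also had a correct leader. Then \(r'=r-1\), there are no intermediate rounds to disable, and round \(r-1\) was rb-delivered and marked safe earlier, as part of the same pipeline.
\item For the general case I would argue inductively along the chain of consecutive correct leaders in the good-case steady state. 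Each round's proposal and its ancestors are rb-delivered before the next proposal is even broadcast, so any earlier disabled rounds were rn-confirmed at \(L_r\) before time \(t\). By Totality of RN, with \(d_t\) at most that of RBC, \(L_{r+1}\) has confirmed them by \(t+d_o\delta\) in the steady state.
\end{itemize}
I would state this precisely as a steady-state hypothesis, matching the definition of good-case block time between two consecutive correct leaders.

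\textbf{Step 3: \(L_{r+1}\) enters round \(r+1\) and proposes.} At time \(t+d_o\delta\), \(L_{r+1}\) has \(\var{safe}[r]=\True\), with its current round equal to \(r\) at the latest.
\begin{itemize}
\item Its round timer for \(r\) has not expired, since \(\Delta_{to}=(d_s+d_t)\Delta > d_o\delta\). So the \textbf{Vote} guard holds, and it sets \(\var{voted}\gets\True\).
\item The \textbf{Advance round} rule then fires immediately, with no waiting.
\item On entering round \(r+1\) it runs \textbf{Propose}. Round \(r\) itself satisfies \(\fn{SafeParent}(r+1,r)\), so the highest such round is available and the proposal is rb-broadcast at time \(t+d_o\delta\).
\end{itemize}
This gives the upper bound of \(d_o\delta\).

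\textbf{Step 4: tightness.} \(L_{r+1}\) cannot propose for round \(r+1\) before it has \(\var{safe}[\var{curr\_round}]\) or \(\var{disabled}[\var{curr\_round}]\) for round \(r\). In the good case round \(r\) is never disabled, so \(L_{r+1}\) needs \(\var{safe}[r]\), which requires rb-delivery of \(L_r\)'s proposal. By the definition of \(d_o\) as the minimum RBC delay, this takes at least \(d_o\delta\) when messages take exactly \(\delta\). So the block time is exactly \(d_o\delta\).

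\textbf{Main obstacle.} Step 2 is the delicate one: guaranteeing that \(L_{r+1}\) already knows the parent of round \(r\) is safe and that the intervening rounds are disabled at the moment of delivery. I would handle it with the steady-state hypothesis above, and fall back on the consecutive-correct-leaders case, where \(r'=r-1\), if the general case proves messy.
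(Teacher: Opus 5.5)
Your proposal is correct and follows the same argument as the paper. The paper's proof simply observes that \(L_{r+1}\) rb-delivers \(L_r\)'s proposal within \(d_o\delta\) and, following the protocol, proposes at that moment; your Step~2 makes explicit the steady-state assumption the paper leaves implicit (that \(L_{r+1}\) can mark round \(r\) safe as soon as it rb-delivers that proposal), and your Step~4 tightness argument goes beyond what the paper proves. One small correction: in Step~2 the inequality you need is that reliable notification's \(2\delta\) totality delay is at most \(d_o\delta\), which holds for both RBCs, rather than a comparison with the RBC's \(d_t\), which for Opt-RBC exceeds \(d_o\).
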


\begin{corollary}
    \label{corollary: optimistic-block-time-delivery}
    The optimistic block time of \name variants \(\shortname\) and \(\shortname_{opt}\) is \(3\delta\) and \(2\delta\) respectively.
\end{corollary}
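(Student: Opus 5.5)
The statement to prove is Corollary~\ref{corollary: optimistic-block-time-delivery}: the optimistic block time of \(\shortname\) is \(3\delta\) and that of \(\shortname_{opt}\) is \(2\delta\). The plan is to instantiate Lemma~\ref{lemma: optimistic-block-time-delivery} with the optimistic delay \(d_o\) of each reliable broadcast, read from \Cref{tab:broadcast_comparison}.

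Variants \(\shortname\) and \(\shortname_{opt}\) are exactly the protocol of \Cref{fig:lionfish-protocol-prose}. In that protocol the leader of round \(r+1\) proposes upon entering round \(r+1\), which happens once it has rb-delivered and marked safe the round-\(r\) proposal and voted. So these two variants are instances of propose-upon-delivery pipelining, and Lemma~\ref{lemma: optimistic-block-time-delivery} applies to both, giving block time \(d_o\delta\). I will state this explicitly, noting that there is no speculative step, so the lemma's hypothesis is met.

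Next I plug in the values of \(d_o\):
\begin{itemize}
\item For Bracha-RBC, \(d_o=3\) with \(f_o=f\), so the condition ``at most \(f_o\) faulty'' is simply the good case. This yields a block time of \(3\delta\) for \(\shortname\).
\item For Opt-RBC, \(d_o=2\) with \(f_o=\floor{n/2}-f\). This yields an optimistic block time of \(2\delta\) for \(\shortname_{opt}\).
\end{itemize}

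The main point needing care is the hidden precondition of the lemma: the next leader must be able to propose immediately upon delivery, without also waiting for an earlier round to be disabled. In the optimistic case all recent rounds have correct leaders and deliver, so the parent is the immediately previous round and \(\fn{SafeParent}\) holds as soon as that round is marked safe. I would remark that this holds after GST for consecutive correct leaders, which is what the block-time definition considers. Otherwise the corollary is direct arithmetic.
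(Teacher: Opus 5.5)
Your proposal is correct and follows the paper's route: the corollary is obtained by instantiating Lemma~\ref{lemma: optimistic-block-time-delivery} with \(d_o=3\) for Bracha-RBC and \(d_o=2\) for Opt-RBC, read from \Cref{tab:broadcast_comparison}. Your extra remark that the next leader must already see the parent as safe when it delivers the current proposal makes explicit an assumption the paper's lemma proof leaves implicit, and it does not change the argument.
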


\begin{lemma}
    \label{lemma: bad-round-duration}
    After GST, if all correct parties enter round \(r\) at the latest at time \(t\), then all correct parties move to the next round at the latest at time \(t + \Delta_{\text{to}}+d_t\delta\), where \(\Delta_{\text{to}}=(d_s+d_t)\Delta\).
\end{lemma}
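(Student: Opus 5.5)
Fix a round \(r\) and assume every correct party has entered round \(r\) by time \(t \ge \text{GST}\). The plan is to show that each correct party either satisfies the first disjunct of \textbf{Advance round} (safe current round plus voted or timed out) or the second (\(\var{disabled}[r]\)) by time \(t+\Delta_{\text{to}}+d_t\delta\). The case split follows \Cref{lemma: deliver-or-disable}: either some correct party marks round \(r\) safe by time \(t+\Delta_{\text{to}}\), or none does. Every correct party starts its round-\(r\) timer by time \(t\), so it has voted or timed out by time \(t+\Delta_{\text{to}}\) at the latest. The \textbf{Timeout} step fires unless the party already voted, and either way one of \(\var{voted},\var{timed\_out}\) is \(\True\) from then on.

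\textbf{Case 1: no correct party has set \(\var{safe}[r]\) by time \(t+\Delta_{\text{to}}\).} Then no correct party has voted in round \(r\), since voting requires \(\var{safe}[r]\). Hence every correct party rn-raises \(\tuple{\text{timeout},r}\) by time \(t+\Delta_{\text{to}}\). With the protocol of \Cref{fig:tbn}, after GST every correct party receives \(n-f\) votes and sends \(\Accept\) by \(t+\Delta_{\text{to}}+\delta\). Every correct party then receives \(2f+1\) accepts and confirms by \(t+\Delta_{\text{to}}+2\delta \le t+\Delta_{\text{to}}+d_t\delta\), using \(d_t\ge 2\) from \Cref{tab:broadcast_comparison}. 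Thus all correct parties disable round \(r\) and advance.

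\textbf{Case 2: some correct party has \(\var{safe}[r]\) by time \(t+\Delta_{\text{to}}\).} It then holds \(\var{proposal}[r]=\tuple{r',b}\) with \(\var{safe}[r']\) and every round strictly between \(r'\) and \(r\) disabled. By the RBC totality delay, every correct party rb-delivers the same proposal within \(d_t\delta\). To conclude that every correct party marks \(r\) safe within the same bound, I also need \(\var{safe}[r']\) and the disabled flags of the intermediate rounds to propagate within \(d_t\delta\).

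For the disabled flags, RN totality after GST takes \(2\delta\): the confirming party saw \(2f+1\) accepts, so at least \(f+1\) correct parties accepted, so every correct party cascades within \(\delta\) and confirms within \(2\delta\). For \(\var{safe}[r']\), I would prove an auxiliary claim by induction on rounds. The claim is that if a correct party marks a round safe at time \(\tau\ge\text{GST}\), then every correct party marks it safe by \(\tau+d_t\delta\). Each ingredient (rb-delivery, the parent's safety by induction, the intermediate disable flags) arrives within \(\max(d_t,2)\delta=d_t\delta\) of the moment the first party had it, and all those moments are \(\le\tau\). Combined with the voted-or-timed-out fact, all correct parties satisfy \textbf{Advance round} by \(t+\Delta_{\text{to}}+d_t\delta\).

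The main obstacle is this propagation argument, because safety and disablement events for earlier rounds may have occurred before GST. The fix I would try is to replace \(\tau\) with \(\max(\tau,\text{GST})\). Pre-GST messages are delivered by \(\text{GST}+\Delta\)-type bounds, so I would either assume the standard convention that RBC and RN totality delays are measured from \(\max(\cdot,\text{GST})\), or argue that \(t\ge\text{GST}\) is late enough that these events have settled. I expect to invoke the totality-delay definitions in exactly this post-GST sense, as the paper does in its correctness sketch.
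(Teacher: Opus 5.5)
Your argument is essentially the paper's proof. It uses the same split on whether some correct party has set \(\var{safe}[r]\) by \(t+\Delta_{\text{to}}\) and the same \(\Vote\)/\(\Accept\) timing in the first case. In the second case it makes the same point that every rb-delivery and rn-confirmation underlying \(\var{safe}[r]\) reaches all correct parties within \(d_t\delta\); your induction over the parent chain just makes explicit the paper's appeal to ``the proposals of its chain of ancestor rounds, and the timeouts of the skipped rounds.'' Your first fix for pre-GST events is exactly what is needed: measure totality delays from \(\max(\cdot,\text{GST})\), as in Lemma~\ref{lemma: 2Delta-btn}, and then use \(t\ge\text{GST}\). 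The alternative of arguing that events have ``settled'' by \(t\) is unjustified and unnecessary.

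One step is wrong as stated, though it is easy to repair. In Case 1, the claim that every correct party rn-raises \(\tuple{\text{timeout},r}\) by \(t+\Delta_{\text{to}}\) fails if some correct party rn-confirmed the round-\(r\) timeout and entered round \(r+1\) before its own timer fired. That party's timer is reset on entering \(r+1\), so it never raises the round-\(r\) flag, and the remaining correct parties may then never collect \(n-f\) \(\Vote\) messages. Your opening claim should therefore say that each correct party has entered round \(r+1\), voted, or timed out. Case 1 then needs the subcase that the paper handles explicitly. If some correct party confirms the round-\(r\) timeout at some \(\tau\le t+\Delta_{\text{to}}\), the \(2\delta\) cascade bound you already derive in Case 2 (Lemma~\ref{lemma: 2Delta-btn}) makes every correct party disable \(r\) by \(t+\Delta_{\text{to}}+2\delta\). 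Otherwise no correct party has left round \(r\) by \(t+\Delta_{\text{to}}\), and your vote-counting argument goes through.
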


\begin{corollary}[Eventual worst-case view duration]
    \label{corollary: view-duration}
    The eventual worst-case view duration (\Cref{def: view-duration}) of \name is \(\Delta_{\text{to}}+d_t\delta\): \(5\Delta+2\delta\) for variant \(\shortname\) (Bracha-RBC) and \(8\Delta+4\delta\) for variant \(\shortname_{opt}\) (Opt-RBC).
\end{corollary}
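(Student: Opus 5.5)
The statement to prove is Corollary~\ref{corollary: view-duration}. My plan is to derive it directly from Lemma~\ref{lemma: bad-round-duration} and then instantiate the RBC parameters from~\Cref{tab:broadcast_comparison}.

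\textbf{Step 1: matching the definition.} Fix a post-GST view (round) \(V\). Let \(t_1\) be the earliest time at which every correct party has entered round \(V\) or higher, as in~\Cref{def: view-duration}. Lemma~\ref{lemma: bad-round-duration} is phrased for parties that enter exactly round \(V\) by time \(t\). So I split correct parties into two groups.
\begin{itemize}
\item Parties already in a round strictly greater than \(V\) at time \(t_1\) need no further argument: they already count toward \(t_2\).
\item Parties in round exactly \(V\) at time \(t_1\) are handled by applying the reasoning of Lemma~\ref{lemma: bad-round-duration} with \(t=t_1\).
\end{itemize}

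\textbf{Step 2: the main obstacle.} The delicate point is whether Lemma~\ref{lemma: bad-round-duration} really needs all correct parties to be in round \(V\), or only to have reached round \(V\) or higher. The lemma's proof proceeds as follows.
\begin{itemize}
\item Every correct party in round \(V\) has its timer fire by \(t+\Delta_{to}\).
\item At that point, either it has a safe proposal and has voted or timed out, or it raised the timeout flag.
\item Within \(d_t\delta\) further time, every correct party has either rb-delivered the proposal (by RBC totality, delay \(d_t\)) or rn-confirmed the disable (by RN totality, delay \(2\le d_t\)).
\end{itemize}

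The case I must check is a party in round \(V\) whose timer fires while \(\var{proposal}[V]\) is delivered but not yet safe. I would argue as follows. Parties already in higher rounds must have marked \(V\) safe or disabled it. By the RBC and RN totality properties, together with the fact that safety of earlier rounds propagates within the same delay bounds, lagging parties inherit this resolution within \(d_t\delta\).

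If the lemma as stated is weaker than this, I would strengthen the argument accordingly. Because the advance conditions are monotone in the state of earlier rounds, parties in higher rounds only help.

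\textbf{Step 3: conclusion and instantiation.} Steps 1 and 2 give \(t_2-t_1\le \Delta_{to}+d_t\delta\), with \(\Delta_{to}=(d_s+d_t)\Delta\). Plugging in the parameters:
\begin{itemize}
\item Bracha-RBC has \(d_s=3\) and \(d_t=2\), giving \(5\Delta+2\delta\).
\item Opt-RBC has \(d_s=4\) and \(d_t=4\), giving \(8\Delta+4\delta\).
\end{itemize}
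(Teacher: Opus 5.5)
Your proposal is correct and follows the paper's route: apply Lemma~\ref{lemma: bad-round-duration} with \(t=t_1\), then instantiate \(d_s,d_t\) from \Cref{tab:broadcast_comparison}. The worry in your Step~2 is unnecessary, because the lemma only requires every correct party to have \emph{entered} round \(V\) by time \(t\); the paper gets this from the fact that rounds are entered one at a time in increasing order, so every party at round \(\geq V\) at time \(t_1\) has already entered \(V\), and the lemma's own proof already allows for parties that have moved on to round \(V+1\).
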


\subsection{Pipelined-Proposal Version}

We now turn to the version of \name appearing in~\Cref{fig:lionfish-protocol-speculative-pipelining}, which pipelines consecutive proposals.

Since all rounds still progress only through the reliable broadcast or the reliable notification subprotocols, the safety analysis of the standard version holds also in the pipelined version.

\begin{theorem}[Liveness]
    \label{theorem: liveness-pipelined}
    With a round timeout larger than \((d_t+d_s)\delta\), and a leader schedule that ensures that sequences of $k$ correct leaders are chosen infinitely often, the protocol guarantees that if a correct party submits infinitely many blocks, then all correct parties deliver infinitely many blocks submitted by that party.
\end{theorem}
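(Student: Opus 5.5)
We follow the structure of the liveness argument for the non-speculative protocol (Theorem~\ref{theorem: liveness}), adapted to the speculative rules of~\Cref{fig:lionfish-protocol-speculative-pipelining}.

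\textbf{Step 1: no correct party gets stuck.} We first re-establish Lemma~\ref{lemma: deliver-or-disable} and Corollary~\ref{corollary: higher-rounds} for the pipelined protocol. For each round \(r\), either some correct party rb-delivers \(L_r\)'s proposal, and by RBC Totality all do. Or none does, and then every correct party either times out in \(r\) (no vote is possible without \(\var{safe}[r]\)) or has already set \(\var{aborted}[r]\); in both cases it raised the round-\(r\) flag, so Unanimity disables \(r\) everywhere.

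We must also check that \textbf{Advance round} eventually fires in every round. If \(r\) becomes safe, the party votes, times out, or has aborted. If \(r\) is disabled, the party advances directly. Safety of a delivered proposal follows by induction over rounds, as every intermediate round is either safe or disabled. The new \textbf{Disable} cascade only adds raises, so it cannot block progress.

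\textbf{Step 2: a window of \(k\) correct leaders after GST.} Fix, after GST and after all correct parties have entered some round, a window of rounds \(r, r+1, \dots, r+k-1\) with correct leaders. We want to show round \(r+k-1\) is committed by all correct parties. Take round \(r\) first: its leader either proposes normally upon entering, or has speculatively proposed with parent \(r-1\).

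The main obstacle is the speculative case. If \(L_r\) speculatively chose parent \(r-1\) while \(L_{r-1}\) is Byzantine, then \(L_{r-1}\) may never complete its RBC. Round \(r-1\) then gets disabled, and round \(r\)'s proposal can never be marked safe. Worse, the \textbf{Disable} cascade from \(r-1\) aborts rounds up to \(r+k-2\). This is exactly why we need \(k\) consecutive correct leaders.

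To handle it, we argue as follows. Round \(r+k-1\) cannot have been aborted by a timeout of any round before \(r\), since cascades reach at most \((r-1)+k-1 = r+k-2\). Rounds \(r,\dots,r+k-2\) have correct leaders, and their parent chains lead only through correct-leader rounds, whose RBCs complete by Validity. So the only possible failure is a timeout, and the timer argument below rules that out.

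\textbf{Step 3: the timing argument.} For round \(r+k-1\), \(L_{r+k-1}\) either speculatively proposed with parent \(r+k-2\), a correct round, or proposed upon entering. In either case its RBC is delivered within \(d_s\delta\) of the proposal. The proposal happens no later than when \(L_{r+k-1}\) enters the round. Entry is at most \(d_t\delta\) after the first correct party enters, by the totality delays of RBC and RN, as in Lemma~\ref{lemma: bad-round-duration}.

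The parent \(r+k-2\) is safe by induction: each earlier round in the window is delivered with a safe parent. Hence every correct party sets \(\var{safe}[r+k-1]\) before its timer of \((d_s+d_t)\Delta > (d_t+d_s)\delta\) expires, and before any abort. It therefore votes, and the round is committed with \(n-f\) correct votes. We also need that no correct party raised a timeout for \(r+k-1\) through a cascade; this holds because, by the same timing argument, no round in \([r, r+k-2]\) times out.

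\textbf{Step 4: conclude.} Committing round \(r+k-1\) delivers its whole chain, which gives Totality-style catch-up. Since such windows occur infinitely often, and a correct leader includes any pending block of a correct submitter not already in the log, infinitely many of each correct submitter's blocks are delivered.
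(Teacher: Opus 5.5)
Your Step~3 induction (``the parent $r+k-2$ is safe by induction'') has no base case in exactly the scenario you raise in Step~2. The Step~2 claim that the window rounds' ``parent chains lead only through correct-leader rounds'' is false there. If $L_{r-1}$ is Byzantine and never completes its RBC, $L_r$ may speculate with parent $r-1$, $L_{r+1}$ with parent $r$, and so on, so none of rounds $r,\dots,r+k-2$ ever becomes safe. Nothing in your argument then stops $L_{r+k-1}$ from also speculating with parent $r+k-2$, in which case round $r+k-1$ can never be marked safe and never commits.

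The missing idea is the guard $\var{safe}[r-k]$ in \textbf{Speculative propose}: for round $r+k-1$ it requires $\var{safe}[r-1]$. This is the step the paper's Lemma~\ref{lemma: liveness-gst-pipeline} relies on, and it lets you split cases.
\begin{itemize}
\item If $L_{r+k-1}$ speculated, then $\var{safe}[r-1]$ held at a correct party. By the Totality of RBC and RN, every correct party delivers round $r-1$'s proposal and marks it safe. The correct-leader rounds $r,\dots,r+k-2$, delivered by Validity, then become safe in turn.
\item If $L_{r+k-1}$ did not speculate, then $\var{proposed}[r+k-1]=\False$, so it makes a normal proposal on entering the round, with a parent satisfying $\fn{SafeParent}$. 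Lemma~\ref{lemma: deliver-information} and the timing of Lemma~\ref{lemma: liveness-gst} then finish the argument.
\end{itemize}

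There is a second, related gap. Your closing claim in Step~3, that no round in $[r,r+k-2]$ genuinely times out ``by the same timing argument'', also fails in the failure scenario. Those rounds never become safe, so the timer argument says nothing about them. This matters because \textbf{Timeout} does not test $\var{aborted}$. A party whose round-$(r+k-2)$ timer fires before that round is disabled sets $\var{timed\_out}[r+k-2]$, and the round's later confirmation then makes \textbf{Disable} raise flags reaching round $r+k-1$. What you need, and what the paper argues, is two facts:
\begin{itemize}
\item the raises issued in \textbf{Disable} upon confirming round $r-1$ get rounds $r,\dots,r+k-2$ rn-confirmed within about one reliable-notification latency after GST, hence before their $\Delta_{to}$ timers expire;
\item these raises set $\var{aborted}$ rather than $\var{timed\_out}$, and so trigger no further cascade.
\end{itemize}

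Finally, in Step~4, a correct leader picks blocks from its own $\var{submitted}$ set, not from other parties' pending blocks. You therefore need the submitter itself to lead, infinitely often, a round you have shown to commit. The paper covers this by appealing to fair leader rotation.
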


\begin{theorem}[Totality]
    \label{theorem: totality-pipelined}
    Under the assumptions of~\Cref{theorem: liveness-pipelined}, if a correct party tob-delivers a block \(b\), then every correct party eventually tob-delivers \(b\).
\end{theorem}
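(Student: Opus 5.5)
We prove Totality for the pipelined protocol by mirroring the argument for the non-speculative version (\Cref{theorem: totality}). The key fact is that committing any round \(r''\) delivers the whole \(\fn{Log}(r'')\), so it suffices to show that every correct party eventually commits some round \(r''\) having as an ancestor the round in which \(b\) was delivered.

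\textbf{Step 1: locate the round carrying \(b\).} Suppose a correct party \(p_i\) tob-delivers \(b\). Then \(p_i\) committed some round \(r\) with \(\var{safe}[r]=\True\), and \(b\) appears in \(\fn{Log}(r)\); let \(r_b\le r\) be the ancestor round of \(r\) whose block is \(b\). The safety analysis carries over to the pipelined protocol. Hence, by \Cref{lemma: commited-is-ancestor}, round \(r\) is an ancestor of every round \(r''>r\) that some correct party marks safe. In particular, \(r_b\) is an ancestor of every such \(r''\).

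\textbf{Step 2: some later round commits at all correct parties.} By \Cref{theorem: liveness-pipelined}, whose assumptions we adopt, the schedule contains infinitely many runs of \(k\) consecutive correct leaders, and at least one such run starts after GST and after round \(r\). We reuse the argument underlying that theorem: the last leader \(L_{r''}\) of such a run has its round \(r''\) committed by all correct parties. Concretely, \(r''-k\) is marked safe, and the speculative proposals of the correct predecessors complete their RBCs within the timeout. None of these rounds is disabled, since by the reliable-notification argument of \Cref{lemma: commit-not-disabled} a round cannot be both disabled and voted by a quorum. So every correct party times out in none of them, all correct parties vote, and each collects \(n-f\) commit messages.

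\textbf{Step 3: delivery of \(b\) at every correct party.} Fix a correct party \(p_j\). We need \(\var{safe}[r'']=\True\) at \(p_j\) together with \(\var{committed}[r'']\). For safety marking, observe that every ancestor of \(r''\) was marked safe by some correct party. The proposal of each such ancestor is eventually rb-delivered at \(p_j\) by RBC Totality, and each skipped round between an ancestor and its parent is disabled at \(p_j\) by RN Totality. Therefore, by induction down the ancestor chain, which terminates at \(0\) by \Cref{lemma: chain}, \(p_j\) eventually marks all of these rounds safe. Since \(r_b\) is an ancestor of \(r''\), the \textbf{Deliver} step at \(p_j\) outputs every block of \(\fn{Log}(r'')\supseteq\fn{Log}(r_b)\). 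This includes \(b\) unless \(p_j\) had already delivered it.

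\textbf{Main obstacle.} The hardest part is Step 2 in the pipelined setting. The \textbf{Disable} step's aborts could disable rounds in the run of \(k\) correct leaders, namely when a round \(r'\) just before the run times out at some correct parties. We must argue that aborts reach at most \(r'+k-1\), and that, because they are triggered only when the party's own timer expired, they cannot cascade. This would come from the guard \(\var{timed\_out}[r]=\True\) together with post-GST timer bounds as in \Cref{lemma: bad-round-duration}. The cleanest route is to defer entirely to the proof of \Cref{theorem: liveness-pipelined}, extracting from it the stronger statement that such a round \(r''\) is committed by all correct parties.
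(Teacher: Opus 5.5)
Your argument is essentially the paper's. You take the committed round \(r\) with \(b\in\fn{Log}(r)\), apply \Cref{lemma: liveness-gst-pipeline} to a post-GST window of \(k\) correct leaders after \(r\) to get a round \(r''=s+k-1\) committed by every correct party, and use the safety lemmas to make \(r\) an ancestor of \(r''\). You cite \Cref{lemma: commited-is-ancestor} directly where the paper cites \Cref{lemma: round-agreement}, which is derived from it, and you conclude via the \textbf{Deliver} step.

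The one thing to discard is the ``Concretely'' justification in Step 2, which is false as written:
\begin{itemize}
  \item \(\var{safe}[r''-k]=\var{safe}[s-1]\) need not hold.
  \item Rounds \(s,\ldots,s+k-2\) can be disabled by aborts triggered by a genuine timeout of round \(s-1\).
  \item Invoking \Cref{lemma: commit-not-disabled} there is circular, since it presupposes a voting quorum.
\end{itemize}
Rely on \Cref{lemma: liveness-gst-pipeline} itself instead, exactly as your closing paragraph proposes.
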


\begin{lemma}[Optimistic block time]
    \label{lemma: optimistic-block-time-pipelined}
    The optimistic block time of \name variants \(\shortname^s\) and  \(\shortname_{opt}^s\) is \(\delta\).
\end{lemma}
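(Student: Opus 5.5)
The plan is to analyze the good case after GST with all leaders correct and at most \(f_o\) faulty parties. I will show that consecutive correct leaders \(L_r\) and \(L_{r+1}\) issue their proposals exactly \(\delta\) apart; since the block time is defined as the worst-case interval between proposals of consecutive correct leaders, this gives the bound \(\delta\).

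\textbf{Steady state.} First I will identify the steady state of the pipeline by induction on rounds. The invariant I would maintain is the following: for every round \(r\) in the run of correct leaders, \(L_r\) starts its round-\(r\) RBC at time \(t_r\), and \(t_{r+1} = t_r + \delta\). The base step is a round in which the leader has already marked round \(r-k\) safe. Induction is needed because of guard (a) of \textbf{Speculative propose}: it requires \(\var{safe}[r-k]\) at \(L_r\).

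\textbf{Inductive step.} When \(L_r\) rb-broadcasts its proposal at time \(t_r\), the first message of that RBC instance is sent directly to all parties. It therefore reaches \(L_{r+1}\) by time \(t_r+\delta\), which satisfies trigger (b). For guard (a), I must show that \(L_{r+1}\) has marked round \(r+1-k\) safe by time \(t_r+\delta\).

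\textbf{Checking guard (a).} Round \(r+1-k\) was proposed at time \(t_{r+1-k} = t_r + \delta - k\delta\). Under at most \(f_o\) faults, its RBC is delivered to all correct parties within \(d_o\delta\), by the definition of \(d_o\) and \(f_o\), so by time \(t_r+\delta-(k-d_o)\delta\). Since \(k\ge d_o\) (we have \(k=3=d_o\) for Bracha-RBC and \(k=2=d_o\) for Opt-RBC), delivery happens by \(t_r+\delta\). Its parent \(r-k\) was already safe by the inductive hypothesis, with no intermediate rounds, so \(\fn{SafeParent}(r+1-k, r-k)\) holds and round \(r+1-k\) is marked safe immediately. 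The remaining guards also need checking. The condition \(r+1>\var{curr\_round}\) holds, or else the normal \textbf{Propose} step fires at the same moment. The condition \(\var{proposed}[r+1]=\False\) holds trivially. Hence \(L_{r+1}\) proposes at \(t_{r+1}\le t_r+\delta\).

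\textbf{Main obstacle and fallback.} The step I expect to be hardest is making the steady state rigorous. Two things need care. First, no round in the window is aborted or times out. I would argue that every round's RBC completes within \(d_o\delta\), far below \(\Delta_{to}\), so timers never fire and \textbf{Disable} never raises aborts. Second, the chain of safe rounds actually propagates, meaning each speculative proposal names parent \(r-1\) and that parent becomes safe in time. This follows from the same inductive timing argument, applied to round \(r\) at all correct parties. If the exact-equality accounting fails at the start of a run, I would instead state the bound asymptotically. After at most \(k\) rounds of warm-up, during which leaders use the non-speculative \textbf{Propose} step, block time is \(d_o\delta\) by \Cref{lemma: optimistic-block-time-delivery}; thereafter the invariant above gives block time \(\delta\).
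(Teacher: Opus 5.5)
Your proposal is correct and follows the same route as the paper: after GST, each correct leader's \textbf{Speculative propose} step fires as soon as it receives the previous leader's first RBC message, one message delay after that leader proposed, so consecutive proposals are $\delta$ apart. Your induction showing that $\var{safe}[r+1-k]$ already holds at $L_{r+1}$ by time $t_r+\delta$ (using $k=d_o$, with the invariant strengthened to safety at all correct parties) and your warm-up caveat make explicit what the paper's proof only asserts when it says the leaders' ``blocks are marked as safe''.
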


%
%

\begin{lemma}
    \label{lemma: bad-round-duration-pipelined}
    In the pipelined-proposal version of \name, after GST, if all correct parties enter round \(r\) at the latest at time \(t\), then all correct parties move to the next round at the latest at time \(t + \Delta_{\text{to}}+d_t\delta\).
\end{lemma}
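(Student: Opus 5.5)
We mirror the argument for the non-speculative bound (\Cref{lemma: bad-round-duration}) and fix a correct party \party{i} that enters round \(r\) by time \(t\geq\text{GST}\). The goal is to show that the \textbf{Advance round} rule of~\Cref{fig:lionfish-protocol-speculative-pipelining} becomes enabled for \party{i} by time \(t+\Delta_{to}+d_t\delta\). We split into two cases according to whether some correct party rb-delivers the round-\(r\) proposal by time \(t+\Delta_{to}\).

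\textbf{Case 1: some correct party rb-delivers the round-\(r\) proposal by time \(t+\Delta_{to}\).} By Totality of RBC, every correct party rb-delivers it by \(t+\Delta_{to}+d_t\delta\). We then need \(\var{safe}[r]\) to hold, i.e.\ the parent \(r'\) must be safe and all rounds strictly between \(r'\) and \(r\) must be disabled, at each correct party by that time. The plan is to argue that if a correct party \(q\) has already marked round \(r\) safe, the same holds everywhere within the totality delay. Here we use that \(q\) had \(\var{safe}[r']\) set and \(\var{disabled}\) set on the intermediate rounds. Every correct party entered round \(r\) by \(t\), so each one already has every round below \(r\) either safe (with the same proposal, by RBC uniqueness) or disabled; RN Totality (2 message delays, at most \(d_t\)) and RBC Totality then propagate \(q\)'s view.

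If instead no correct party marks \(r\) safe, then after its round-\(r\) timer fires \party{i} has \(\var{voted}[r]=\False\) and raises the timeout flag, so \(\var{timed\_out}[r]=\True\). If the proposal is safe by \(t+\Delta_{to}+d_t\delta\), the first disjunct of \textbf{Advance round} holds, because either \(\var{voted}[r]\), \(\var{timed\_out}[r]\), or \(\var{aborted}[r]\) is true by then (the timer fires at the latest at \(t+\Delta_{to}\)). We must also check that in the speculative protocol a party may have \(\var{aborted}[r]=\True\) without a timer. This only helps, since \(\var{aborted}[r]\) also satisfies the disjunct.

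\textbf{Case 2: no correct party rb-delivers the round-\(r\) proposal by \(t+\Delta_{to}\).} Then every correct party's round-\(r\) timer (which started by \(t\)) expires by \(t+\Delta_{to}\) with \(\var{voted}[r]=\False\), so all correct parties rn-raise \(\tuple{\text{timeout},r}\) by \(t+\Delta_{to}\) (or had already raised it via \(\var{aborted}[r]\)). After GST the RN protocol of~\Cref{fig:tbn} confirms within two message delays: \(n-f\) votes arrive within \(\delta\), accepts within another \(\delta\). So all correct parties disable \(r\) by \(t+\Delta_{to}+2\delta\le t+\Delta_{to}+d_t\delta\). A party that rb-delivers late can still advance through the safe branch, so either branch suffices.

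The main obstacle is Case 1's claim that safety of round \(r\) propagates to all correct parties within \(d_t\delta\). The difficulty is that in the pipelined protocol the parent is always \(r-1\) for speculative proposals, and round \(r-1\) may be disabled only via \(\var{aborted}\) raises rather than timers. I would first try to show that because every correct party has already left round \(r-1\) by time \(t\), round \(r-1\) is resolved (safe or disabled) at all correct parties by \(t\), reusing the argument that advancing requires \(\var{safe}\) or \(\var{disabled}\) of the current round. I would then check that the extra \(\var{aborted}\) raises for rounds \(r'>\var{curr\_round}\) never block the Advance guard, and that the same timer-reset semantics as in~\Cref{lemma: bad-round-duration} apply, since \textbf{Enter round} is unchanged.
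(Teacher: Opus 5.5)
Your overall plan matches the paper's: replay the proof of Lemma~\ref{lemma: bad-round-duration} and observe that \(\var{aborted}[r]\) only helps. The gap is that you split on whether some correct party \emph{rb-delivers} the round-\(r\) proposal by \(t+\Delta_{to}\), and this leaves a case open. In your Case~1 you try to show that round \(r\) then becomes safe at every correct party, and you call this the main obstacle. That claim is false in general, because an rb-delivered proposal may never become safe. The speculative scenario you worry about is itself a counterexample: \(L_r\) proposes with parent \(r-1\), and round \(r-1\) is disabled without its proposal ever being rb-delivered. (A Byzantine leader can also simply name a parent whose intermediate rounds are never disabled.) In that case the only exit from round \(r\) is \(\var{disabled}[r]\), and your Case~1 never establishes it: you note that \(p_i\) raises its flag, but you only conclude in the branch where the proposal turns safe.

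Your suggested remedy cannot work. Knowing that round \(r-1\) is ``safe or disabled'' everywhere by \(t\) says nothing when it is disabled but not safe. The worry about \(\var{aborted}\)-raises is also moot, since RN Totality (Lemma~\ref{lemma: 2Delta-btn}) does not depend on why flags were raised. Note too that your propagation argument only yields the bound if some correct party marks \(r\) safe by \(t+\Delta_{to}\).

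The repair is to split, as the paper does, on whether some correct party sets \(\var{safe}[r]\) by \(t+\Delta_{to}\). If one does, at some \(t_d\le t+\Delta_{to}\), your propagation argument alone suffices: every rb-delivery and rn-confirmation behind its \(\var{safe}[r]\) reaches all correct parties by \(t_d+d_t\delta\), and no claim about round \(r-1\) is needed. Since by \(t+\Delta_{to}\) every correct party has voted, timed out, aborted, or left round \(r\), \textbf{Advance round} fires by \(t+\Delta_{to}+d_t\delta\).

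If none does, no correct party ever votes for \(r\): a party still in round \(r\) when its timer fires sets \(\var{timed\_out}[r]\), which blocks later votes. So your Case~2 argument applies verbatim, with one subcase you omitted. A correct party may leave round \(r\) via \(\var{disabled}[r]\) before its timer fires, since early timers or \(\var{aborted}\)-raises can reach the RN threshold, and it then never raises. For that subcase, apply Lemma~\ref{lemma: 2Delta-btn} at that confirmation time \(\tau\le t+\Delta_{to}\); this gives the bound \(\tau+2\delta\le t+\Delta_{to}+d_t\delta\).
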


\begin{corollary}[Eventual worst-case view duration, pipelined]
    \label{corollary: view-duration-pipelined}
    The eventual worst-case view duration (\Cref{def: view-duration}) of the pipelined-proposal version of \name is \(\Delta_{\text{to}}+d_t\delta\): \(5\Delta+2\delta\) for variant \(\shortname^s\) (Bracha-RBC) and \(8\Delta+4\delta\) for variant \(\shortname_{opt}^s\) (Opt-RBC).
\end{corollary}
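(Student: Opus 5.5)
The corollary follows by combining \Cref{lemma: bad-round-duration-pipelined} with \Cref{def: view-duration}, then substituting the RBC parameters of \Cref{tab:broadcast_comparison}. I take an arbitrary post-GST view (round) \(r\) and let \(t_1\) be the earliest time at which every correct party has entered round \(r\) or higher. Correct parties that are already beyond \(r\) at \(t_1\) satisfy the \(t_2\) condition trivially. For the others, I apply \Cref{lemma: bad-round-duration-pipelined} with \(t=t_1\), which gives that all correct parties move past \(r\) by \(t_1+\Delta_{\text{to}}+d_t\delta\). Hence \(t_2-t_1\le \Delta_{\text{to}}+d_t\delta\).

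One subtlety is that \Cref{lemma: bad-round-duration-pipelined} is phrased for parties that ``enter round \(r\)'' by time \(t\). In the pipelined version, a party may skip round \(r\) entirely via the \textbf{Advance round} rule triggered by \(\var{aborted}\) or \(\var{disabled}\) flags. So I will state explicitly that the lemma's conclusion is monotone: a party already in a round \(>r\) needs no further progress, and a party in round \(r\) is covered by the lemma. Since the paper states the lemma this way, the monotonicity remark suffices, and this bookkeeping step is the only mild obstacle.

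Finally, I instantiate the parameters. With \(\Delta_{\text{to}}=(d_s+d_t)\Delta\):
\begin{itemize}
\item For \(\shortname^s\) (Bracha-RBC), \(d_s=3\) and \(d_t=2\), giving \(5\Delta+2\delta\).
\item For \(\shortname_{opt}^s\) (Opt-RBC), \(d_s=4\) and \(d_t=4\), giving \(8\Delta+4\delta\).
\end{itemize}
The RN totality delay is \(2\), and it is dominated by \(d_t\) in both cases, consistent with the choice \(\max(d_t,2)=d_t\) made in the correctness sketch. This mirrors exactly how \Cref{corollary: view-duration} follows from \Cref{lemma: bad-round-duration}.
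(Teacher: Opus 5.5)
Your proof is correct and is essentially the paper's: apply \Cref{lemma: bad-round-duration-pipelined} with \(t=t_1\), then instantiate \(d_s,d_t\) from \Cref{tab:broadcast_comparison}. The one thing to correct is your ``subtlety'': a correct party never skips a round, because Step \textbf{Advance round} always moves from \(\var{curr\_round}\) to \(\var{curr\_round}+1\), and an \(\var{aborted}\) or \(\var{disabled}\) flag only lets the party leave round \(r\) right after entering it. So every correct party at round \(\ge r\) by \(t_1\) has actually entered round \(r\) by \(t_1\), which directly gives the lemma's hypothesis; this is how the paper identifies \(t_1\) with the time the last correct party enters round \(r\).
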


%% file: evaluation.tex
\section{Empirical Evaluation}%
\label{sec: evaluation}

To evaluate \name in practice, we implement two versions of \name, Mempool-\name and Dispersed-\name, and we perform experiments on a real-world, geo-distributed testbed.
Mempool-\name uses a signature-free version of Autobahn's shared mempool for data dissemination, and Dispersed-\name uses leader-driven erasure-coded dissemination.
We compare each version against baselines using the same data-dissemination approach: Autobahn~\cite{giridharanAutobahnSeamlessHigh2024} and Sailfish++~\cite{shresthaOptimisticSignatureFreeReliable2025a} for Mempool-\name, and DispersedSimplex~\cite{shoupSingSongSimplex2024} for Dispersed-\name.
For each variant, we evaluate the performance with Bracha-RBC~\cite{bracha1987asynchronous} and Opt-RBC~\cite{shrestha2025optimistic}.
We further include PQ-DispersedSimplex, which follows DispersedSimplex but replaces BLS signatures with quantum-secure signatures ML-DSA-65.
Beyond throughput and latency, we measure the bandwidth usage of the \name variants and evaluate their recovery behavior after a leader crash.

Before describing the results in detail, let us briefly state the key findings.
The results show that both \name variants are highly competitive with their signature-based, post-quantum-vulnerable counterparts.
Mempool-\name surpasses Autobahn on both latency and throughput axes.
Dispersed-\name matches DispersedSimplex's peak throughput with roughly 9 percent worse latency, whereas the obvious alternative of using post-quantum-secure signatures (PQ-DispersedSimplex) instead incurs a 2x latency cost.
Moreover, we find that the bandwidth/CPU costs of signature-freedom are negligible relative to data dissemination for leader-based protocols.

\paragraph{Implementation details.} Mempool-\name\footnote{Opt-Mempool-Simple-IT implementation: \url{https://github.com/qyu100/Simple-IT/tree/Opt-Mempool-Simple-IT}.}\textsuperscript{,}\footnote{Bracha-Mempool-Simple-IT implementation: \url{https://github.com/qyu100/Simple-IT/tree/Bracha-Mempool-Simple-IT}.} is built by modifying the Autobahn codebase~\cite{autobahn-artifact}.
The original Autobahn's mempool takes 3 message delays. Parties broadcast transaction batches. After gathering $f+1$ votes for the batch, they forward the batch digest along with the votes. Leaders then propose the certified batch that has been seen by at least $n-f$ parties, as consensus input.
Mempool-\name reuses Autobahn's dissemination layer in a signature-free form. The leader's proposal includes a set of batches, each of which has received $2f+1$ votes (these votes are broadcast and are not embedded in the proposal). Upon receiving the proposal, a non-leader checks whether it has seen at least $f+1$ votes for every batch in the proposal. If so, the non-leader casts its own consensus vote for the proposal.
Thus Mempool-\name's mempool takes 2 message delays.
The consensus protocol is replaced with \name, following \Cref{fig:lionfish-protocol-prose}. For a fair comparison, because Autobahn does not pipeline, we compare Autobahn with a non-pipelined version.

Dispersed-\name\footnote{Opt-Dispersed-Simple-IT implementation: \url{https://github.com/qyu100/Simple-IT/tree/Opt-Dispersed-Simple-IT}.}\textsuperscript{,}\footnote{Bracha-Dispersed-Simple-IT implementation: \url{https://github.com/qyu100/Simple-IT/tree/Bracha-Dispersed-Simple-IT}.} and DispersedSimplex\footnote{DispersedSimplex implementation: \url{https://github.com/qyu100/Simple-IT/tree/DispersedSimplex}.} are built by modifying the Sailfish codebase~\cite{sailfish-impl}.
Our DispersedSimplex baseline implements only the failure-free path, which is sufficient for our fault-free performance comparison because its failure path does not affect the fault-free performance.
PQ-DispersedSimplex\footnote{PQ-DispersedSimplex implementation: \url{https://github.com/qyu100/Simple-IT/tree/PQ-DispersedSimplex}.} is built on top of DispersedSimplex by replacing BLS signatures with ML-DSA-65 signatures~\cite{fips204} (implemented using the Rust \texttt{ml-dsa} crate~\cite{ml-dsa-crate}).
Dispersed-\name follows the pipelined protocol in \Cref{fig:lionfish-protocol-speculative-pipelining}, and all other dispersed baselines are pipelined.

For both Mempool-\name and Dispersed-\name, we evaluate two RBC variants: Bracha-RBC~\cite{bracha1987asynchronous} and Opt-RBC~\cite{shrestha2025optimistic}. In the figures, these variants are labeled with the prefixes ``Bracha-'' and ``Opt-'', respectively.
In the Dispersed-\name implementations, leaders disseminate blocks using the erasure-coded RBC techniques of Cachin and Tessaro~\cite{cachin2005asynchronous} and Shrestha et al.~\cite{shrestha2025optimistic}.

\paragraph{Experimental setup.} We conducted our evaluations on the Google Cloud Platform (GCP), deploying nodes evenly across five distinct regions: us-east1-b (South Carolina), us-west1-a (Oregon), europe-west1-b (Belgium), europe-north1-b (Finland) and asia-northeast1-a (Japan). We employed c2-standard-16 instances~\cite{gcp}, each featuring $16$ vCPUs, $64$ GB of memory, and up to $32$ Gbps network bandwidth. All nodes ran on Ubuntu $22.04$, and round-trip latencies between GCP regions range from roughly 30 ms to 260 ms (details appear in~\Cref{tab:ping-latencies}).

\begin{table}[ht]
\centering
\caption{Ping latencies (in ms) between GCP regions}
\footnotesize
\label{tab:ping-latencies}
\begin{threeparttable}
\begin{tabular}{l|c r r r r}
\toprule
\multicolumn{1}{c|}{} & \multicolumn{5}{c}{\textbf{Destination}*} \\
\midrule
\textbf{Source} & us-e1 & us-w1 & eu-w1 & eu-n1 & as-n1 \\
\midrule
us-east1-b              & 0.70 & 64.17 & 92.21 & 111.95 & 157.79 \\
us-west1-a              & 64.30 & 0.84 & 130.67 & 159.78 & 88.57 \\
europe-west1-b          & 92.22 & 130.69 & 0.59 & 31.17 & 220.67 \\
europe-north1-b         & 111.96 & 159.77 & 31.16 & 0.72 & 262.90 \\
asia-northeast1-a       & 157.07 & 90.94 & 220.60 & 262.86 & 0.93\\
\bottomrule
\end{tabular}
\begin{tablenotes}
\small
\item[*] Region names are abbreviated versions of the source regions.
\end{tablenotes}
\end{threeparttable}
\end{table}

In every experiment, each party generates a configurable number of transactions ($512$ random bytes each) for inclusion in the block.
Blocks in the Dispersed variants contain up to $50$ MB.
Each experiment runs for $120$ seconds with $50$ nodes, or for $60$ seconds with $10$ nodes.
In Dispersed variants, the leader generates the transactions when proposing.
In all the experiments, latency is measured as the average time between the creation of a transaction and its commit by 50\% of non-faulty nodes.
Throughput is measured by the number of committed transactions per second.

\paragraph{Methodology.} In our evaluations, we gradually increased the input transactions. As depicted in \Cref{fig:mempool-evaluation} and \Cref{fig:erasure-coding-evaluation}, the throughput increases with increasing load without increasing latency up to a certain point before reaching saturation. After saturation, the latency starts to increase while the throughput either remains consistent or slightly increases. In the subsequent figures, we report the throughput and latency just before reaching this saturation point.

\paragraph{Performance comparison under fault-free cases.}
\begin{figure*}[htbp]
  \centering
  \begin{minipage}{0.32\textwidth}
    \centering
    \includegraphics[width=\linewidth]{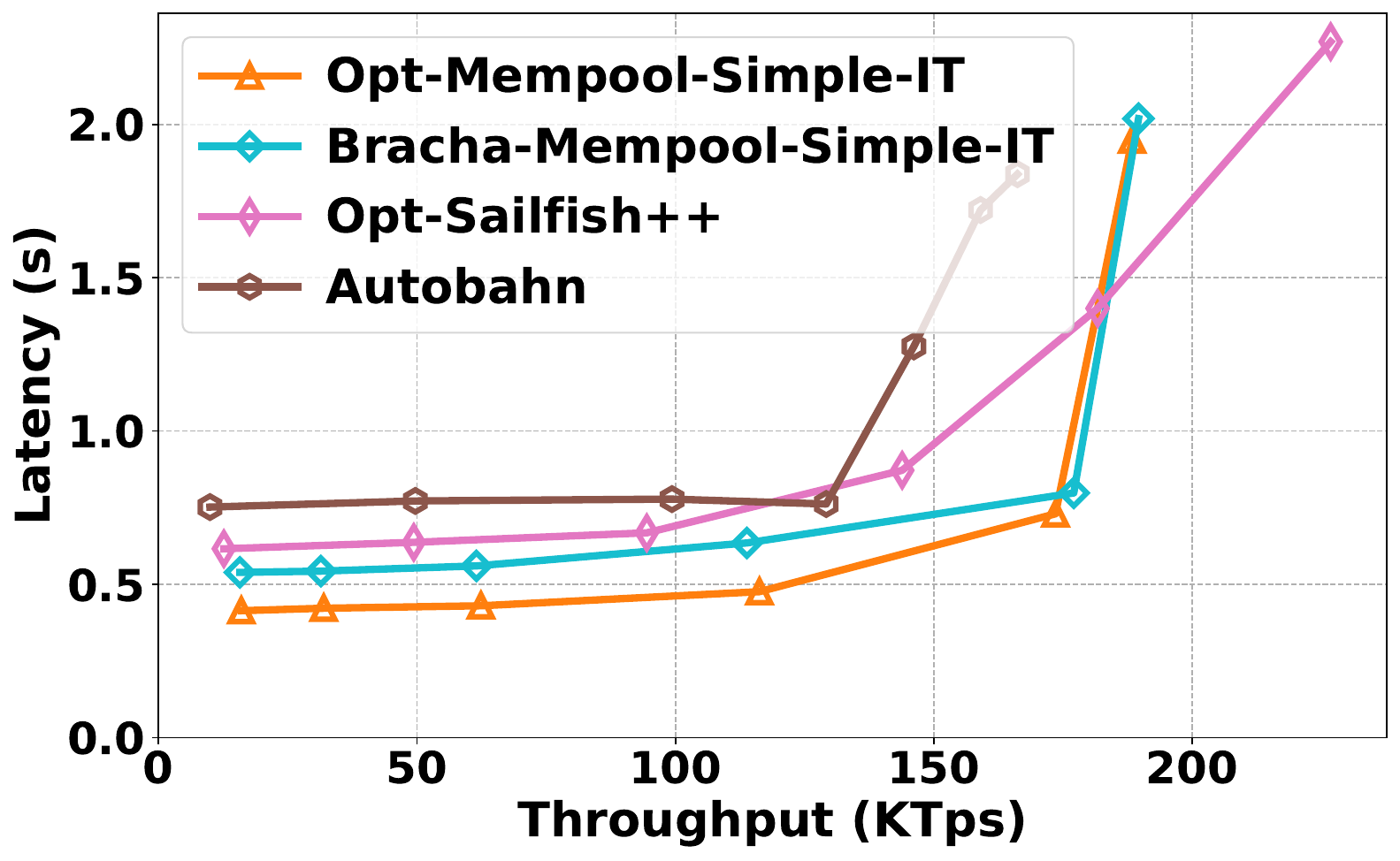}
    \caption{Latency vs. throughput of Mempool-Simple-IT and Autobahn with $50$ nodes.}
    \label{fig:mempool-evaluation}
  \end{minipage}
  \hfill
  \begin{minipage}{0.32\textwidth}
    \centering
    \includegraphics[width=\textwidth]{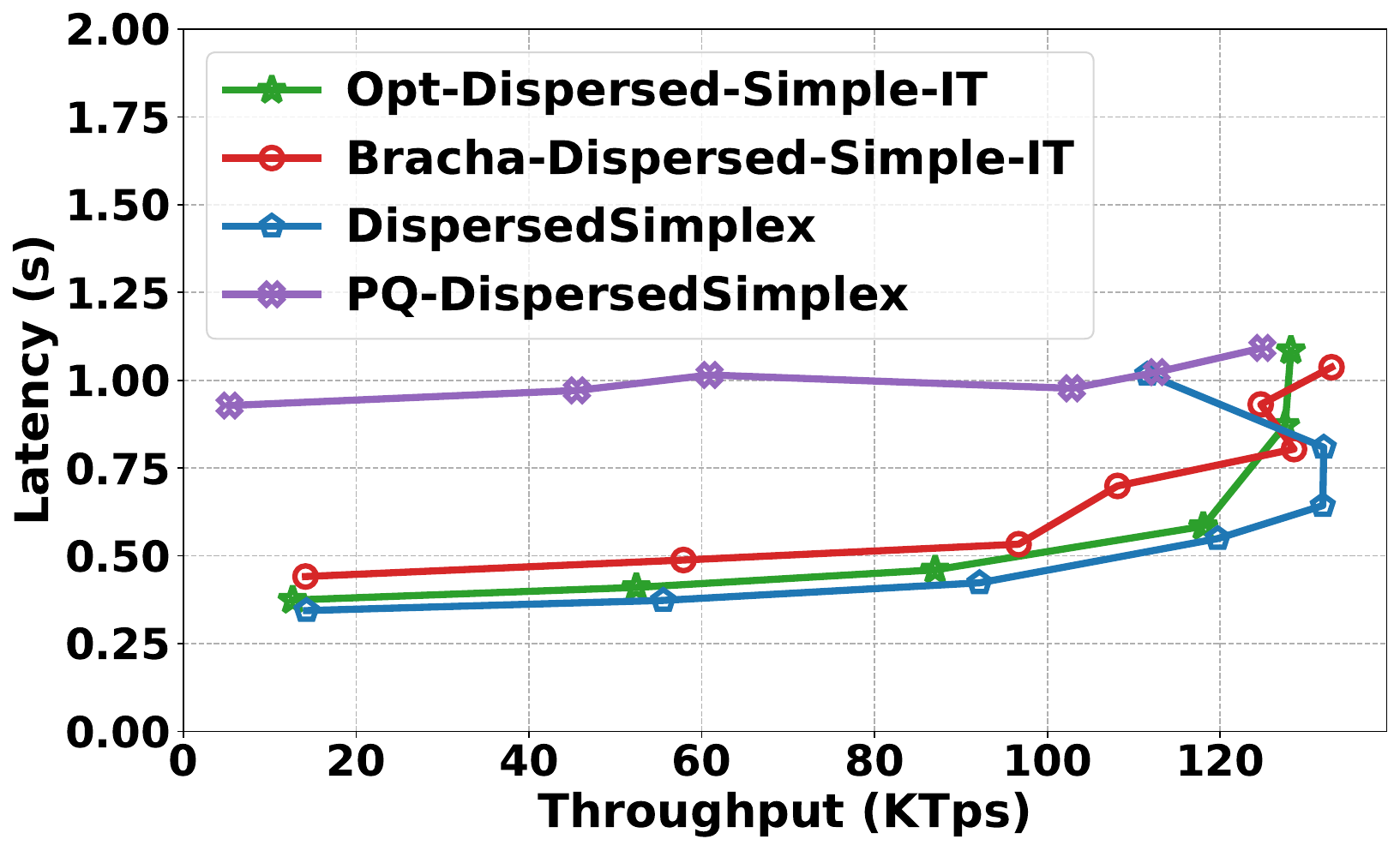}
    \caption{Latency vs. throughput of Dispersed-Simple-IT, DispersedSimplex and PQ-DispersedSimplex with $50$ nodes.}
    \label{fig:erasure-coding-evaluation}
  \end{minipage}
  \hfill
  \begin{minipage}{0.32\textwidth}
    \centering
    \includegraphics[width=\textwidth]{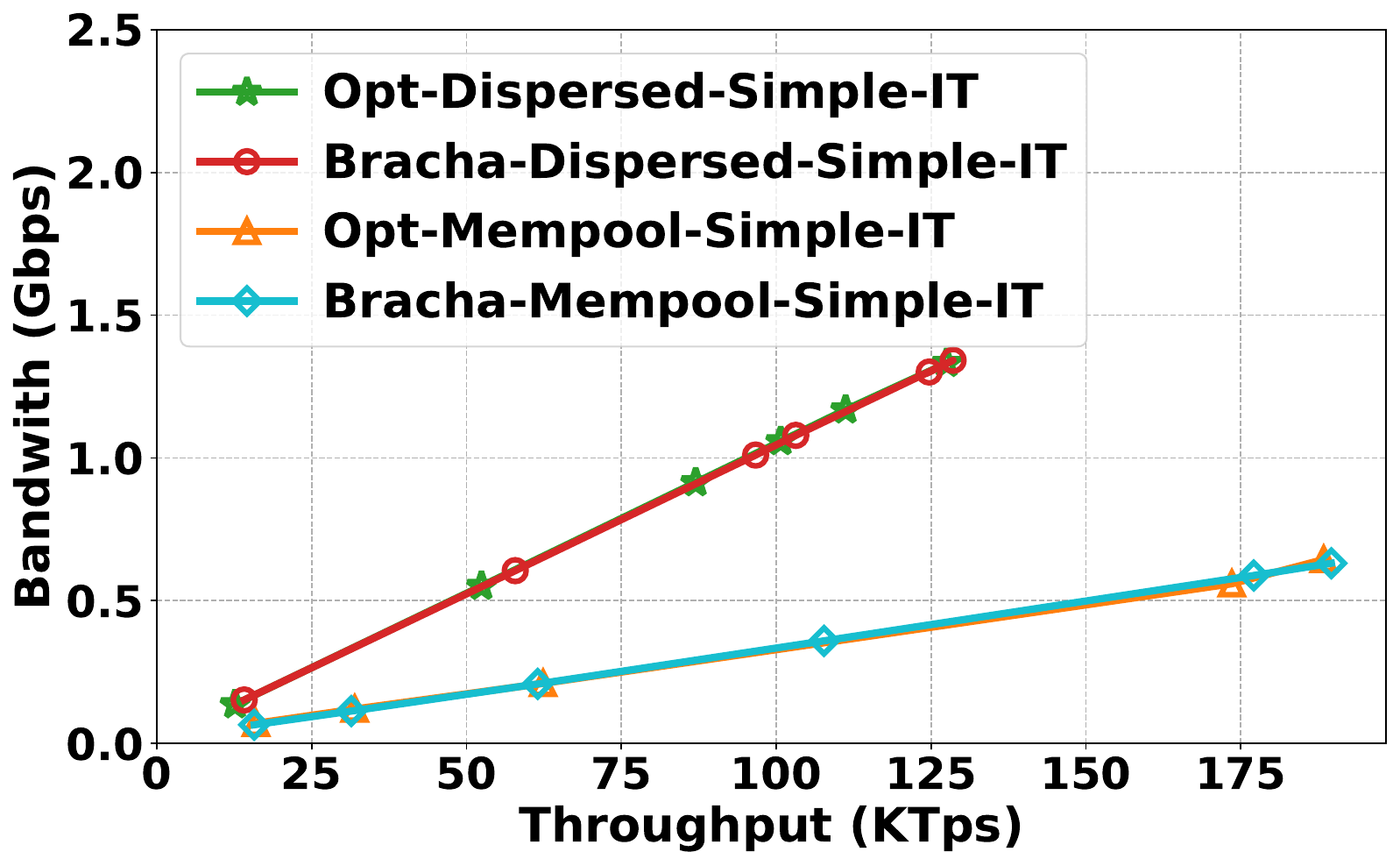}
    \caption{Outgoing bandwidth per node (Gbps/node) vs. throughput for the \name variants with $50$ nodes.}%
    \label{fig:bandwidth-evaluation}
  \end{minipage}
\end{figure*}
\Cref{fig:mempool-evaluation} compares Mempool-\name with Autobahn and with Opt-Sailfish++, a Sailfish++ baseline using Opt-RBC. 
In Opt-Sailfish++, a leader's block commits in three message delays, whereas a non-leader block needs at least five. This gives an estimated average of about $5.6$ message delays (see Footnote~\ref{fn:sailfish-latency}). 
Autobahn requires 3 message delays for its mempool phase and 5 message delays for consensus, for a total of 8 message delays.
Mempool-\name needs only 2 message delays for its mempool phase.
Consequently, Bracha-Mempool-\name takes 6 message delays in total and Opt-Mempool-\name takes 5 message delays.
Thus, Mempool-\name still retains a latency advantage.

\Cref{fig:erasure-coding-evaluation} compares the dispersed implementations, where in each round the leader disseminates its proposed block with erasure-coded RBC.
Because Dispersed-\name and DispersedSimplex use the same dissemination path, their latency gap primarily reflects the consensus path.
Bracha-Dispersed-\name has higher latency than Opt-Dispersed-\name because Bracha-RBC adds one message delay.
Before saturation, DispersedSimplex has only a modest latency advantage over Opt-Dispersed-\name, about $35$ ms on average.
In the $50$-node setting, this gap comes from two factors. First, Opt-Dispersed-\name waits for $40$ votes on its optimistic path, whereas DispersedSimplex needs a quorum of $34$ votes. Second, DispersedSimplex's signed quorum certificates can be forwarded directly. This could be beneficial under network congestion.
With $10$ nodes, this threshold gap disappears: both thresholds are $7$, so Opt-Dispersed-\name performs similarly to DispersedSimplex.
PQ-DispersedSimplex keeps the DispersedSimplex consensus path but replaces BLS signatures with ML-DSA, so its curve shows the post-quantum signature overhead in the same dispersed setting; we can see that, at about 1 second latency, PQ-DispersedSimplex has a latency roughly double that of even Dispersed-\name with Bracha-RBC (the slowest variant of \name in this figure).

\Cref{fig:bandwidth-evaluation} reports outgoing bandwidth per node, measured in Gbps/node.
Bandwidth grows almost linearly with throughput, and variants using the same dissemination approach have nearly overlapping curves.
This indicates that data dissemination dominates the network cost.
At their respective high-throughput points, the mempool variants use about $0.6$ Gbps/node near $170$ KTps, whereas the dispersed variants use about $1.3$ Gbps/node near $120$ KTps.
Thus, in our implementation, the shared-mempool design has a substantially lower outgoing-bandwidth cost per committed transaction than leader-driven erasure-coded RBC, since erasure-coded dispersal introduces a constant-factor communication overhead for each payload.

\paragraph{Performance comparison under failures.}
We evaluate the performance of Opt-Mempool-\name\footnote{Opt-Mempool-Simple-IT failure implementation: \url{https://github.com/qyu100/Simple-IT/tree/Opt-Mempool-Simple-IT-Failure}.} and Opt-Dispersed-\name\footnote{Opt-Dispersed-Simple-IT failure implementation: \url{https://github.com/qyu100/Simple-IT/tree/Opt-Dispersed-Simple-IT-Failure}.} under failures with $10$ nodes. In this experiment, the proposal of a leader located in europe-west1-b (Belgium) is skipped at approximately 20 seconds, simulating a leader crash for that round. After the timeout expires, the leader recovers and continues committing transactions. We set the timeout to 3 seconds and ran the experiment for 60 seconds. The x-axis shows the execution time, and the y-axis reports throughput computed over a 3-second sliding window.

As shown in \Cref{fig:failure-erasure-evaluation}, the throughput drops immediately when the leader's proposal is skipped, and recovers to its pre-failure level after roughly 10 seconds. In \Cref{fig:failure-mempool-evaluation}, the throughput also drops at the skipped-leader round, but it is followed by an immediate spike. This spike is caused by burst commits: although the crashed leader stops proposing for that round, the mempool continues producing blocks. As a result, leader in the next round can commit more pending blocks at once, leading to a temporary throughput peak.
\begin{figure}[t]
    \centering
    \includegraphics[width=0.32\textwidth]{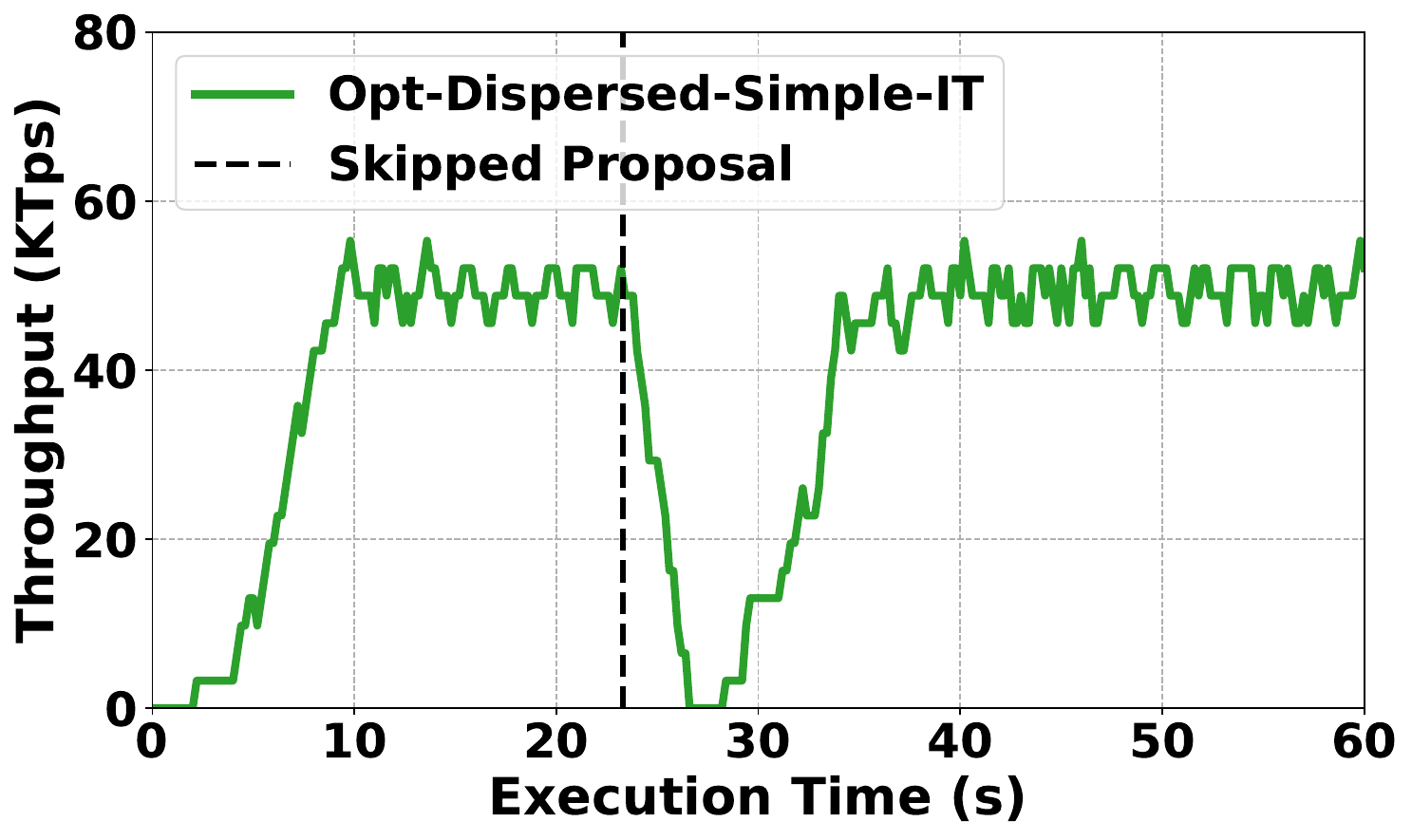}
    \caption{Throughput vs. execution time of Opt-Dispersed-\name with $10$ nodes.}
    \label{fig:failure-erasure-evaluation}
\end{figure}
\begin{figure}[t]
    \centering
    \includegraphics[width=0.32\textwidth]{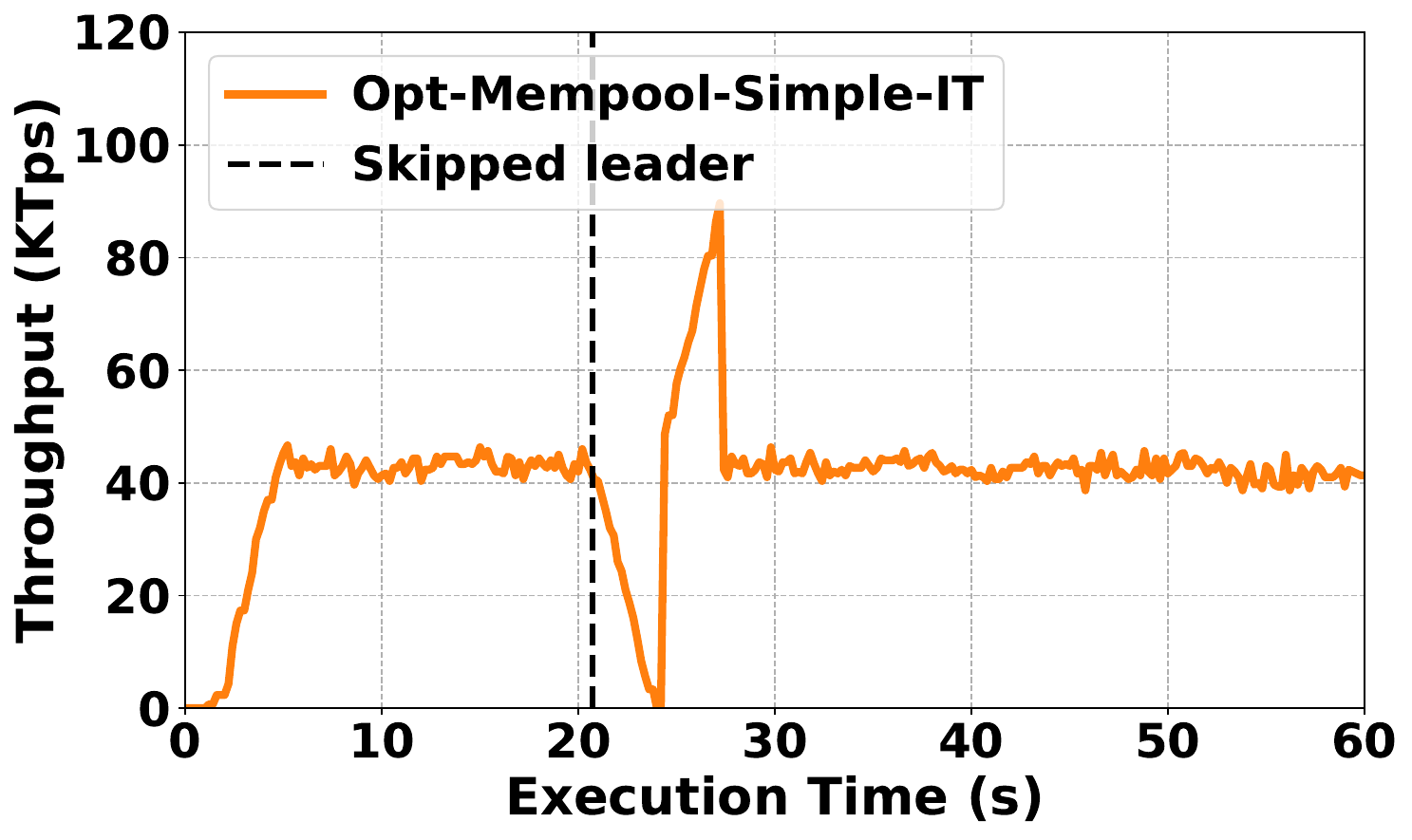}
    \caption{Throughput vs. execution time of Opt-Mempool-\name with $10$ nodes.}
    \label{fig:failure-mempool-evaluation}
\end{figure}
\paragraph{Discussion on erasure-coding costs.}
To evaluate the cost of erasure-coded dissemination, \Cref{tab:ec-coding-cost} reports the measured CPU-side cost of constructing and reconstructing erasure-coded proposals in the $50$-node deployment. 
All runs use Reed-Solomon codes with $18$ data shards and $32$ parity shards. 
Proposal construction includes block serialization, Reed-Solomon encoding, and constructing a Merkle tree over the encoded shards.
For reconstruction, we measure the time after a node has collected enough erasure-coded fragments to recover the block payload and verify that the recovered data matches the Merkle root in the header.
The latency column reports the latency of Opt-Dispersed-\name with $50$ nodes.
These costs grow roughly linearly with the shard size.
\begin{table}[ht]
\centering
\caption{Erasure-coding costs for Dispersed-\name in the $50$-node deployment.}
\label{tab:ec-coding-cost}
\footnotesize
\setlength{\tabcolsep}{2.5pt}
\begin{threeparttable}
\begin{tabular}{@{}lcccc@{}}
\toprule
\makecell{\textbf{Block}\\\textbf{size}} &
\makecell{\textbf{Shard}\\\textbf{size}} &
\makecell{\textbf{Proposal construction}\\\textbf{p50/p95}} &
\makecell{\textbf{Reconstruction}\\\textbf{p50/p95}} &
\textbf{Latency} \\
\midrule
$10$ MB & $0.56$ MB & $50/54$ ms & $10/12$ ms & $460$ ms \\
$30$ MB & $1.69$ MB & $131/140$ ms & $26/31$ ms & $872$ ms \\
$50$ MB & $2.82$ MB & $212/227$ ms & $42/49$ ms & $1085$ ms \\
\bottomrule
\end{tabular}
\end{threeparttable}
\end{table}

%% file: related_work.tex
\section{Related Work}%
\label{sec: related work}

\paragraph{Post-Quantum and Signature-Free BFT.}
A direct way to make BFT protocols quantum-secure is to replace classical digital signatures with post-quantum signatures.
The natural candidates are the NIST-standardized ML-DSA~\cite{fips204}, the likely default for performance-oriented systems, and the hash-based SLH-DSA~\cite{fips205}, a more conservative but less efficient alternative.
However, post-quantum signatures are larger and more expensive to verify than classical signatures, increasing the latency on the consensus critical path (as shown in~\Cref{fig:erasure-coding-evaluation}). This motivates removing public-key signatures from the BFT protocol entirely.

Signature-free BFT protocols rely only on authenticated channels after setup and are closely related to unauthenticated and information-theoretic BFT.
The earliest is Algorithm BFT from Castro's thesis~\cite[Chapter~3]{castro_thesis}, a PBFT variant that replaces signatures with MAC vectors, achieving $3\delta$ good-case latency at the cost of $O(n^3)$ communication per view.
IT-HS~\cite{abrahamInformationTheoreticHotStuff2021} achieves optimal resilience and responsiveness with $O(n^2)$ communication per view, but has $6\delta$ good-case latency. TetraBFT~\cite{yuTetraBFTReducingLatency2024b} reduces this to $5\delta$, while Forget-IT~\cite{abrahamForgetITOptimalGoodCase2026} achieves the optimal $3\delta$ latency.
Shoup~\cite[Section~7]{shoupSingSongSimplexEprint} also sketches a signature-free variant of DispersedSimplex using Bracha-style echo/ready certification, roughly doubling latency to $6\delta$.
However, these protocols have not been implemented or evaluated in high-performance settings, and their protocol complexity can make implementation and practical optimizations such as pipelining and speculation more challenging.


\paragraph{Leader-Based BFT Consensus.}
Leader-based partially synchronous protocols~\cite{castro1999practical,yin2019hotstuff,gelashvili2022jolteon,chanSimplexConsensusSimple2023,shresthaHydrangeaOptimisticTwoRound2025,giridharanAutobahnSeamlessHigh2024,tonkikh2025raptr} achieve low latency by leveraging signatures. In particular, PBFT~\cite{castro1999practical}, Simplex~\cite{chanSimplexConsensusSimple2023}, Hydrangea~\cite{shresthaHydrangeaOptimisticTwoRound2025} and Raptr~\cite{tonkikh2025raptr} are protocols with optimal $3\delta$ good-case latency.

To improve throughput, many systems with leader-based protocols decouple data dissemination from ordering so that consensus runs on small digests rather than raw transaction data.
Examples include Narwhal~\cite{danezis2022narwhal} and Autobahn~\cite{giridharanAutobahnSeamlessHigh2024}. Another approach, used by DispersedSimplex~\cite{shoupSingSongSimplex2024}, employs leader-driven erasure-coded dissemination to balance bandwidth while preserving low latency. 

\paragraph{DAG-Based BFT Consensus.}
DAG-based protocols disseminate data in parallel and derive an ordering from the resulting DAG. Narwhal and Tusk~\cite{danezis2022narwhal} separate data availability from ordering, while Bullshark~\cite{spiegelman2022bullshark}, Shoal~\cite{spiegelman2023shoal}, Shoal++~\cite{balaji2024shoal}, Sailfish~\cite{shrestha2024sailfish} and Mysticeti~\cite{babel2023mysticeti} further reduce latency. Sailfish++~\cite{shrestha2025optimistic} extends Sailfish to be signature-free and achieves $3\delta$ latency under optimistic conditions, but incurs cubic message complexity per view.

\paragraph{Reliable Broadcast.}
Bracha-RBC~\cite{bracha1987asynchronous} tolerates optimal $f<n/3$ Byzantine faults and has $3\delta$ good-case latency. 
For long messages, erasure coding yields communication-efficient RBC: Cachin and Tessaro~\cite{cachin2005asynchronous} obtain $O(nL+\kappa n^2\log n)$ communication for an $L$-bit input.
Signature-free RBC can complete in two steps under a stronger fault tolerance assumption of $n\geq 4f$~\cite{abraham2022good}.
Shrestha et al.~\cite{shrestha2025optimistic} give Opt-RBC, an optimistic signature-free RBC that completes in two steps under stronger honesty assumptions while preserving optimal resilience otherwise. 

%% file: conclusion.tex
\section{Conclusion}%
\label{sec: conclusion}
This paper presents \name, a signature-free, leader-based BFT consensus protocol with optimal resilience that commits in 4 message delays in the good case (one more than the optimum) and 3 on its optimistic path. 
By replacing signature-based quorum certificates with Reliable Broadcast and a new Reliable Notification primitive, \name retains the simplicity of authenticated protocols such as Simplex, making it amenable to practical optimizations such as pipelining and speculative proposals that reduce its optimistic block time to a single message delay. 
We give the first practical implementation of a signature-free, leader-based BFT protocol, pairing it with the two block-dissemination approaches that dominate production systems. Our geo-distributed evaluation shows that both variants match or outperform their state-of-the-art, quantum-vulnerable counterparts.
These results demonstrate that post-quantum security in BFT consensus need not come at the cost of performance, and that signature-free protocols are a practical path to quantum-safe, high-performance replicated systems.

%% file: security_analysis_appendix.tex
\section{Deferred Security Proofs}%
\label{app:security-proofs}

\newenvironment{restatedlemma}[2][]{%
  \par\medskip\noindent\textbf{Lemma~\ref{#2}\if\relax\detokenize{#1}\relax\else~(#1)\fi.}\ \itshape
}{\par\medskip}
\newenvironment{restatedtheorem}[2][]{%
  \par\medskip\noindent\textbf{Theorem~\ref{#2}\if\relax\detokenize{#1}\relax\else~(#1)\fi.}\ \itshape
}{\par\medskip}
\newenvironment{restatedcorollary}[2][]{%
  \par\medskip\noindent\textbf{Corollary~\ref{#2}\if\relax\detokenize{#1}\relax\else~(#1)\fi.}\ \itshape
}{\par\medskip}

This appendix gives the proofs deferred from~\Cref{sec: security}; statements that also appear in the main body are restated with their original numbers.

\subsection{Liveness}

\begin{lemma}
    \label{lemma: 2Delta-btn}
    If a correct party \Call{rn\_confirm}{}s the round-\(r\) timeout at time \(t\), then all correct parties \Call{rn\_confirm}{} it by time \(\max(\text{GST}, t)+2\delta\).
\end{lemma}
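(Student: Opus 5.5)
We argue directly from the reliable-notification protocol of~\Cref{fig:tbn}, using the Confirm and Cascade rules; we do not appeal to the abstract properties of~\Cref{def: reliable-notification}. Fix the round-\(r\) timeout flag \(e=\tuple{\text{timeout},r}\), and suppose a correct party \(p\) invokes \Call{rn\_confirm}{$e$} at time \(t\).

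\textbf{Step 1.} By the Confirm rule, by time \(t\) party \(p\) has received \(\sig{\Accept, e}\) messages from \(2f+1\) distinct parties. At most \(f\) of these are Byzantine, so at least \(f+1\) correct parties sent \(\sig{\Accept, e}\), and each of them sent it to all parties no later than \(t\).

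\textbf{Step 2.} Let \(t_1=\max(\text{GST},t)\). Each of these accepts is sent by a correct party at or before \(t\). By the eventual-synchrony assumption, a message sent at time \(s\) between correct parties arrives by \(\max(\text{GST},s)+\delta\). Hence every correct party \(q\) has received \(f+1\) \(\sig{\Accept, e}\) messages by \(t_1+\delta\). By the Cascade rule, \(q\) has then broadcast its own \(\sig{\Accept, e}\) by time \(t_1+\delta\), unless it had already sent one earlier, which only helps.

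\textbf{Step 3.} All correct parties, at least \(n-f\ge 2f+1\) of them, have therefore broadcast \(\sig{\Accept, e}\) by time \(t_1+\delta\). Since \(t_1+\delta\ge \text{GST}\), these messages reach every correct party by \(t_1+2\delta\). By then each correct party holds \(2f+1\) accepts and triggers the Confirm rule, so it invokes \Call{rn\_confirm}{$e$} by \(\max(\text{GST},t)+2\delta\).

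\textbf{Main subtlety.} The delicate point is the timing of messages sent before GST. Step 2 needs that such a message is delivered by \(\text{GST}+\delta\). This is the standard reading of the eventual-synchrony model, and it is the reason the bound is stated with \(\max(\text{GST},t)\) rather than \(t\). If the model only guaranteed bounded delay for messages sent after GST, we would have to add retransmission. We expect the paper to use the standard reading, under which the argument above is complete.
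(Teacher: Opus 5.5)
Your proof is correct and follows the paper's argument step for step: the \(f+1\) correct \(\sig{\Accept, e}\) senders reach everyone by \(\max(\text{GST},t)+\delta\) and trigger Cascade, and the resulting \(n-f\ge 2f+1\) accepts trigger Confirm one delay later. Your explicit remark that messages sent before GST are delivered by \(\text{GST}+\delta\) states an assumption that the paper's proof uses without saying so.
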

\begin{proof}
    Suppose a party \party{i} \Call{rn\_confirm}{}s the round-\(r\) timeout at time \(t\). This happens after receiving \(2f+1\) \(\Accept\) messages. At least \(2f+1-f= f+1\) of those were sent by correct parties. As such, they will be received by all other correct parties at the latest at time \(\max(\text{GST}, t)+\delta\). At this moment, these parties execute the \textit{Cascade} rule of the reliable notification protocol and send an \(\Accept\) message. Therefore, at least \(n-f\) parties will send this type of message by time \(\max(\text{GST}, t)+\delta\). One message delay later, at time \(\max(\text{GST}, t)+2\delta\), all correct parties receive these messages and \Call{rn\_confirm}{} the round-\(r\) timeout.
\end{proof}

\begin{lemma}
    \label{lemma: optimistic2}
    If a correct party enters a round \(r\) at time \(t\geq \text{GST}\), then all correct parties enter round \(r\) by \(t + d_t\delta\).
\end{lemma}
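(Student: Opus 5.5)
The plan is to induct on rounds, or more precisely to look at the step that carried the first correct party into round \(r\). A correct party enters round \(r\) only through \textbf{Advance round} from round \(r-1\) (or via \textbf{Init} when \(r=1\), where all parties start together). So at time \(t\), the party \party{i} that enters \(r\) has either (a) \(\var{disabled}[r-1]=\True\), or (b) \(\var{safe}[r-1]=\True\) together with \(\var{voted}\) or \(\var{timed\_out}\) in round \(r-1\). I will show that every other correct party obtains the same enabling condition by \(t+d_t\delta\). It then advances as soon as it reaches round \(r-1\).

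In case (a), \party{i} rn-confirmed the round-\((r-1)\) timeout by time \(t\). By Lemma~\ref{lemma: 2Delta-btn}, every correct party rn-confirms it, and so sets \(\var{disabled}[r-1]\), by \(t+2\delta\le t+d_t\delta\), since \(d_t\ge 2\) for both RBC instantiations.

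In case (b), \(\var{safe}[r-1]\) at \party{i} means \party{i} rb-delivered the proposal of \(r-1\) and of every ancestor, and confirmed every skipped round between consecutive ancestors as disabled. All of these events happened by time \(t\) at \party{i}. By the RBC totality delay \(d_t\), each proposal is rb-delivered by every correct party by \(t+d_t\delta\). By Lemma~\ref{lemma: 2Delta-btn}, each disabled flag propagates by \(t+2\delta\). Hence every correct party marks \(r-1\) safe by \(t+d_t\delta\).

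The remaining condition for \textbf{Advance round} is that a party has voted or timed out in round \(r-1\). Once a party is in round \(r-1\) with \(\var{safe}[r-1]=\True\), the \textbf{Vote} step fires immediately unless \(\var{timed\_out}\) already holds. Either way, the disjunct \(\var{voted}\vee\var{timed\_out}\) becomes true at once, so the party advances without delay.

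\textbf{Main obstacle.} The step I expect to be hardest is showing that every correct party is already in round \(r-1\), or at least at \(r-1\), by \(t+d_t\delta\). The inductive hypothesis only gives this relative to the time the first party entered \(r-1\), and naive induction compounds the delays.

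To avoid this, I would prove a stronger claim by induction on \(r\): for every \(r'\le r\), by \(t+d_t\delta\) every correct party has the enabling condition of round \(r'-1\). Rounds below \(r\) were all resolved at \party{i} by time \(t\), each via safe-plus-vote/timeout or via disable. Each such resolution propagates to all correct parties within \(d_t\delta\) of \(t\), by RBC totality and Lemma~\ref{lemma: 2Delta-btn}. Taken together, these let any lagging correct party cascade through rounds \(1,\dots,r\) instantaneously, since local steps take zero time. Some care is needed in two places:

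\begin{itemize}
\item Parties that lag with \(\var{safe}\) rounds but no vote: they still satisfy the condition, because they can vote or time out instantly.
\item Lemma~\ref{lemma: 2Delta-btn} for events that occurred before \(t\): since \(t\ge\) GST, the bound \(\max(\text{GST},t')+2\delta\le t+2\delta\) holds for every earlier event time \(t'\le t\).
\end{itemize}
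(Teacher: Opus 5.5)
Your proposal is correct and follows essentially the same argument as the paper. Every round below \(r\) was resolved at the entering party by time \(t\), either marked safe (via rb-deliveries and rn-confirmations along its ancestor chain) or disabled (via an rn-confirmation). RBC totality and Lemma~\ref{lemma: 2Delta-btn}, with \(d_t\ge 2\), propagate these resolutions to every correct party by \(t+d_t\delta\), so laggards cascade through the rounds instantly. Your version is more careful than the paper's on two points, the \(\var{voted}/\var{timed\_out}\) disjunct of \textbf{Advance round} and events that occurred before \(t\). The ``stronger induction'' you describe is really just this direct argument applied to all earlier rounds at once.
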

\begin{proof}
    According to the protocol, each party enters round \(r\) after, for every preceding round, either marking it safe (which follows rb-delivering its proposal) or disabling it (which follows rn-confirming its timeout).
    Thus, once a correct party enters round \(r\) and the totality delay \(d_t\) of the reliable broadcast used and the totality delay of the reliable notification protocol have elapsed, all correct parties have entered round \(r\).
    By~\Cref{lemma: 2Delta-btn} and~\Cref{tab:broadcast_comparison}, the totality delay of reliable broadcast (\(d_t = 2\) or \(d_t = 4\) message delays, depending on the protocol used) is always greater than or equal to the totality delay of the reliable notification protocol (\(2\) message delays).
    Thus, we conclude that all correct parties enter round \(r\) by time \(\max(\text{GST}, t) + d_t\delta\).
\end{proof}

\begin{lemma}
    \label{lemma: deliver-information}
    If a correct party rb-broadcasts a proposal for round \(r\) at time \(t\geq \text{GST}\) whose parent is round \(r'\), then all correct parties will have rb-delivered the proposal of round \(r'\) and disabled all rounds \(r''\) such that \(r'<r''<r\) by time \(t+d_t\delta\).
\end{lemma}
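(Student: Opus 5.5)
The plan is to argue directly from the guard of the \textbf{Propose} step, which is the only way a correct party rb-broadcasts a round-\(r\) proposal in the non-speculative protocol. When the correct leader \(L_r\) rb-broadcasts \(\tuple{L_r, r, \tuple{r', b}}\) at time \(t\), it has just evaluated \(\fn{SafeParent}(r, r')\) to \(\True\). By the definition of \(\fn{SafeParent}\), this means \(L_r\) locally has \(\var{safe}[r'] = \True\) and \(\var{disabled}[r''] = \True\) for every \(r' < r'' < r\). So at time \(t\), \(L_r\) itself has already done two things. It has rb-delivered the round-\(r'\) proposal, since \(\var{safe}[r']\) implies \(\var{proposal}[r'] \neq \bot\), or else \(r' = 0\), in which case the genesis proposal is known to all parties initially. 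And it has rn-confirmed the timeout of each intermediate round \(r''\).

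Next I will transfer these local facts to all correct parties using the totality guarantees of the two subprotocols.

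For the parent, by the definition of the RBC totality delay \(d_t\) (\Cref{tab:broadcast_comparison}), once a correct party has rb-delivered a message by time \(t \geq \text{GST}\), every correct party rb-delivers it within \(d_t\) message delays. Here I will use that the first correct delivery happened no later than \(t\), so the deadline \(t + d_t\delta\) still holds.

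For each intermediate round \(r''\), \Cref{lemma: 2Delta-btn} gives that all correct parties rn-confirm \(\tuple{\text{timeout}, r''}\) by \(\max(\text{GST}, t_{r''}) + 2\delta \leq t + 2\delta\), where \(t_{r''} \leq t\) is the time at which \(L_r\) confirmed it. Upon confirming, each party sets \(\var{disabled}[r''] \gets \True\) by the \textbf{Disable} step. Since \(2 \leq d_t\) for both RBC instantiations (as already noted in the proof of \Cref{lemma: optimistic2}), this is within \(t + d_t\delta\).

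The main subtlety is the timing of \(L_r\)'s own earlier deliveries and confirmations, which may have occurred before GST. The totality-delay bounds are measured from \(\max(\text{GST}, \cdot)\), so I will state them as bounds relative to \(\max(\text{GST}, t_0)\) for the earlier event time \(t_0 \le t\), and then use \(t \geq \text{GST}\) to conclude \(\max(\text{GST}, t_0) \leq t\). For the RBC I will also make explicit that \(d_t\) is interpreted as in \Cref{lemma: 2Delta-btn}: messages in flight at GST arrive within \(\delta\) after GST, so counting from \(\max(\text{GST}, t_0)\) is valid.

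Finally, I will remark that the same argument applies to the \textbf{Speculative propose} step only after the relevant rounds become safe. For this lemma, however, only the \textbf{Propose} step of \Cref{fig:lionfish-protocol-prose} needs to be handled.
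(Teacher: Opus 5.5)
Your proposal is correct and follows essentially the same route as the paper: you read off \(\fn{SafeParent}(r,r')\) from the \textbf{Propose} guard at time \(t\), then transfer the leader's rb-delivery of round \(r'\) and its rn-confirmations of the intermediate timeouts to all correct parties via the RBC totality delay and Lemma~\ref{lemma: 2Delta-btn}, using \(d_t \ge 2\). Your extra care about the leader's earlier events possibly preceding GST (bounding from \(\max(\text{GST}, t_0) \le t\)) and about the genesis case \(r'=0\) makes explicit details that the paper's proof leaves implicit.
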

\begin{proof}
    A correct leader \(L_r\) rb-broadcasts its proposal for round \(r\) upon entering this round, which requires \(\fn{SafeParent}(r, r')\) to hold, and hence requires \(L_r\) to have set \(\var{safe}[r'] = \True\) (after rb-delivering the proposal of round \(r'\)) and to have set \(\var{disabled}[r''] = \True\) for all rounds \(r''\) such that \(r'<r''<r\).
    If this happens at time \(t\geq \text{GST}\), the totality delay of reliable broadcast and Lemma~\ref{lemma: 2Delta-btn} guarantee that all correct parties will have rb-delivered the proposal of round \(r'\) and rn-confirmed the timeouts of all rounds \(r''\) such that \(r'<r''<r\) by time \(t+d_t\delta\).
\end{proof}

\begin{lemma}
    \label{lemma: liveness-gst}
    If the first correct party to enter round \(r\) does so at time \(t\) such that \(t\geq \text{GST}\), the leader \(L_r\) is correct, and the round timeout is larger than \((d_t+d_s)\delta\), then every correct party will commit round \(r\).
\end{lemma}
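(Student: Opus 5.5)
Let \(t\) be the time at which the first correct party enters round \(r\), and argue that every correct party votes \(\tuple{\Commit, r}\) before its round-\(r\) timer expires. Commitment then follows directly: all \(n-f\) correct parties send \(\tuple{\Commit, r}\) to everyone, so every correct party receives \(n-f\) such messages and sets \(\var{committed}[r]\gets\True\).

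First, bound the entry times. By \Cref{lemma: optimistic2}, since \(t\geq\text{GST}\), every correct party, and in particular \(L_r\), enters round \(r\) in the interval \([t, t+d_t\delta]\). When \(L_r\) enters, it proposes: it rb-broadcasts \(\tuple{L_r, r, \tuple{r', b}}\) with \(\fn{SafeParent}(r, r')\) holding locally. By Validity of reliable broadcast and the definition of the worst-case delay \(d_s\), every correct party rb-delivers this proposal by time \(t+d_t\delta+d_s\delta\).

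Next, show that this proposal becomes safe at every correct party by the same deadline. By \Cref{lemma: deliver-information}, applied at the proposal time \(t_L\le t+d_t\delta\), every correct party has rb-delivered the round-\(r'\) proposal and disabled all rounds strictly between \(r'\) and \(r\) by time \(t_L+d_t\delta\). This bound alone would give \(t+2d_t\delta\), which exceeds the target, so it must be tightened. The fix is to note that \(L_r\) already had these facts (the safety of \(r'\) and the disabling of the intermediate rounds) before the first correct party entered round \(r\), or at least at time \(t_L\). Correct parties entering round \(r\) needed, for every earlier round, either a safe proposal or a disable. I expect this to be the main obstacle: a correct party \(p\) may have advanced past some round \(r''\in(r',r)\) via a safe proposal rather than a disable. \(p\) will still receive the disable of \(r''\) by the totality argument, but this must happen by time \(t+(d_t+d_s)\delta\).

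My plan for the obstacle is to argue that all these facts were established at \(L_r\) no later than time \(t_L\le t+d_t\delta\), and that \(d_t\le d_s\) holds for both RBC instantiations (\(2\le 3\) for Bracha-RBC and \(4\le 4\) for Opt-RBC, by \Cref{tab:broadcast_comparison}). Totality of RBC, together with \Cref{lemma: 2Delta-btn} (whose \(2\delta\le d_t\delta\)), then propagates the safety of \(r'\), recursively since \(\var{safe}[r']\) at \(L_r\) depends on ancestors that \(L_r\) also holds, and the intermediate disables to everyone by \(t_L+d_t\delta\le t+(d_t+d_s)\delta\). The recursion is absorbed because each ancestor fact reached \(L_r\) before \(t_L\) and so propagates within \(d_t\delta\) of \(t_L\). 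Hence by time \(t+(d_t+d_s)\delta\) every correct party has \(\var{safe}[r]=\True\).

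Finally, compare with the timers. Each correct party starts its round-\(r\) timer at its entry time, which is at least \(t\), so its timer fires strictly after \(t+(d_t+d_s)\delta\), since the timeout exceeds \((d_t+d_s)\delta\). A correct party cannot have left round \(r\) before voting without \(\var{disabled}[r]\). That would require, by Validity of reliable notification, \(n-2f\) correct parties to have timed out, which is impossible before any correct timer fires. So each correct party is still in round \(r\) with \(\var{timed\_out}=\False\) when \(\var{safe}[r]\) becomes true, and the \textbf{Vote} step fires. Commitment follows as described in the first paragraph.
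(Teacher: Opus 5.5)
Your proof is correct and follows essentially the same route as the paper's: Lemma~\ref{lemma: optimistic2} bounds every entry, including $L_r$'s proposal, by $t+d_t\delta$. The round-$r$ proposal and its parent information then reach every correct party by $t+(d_t+d_s)\delta$. Before then no correct timer fires, and hence, by Validity of reliable notification, no correct party disables round $r$, so every correct party votes and commits.

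Two of your steps make rigorous what the paper states loosely (it simply asserts the parent information is everywhere by $t+d_t\delta$): your use of $d_t\le d_s$ from Table~\ref{tab:broadcast_comparison} to absorb the $t_L+d_t\delta$ bound of Lemma~\ref{lemma: deliver-information}, and your recursive propagation of the parent's own safety. Your remark that the $t+2d_t\delta$ bound "exceeds the target" is misphrased, since $d_t\le d_s$ is exactly what shows it does not.
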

\begin{proof}
    All actions in the round happen after the GST, so the maximum message delay is bounded by \(\delta\). The first correct party to enter the round does so after marking round \(r-1\) safe (through reliable broadcast) or disabling round \(r-1\) (through the reliable notification protocol). By Lemma~\ref{lemma: optimistic2}, all other correct parties will enter round \(r\) at the latest one totality delay later, by time \(t+d_t\delta\).

    By Lemma~\ref{lemma: optimistic2}, at the latest at time \(t+d_t\delta\), correct parties will have rb-delivered all proposals and disabled all timed-out rounds for all rounds previous to \(r\). Consequently, at the latest at this moment, the leader \(L_r\) enters the round and rb-broadcasts its proposal for round \(r\). It can do this since it must have marked safe or disabled all rounds previous to \(r\) by this point. All correct parties will rb-deliver the proposal at the latest by time \((t+d_t\delta)+d_s\delta\).

    Additionally, for a correct party to send a \(\tuple{\Commit, r}\) message, it must rb-deliver the proposal of round \(r\) and verify that its parent is safe (so that \(\var{safe}[r] = \True\)) before its round timer expires. Lemma~\ref{lemma: deliver-information} guarantees that the parent's proposal has been rb-delivered and the intervening rounds disabled by time \(t+d_t\delta\); since the round-\(r\) proposal itself is rb-delivered by time \((t+d_t\delta)+d_s\delta\), every correct party marks round \(r\) safe by that time.

    If the timeout is strictly larger than \((d_t+d_s)\delta\), no correct party will have timed out by \(t+(d_t+d_s)\delta\). Therefore, by this point, all correct parties will have marked round \(r\) safe, sent a \(\tuple{\Commit, r}\) message, and advanced to round \(r+1\). Since \(\delta\) is not known, the timeout has to be specified using the known upper bound on the message delay, \(\Delta\), which results in a timeout time of \((d_t+d_s)\Delta\).

    These commit messages will experience at most one message delay. Thus, by time \(t+d_t\delta+d_s\delta+\delta\), every correct party will have received at least \(n-f\) \(\tuple{\Commit, r}\) messages, which triggers the \textbf{Commit} rule causing every correct party to commit round \(r\) at this time, as long as the round timeout is strictly larger than \((d_t+d_s)\delta\).
\end{proof}

\begin{restatedtheorem}[Liveness]{theorem: liveness}
    With a round timeout larger than \((d_t+d_s)\delta\), and a leader schedule that ensures that correct leaders are chosen infinitely often, the protocol guarantees that if a correct party submits infinitely many blocks, then all correct parties deliver infinitely many blocks submitted by that party.
\end{restatedtheorem}
\begin{proof}
    By Corollary~\ref{corollary: higher-rounds}, correct parties do not get stuck and keep entering higher rounds. Therefore, there will eventually exist a round \(r_1\) such that the first correct party to enter it does so after GST and the round leader is correct. Then by~\Cref{lemma: liveness-gst}, the block of round \(r_1\) will be committed.

    This holds for all rounds with a correct leader after GST. Since there should be infinitely many of these, and leader rotation is fair, then every party that submits infinitely many blocks gets infinitely many of them delivered. 
\end{proof}

\begin{restatedtheorem}[Totality]{theorem: totality}
    If a correct party tob-delivers a block \(b\), then every correct party eventually tob-delivers \(b\).
\end{restatedtheorem}
\begin{proof}
    A correct party tob-delivers \(b\) only through the \textbf{Commit} and \textbf{Deliver} rules, upon committing some round \(k\) with \(b \in \fn{Log}(k)\); that is, \(b\) is the block of some round \(k_b\) that is an ancestor of \(k\).
    Because correct leaders are scheduled infinitely often and, by Corollary~\ref{corollary: higher-rounds}, correct parties keep entering higher rounds, there is, after GST, a round \(k'\ge k\) whose leader is correct and which is entered for the first time after GST; by Lemma~\ref{lemma: liveness-gst}, \emph{every} correct party commits round \(k'\).
    Since round \(k\) was committed and \(k' \ge k\), Lemma~\ref{lemma: round-agreement} (applied to \(k\) and \(k'\)) shows that \(k\) is an ancestor of \(k'\), and hence so is \(k_b\); therefore \(b \in \fn{Log}(k')\).
    When a correct party commits round \(k'\) it sets \(\var{delivered} \gets \fn{Log}(k')\) whenever \(\fn{Log}(k')\) is longer than its current \(\var{delivered}\), delivering every block of \(\fn{Log}(k')\) it has not delivered yet; if instead its \(\var{delivered}\) is already at least as long as \(\fn{Log}(k')\), then by Lemma~\ref{lemma: round-agreement} that longer sequence is \(\fn{Log}(k'')\) for a committed round \(k''\) of which \(k'\) is an ancestor, so it contains \(\fn{Log}(k')\) as a prefix.
    In either case, every correct party has delivered every block of \(\fn{Log}(k')\), and in particular \(b\), as claimed.
\end{proof}

\subsubsection{Latency Bounds}


\begin{restatedlemma}[Optimistic latency]{lemma: optimistic-latency}
    If at most \(f_o\) parties are faulty, \(L_r\) is correct, and it entered round \(r\) at a time \(t\geq \text{GST}\), all correct parties commit round \(r\) within \((d_o+1)\delta\) time of its proposal.
\end{restatedlemma}
\begin{proof}
    By the definition of the optimistic RBC delay and~\Cref{tab:broadcast_comparison}, if the leader is correct, all correct parties will rb-deliver its proposal at time \(t+d_o\delta\). Similarly, by Lemma~\ref{lemma: deliver-information}, all correct parties will be able to mark round \(r\) safe and send a \(\tuple{\Commit, r}\) message at this time. Then, by time \(t+d_o\delta+\delta\), all correct parties will receive \(\tuple{\Commit, r}\) messages from \(n-f\) parties and commit round \(r\).
\end{proof}

\begin{restatedcorollary}[Optimistic latency with Bracha-RBC]{corollary: optimistic-latency-bracha}
    The optimistic latency of protocols \(\shortname\) and \(\shortname^s\) is \(4\delta\) if at most \(f_o\) parties are faulty.
\end{restatedcorollary}
\begin{restatedcorollary}[Optimistic latency with Opt-RBC]{corollary: optimistic-latency-opt}
    The optimistic latency of protocols \(\shortname_{opt}\) and \(\shortname_{opt}^s\) is \(3\delta\) if at most \(f_o\) parties are faulty.
\end{restatedcorollary}
\begin{restatedlemma}[Optimistic block time]{lemma: optimistic-block-time-delivery}
    The optimistic block time of \name using propose-upon-delivery pipelining is \(d_o\delta\) if at most \(f_o\) parties are faulty.
\end{restatedlemma}
\begin{proof}
    Consider any two consecutive rounds \(r, r+1\) with correct leaders \(L_r, L_{r+1}\) such that \(L_r\) makes its proposal at a time \(t>\text{GST}\). As this is the optimistic case, Definition~\ref{def: opt-rbc-delay} parametrizes the delay until \(L_{r+1}\) rb-delivers \(L_r\)'s proposal as \(d_o\). Following the protocol, \(L_{r+1}\) will propose its next block at this moment, at time \(t+d_o\delta\), resulting in a block time of \(d_o\delta\).
\end{proof}
\begin{restatedcorollary}{corollary: optimistic-block-time-delivery}
    The optimistic block time of \name variants \(\shortname\) and \(\shortname_{opt}\) is \(3\delta\) and \(2\delta\) respectively.
\end{restatedcorollary}
\begin{restatedlemma}{lemma: bad-round-duration}
    After GST, if all correct parties enter round \(r\) at the latest at time \(t\), then all correct parties move to the next round at the latest at time \(t + \Delta_{\text{to}}+d_t\delta\), where \(\Delta_{\text{to}}=(d_s+d_t)\Delta\).
\end{restatedlemma}
\begin{proof}
    Each correct party starts its round-\(r\) timer upon entering the round (Step \textbf{Enter round}), so the round-\(r\) timers of all correct parties expire by time \(t+\Delta_{\text{to}}\), and, by Step \textbf{Timeout}, a correct party that is still in round \(r\) when its timer expires either has already voted to commit round \(r\) or raises its round-\(r\) timeout flag at that moment. Hence, by time \(t+\Delta_{\text{to}}\), every correct party has entered round \(r+1\), voted to commit round \(r\), or raised its round-\(r\) timeout flag. We distinguish two cases, depending on whether some correct party sets \(\var{safe}[r] = \True\) by time \(t+\Delta_{\text{to}}\).

    First suppose that no correct party sets \(\var{safe}[r] = \True\) by time \(t+\Delta_{\text{to}}\). Then no correct party votes to commit round \(r\) by that time (Step \textbf{Vote}), and a correct party can enter round \(r+1\) only after disabling round \(r\) (Step \textbf{Advance round}). If some correct party \Call{rn\_confirm}{}s the round-\(r\) timeout at a time \(\tau\leq t+\Delta_{\text{to}}\), then, by Lemma~\ref{lemma: 2Delta-btn}, all correct parties \Call{rn\_confirm}{} it by time \(\tau+2\delta\), disable the round, and enter round \(r+1\) by time \(t+\Delta_{\text{to}}+2\delta\). Otherwise, no correct party enters round \(r+1\) by time \(t+\Delta_{\text{to}}\), and thus every correct party raises its round-\(r\) timeout flag by then, broadcasting a \(\Vote\) message of the reliable notification protocol. By time \(t+\Delta_{\text{to}}+\delta\), every correct party has received \(\Vote\) messages from \(n-f\) parties and has broadcast an \(\Accept\) message, and, by time \(t+\Delta_{\text{to}}+2\delta\), every correct party has received \(\Accept\) messages from \(n-f\geq 2f+1\) parties, \Call{rn\_confirm}{}s the round-\(r\) timeout, disables the round, and enters round \(r+1\) (Step \textbf{Advance round}).

    Otherwise, some correct party sets \(\var{safe}[r] = \True\) at some time \(t_d\leq t+\Delta_{\text{to}}\).
    Note that, in this case, the reliable notification protocol is not guaranteed to confirm the round-\(r\) timeout, because the correct parties that voted to commit round \(r\) never raise their timeout flags; instead, correct parties advance thanks to the Totality properties of reliable broadcast and reliable notification, as follows.
    Every rb-delivery and rn-confirmation through which the party above established \(\var{safe}[r] = \True\) (the round-\(r\) proposal, the proposals of its chain of ancestor rounds, and the timeouts of the skipped rounds) occurred by time \(t_d\); hence, by the Totality property of reliable broadcast and Lemma~\ref{lemma: 2Delta-btn} (recall from Lemma~\ref{lemma: optimistic2} that \(d_t\geq 2\), so the totality delay of reliable broadcast dominates that of reliable notification), every correct party rb-delivers the same proposals and rn-confirms the same timeouts by time \(t_d+d_t\delta\), and thus sets \(\var{safe}[r] = \True\) by that time.
    Since, by time \(t+\Delta_{\text{to}}\), every correct party has entered round \(r+1\), voted, or raised its round-\(r\) timeout flag, Step \textbf{Advance round} makes every correct party enter round \(r+1\) by time \(\max(t+\Delta_{\text{to}}, t_d+d_t\delta)\leq t+\Delta_{\text{to}}+d_t\delta\).

    In both cases, since \(d_t\geq 2\), all correct parties enter round \(r+1\) by time \(t+\Delta_{\text{to}}+d_t\delta\).
\end{proof}

\begin{restatedcorollary}[Eventual worst-case view duration]{corollary: view-duration}
    The eventual worst-case view duration (\Cref{def: view-duration}) of \name is \(\Delta_{\text{to}}+d_t\delta\): \(5\Delta+2\delta\) for variant \(\shortname\) (Bracha-RBC) and \(8\Delta+4\delta\) for variant \(\shortname_{opt}\) (Opt-RBC).
\end{restatedcorollary}
\begin{proof}
    The views of \name are its rounds, and correct parties enter rounds in increasing order, one at a time (Step \textbf{Advance round}); hence the earliest time \(t_1\) at which every correct party has entered round \(r\) or higher is the time at which the last correct party enters round \(r\). For a round with \(t_1\geq\text{GST}\), Lemma~\ref{lemma: bad-round-duration} applied with \(t=t_1\) shows that every correct party enters a round strictly greater than \(r\) by time \(t_1+\Delta_{\text{to}}+d_t\delta\). The claim follows by instantiating \(d_s\) and \(d_t\) according to~\Cref{tab:broadcast_comparison}: \(\Delta_{\text{to}}=5\Delta\) and \(d_t=2\) for Bracha-RBC, and \(\Delta_{\text{to}}=8\Delta\) and \(d_t=4\) for Opt-RBC.
\end{proof}

\subsection{Pipelined-Proposal Version}

We now prove the liveness properties of the pipelined-proposal version of \name (\Cref{fig:lionfish-protocol-speculative-pipelining}); as noted in~\Cref{sec: security}, its safety analysis is unchanged from the standard version.

\begin{lemma}
    \label{lemma: liveness-gst-pipeline}
    If the first correct party to enter round \(s\) does so at time \(t\) such that \(t\geq \text{GST}\), the leaders of rounds \(s,\ldots,s+k-1\) are correct, and the round timeout is larger than \((d_t+d_s)\delta\), then every correct party will commit round \(s+k-1\).
\end{lemma}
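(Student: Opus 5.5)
We adapt the proof of Lemma~\ref{lemma: liveness-gst} to a window of \(k\) consecutive correct leaders. The plan is to show that rounds \(s+1,\ldots,s+k-1\) all eventually carry a safe proposal with parent the previous round. We then show that round \(s+k-1\) is never disabled, and that every correct party votes for it before its timer expires.

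\textbf{Step 1 (entering the window).} By Lemma~\ref{lemma: optimistic2}, all correct parties enter round \(s\) by \(t+d_t\delta\). At that point every round below \(s\) is marked safe or disabled at all correct parties, again by totality of RBC and Lemma~\ref{lemma: 2Delta-btn}. The leader \(L_s\) is correct, so it has proposed round \(s\) with a safe parent by then, either through \textbf{Propose} or earlier speculatively with parent \(s-1\).

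Here is the delicate case. If \(L_s\) proposed speculatively with parent \(s-1\) and round \(s-1\) was proposed by a Byzantine leader that never completed its RBC, then round \(s-1\) is disabled and never safe, so round \(s\) can never be marked safe and is eventually disabled. The \textbf{Disable} step then aborts up to \(s+k-2\). This is exactly why the statement only promises round \(s+k-1\).

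\textbf{Step 2 (induction along the window).} I would prove by induction on \(j=0,\ldots,k-1\) that the leader of round \(s+j\) rb-broadcasts a proposal whose parent is either \(s+j-1\) or a round \(r'\) satisfying \(\fn{SafeParent}\) at the time of the proposal. The inductive argument is that correct leader \(L_{s+j}\) receives the first RBC message of correct \(L_{s+j-1}\) within \(\delta\) of it being sent, so its speculative proposal is triggered if it occurs. Speculation requires \(\var{safe}[s+j-k]\), and round \(s+j-k<s\) is resolved by \(t+d_t\delta\). Conversely, \(L_{s+j}\) cannot have speculated before round \(s+j-k\) was safe.

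For \(j=k-1\), the speculative parent of round \(s+k-1\) is \(s+k-2\), whose leader is correct. That proposal in turn has parent \(s+k-3\), and so on down to round \(s\), whose parent is the possibly problematic one. The main obstacle is exactly here: if round \(s\)'s parent \(s-1\) never becomes safe, the whole chain \(s,\ldots,s+k-1\) built by speculation is unsafe. I would handle this through the guard \(\var{safe}[r-k]\). \(L_{s+k-1}\) speculates only once \(\var{safe}[s-1]\) holds, so round \(s-1\) is safe. Rounds \(s,\ldots,s+k-2\) are all proposed by correct leaders with RBC started, so they are rb-delivered. Hence every link in the chain from \(s+k-1\) down to \(s-1\) becomes safe. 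If instead \(L_{s+k-1}\) did not speculate, it uses \textbf{Propose} with a \(\fn{SafeParent}\), and Lemma~\ref{lemma: deliver-information} applies directly.

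\textbf{Step 3 (timing and no abort).} Let \(t'\) be the time at which the last correct party enters round \(s+k-1\), using Lemma~\ref{lemma: optimistic2}. The proposal of round \(s+k-1\) was broadcast no later than \(L_{s+k-1}\)'s entry, and its ancestors are rb-delivered within \(d_t\delta\) of being safe at the leader. So, as in Lemma~\ref{lemma: liveness-gst}, every correct party marks round \(s+k-1\) safe within \((d_t+d_s)\delta\) of entering that round, before its timer fires.

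I must also exclude \(\var{aborted}[s+k-1]\). An abort of \(s+k-1\) requires some correct party to have timed out in a round \(r\ge s\) and rn-confirmed it. Applying the same timing argument to each round \(s+j\) whose proposal chain is safe shows no correct party times out there. Only an unsafe round \(s\) (the Step 1 case) can be disabled, and its abort range reaches only \(s+k-1-1\).

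Finally, all correct parties vote \(\tuple{\Commit,s+k-1}\), and these votes arrive within \(\delta\), so every correct party commits round \(s+k-1\).
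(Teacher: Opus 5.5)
Your Steps~1--2 follow the paper's proof. You split on whether the speculative chain below round \(s\) becomes safe, and you use the guard \(\var{safe}[s-1]\) in \textbf{Speculative propose} for round \(s+k-1\) to stop \(L_{s+k-1}\) from building on a failed chain. The gap is in Step~3, where you exclude \(\var{aborted}[s+k-1]\).

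You claim the abort range of the unsafe round \(s\) ``reaches only \(s+k-2\)''. That conflates two cascades: \(s+k-2\) is the reach of the cascade triggered by round \(s-1\). If a correct party has \(\var{timed\_out}[s]=\True\) when it rn-confirms round \(s\), the \textbf{Disable} step raises the flags of all rounds \(r'\le s+k-1\). That aborts exactly the round you need. The same holds for rounds \(s+1,\ldots,s+k-2\), which are also unsafe when built speculatively on \(s\), so ``only round \(s\) can be disabled'' is inaccurate as well.

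Your timing argument does not exclude such a timeout either. Round \(s\) never becomes safe, so a party can leave it only once it is disabled. The \textbf{Timeout} step does not check \(\var{aborted}\). Hence any correct party still in round \(s\) when its timer fires sets \(\var{timed\_out}[s]\), even if it had already aborted the round.

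The missing idea, which the paper's proof supplies, is that in the failure case no correct party genuinely times out in rounds \(s,\ldots,s+k-2\). The paper argues as follows.
\begin{itemize}
\item Parties that timed out in round \(s-1\) raise the flags of these rounds in the same \textbf{Disable} step in which they confirm round \(s-1\).
\item After GST these flags are rn-confirmed within one reliable-notification latency, far below \(\Delta_{to}\). So these rounds are disabled and exited before their timers fire.
\item These raises set \(\var{aborted}\) rather than \(\var{timed\_out}\), so their confirmations start no further cascade.
\end{itemize}

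This step itself needs care. Fast confirmation requires enough correct parties to raise the flag, and Unanimity assumes all of them do. A correct party may learn that round \(s-1\) is disabled before its own round-\((s-1)\) timer fires. It then has \(\var{timed\_out}[s-1]=\False\) and raises nothing for round \(s\), so a complete argument must account for such parties.

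Separately, your Step~3 claim that the ancestors of round \(s+k-1\) are rb-delivered within \(d_t\delta\) of being safe at the leader assumes they were safe when it proposed. That fails if \(L_{s+k-1}\) proposed speculatively. In that case the timing has to come from the induction over the window.
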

\begin{proof}
    Since \(t\) happens after the GST, the maximum message delay in all rounds is bounded by \(\delta\).

    First consider the case in which the speculative chain starting at round \(s\) has a safe parent. Then \(L_s\) either makes a normal proposal when it enters round \(s\), or it has already made a speculative proposal whose parent is round \(s-1\) and round \(s-1\) becomes safe. In both subcases, the same timing argument as in Lemma~\ref{lemma: liveness-gst} applies to round \(s\): by time \(t+(d_t+d_s)\delta\), all correct parties have rb-delivered \(L_s\)'s proposal, marked round \(s\) safe, and voted to commit it before their round-\(s\) timers expire.
    Since the leaders of rounds \(s,\ldots,s+k-1\) are correct, the proposal of each subsequent round is rb-delivered by all correct parties. Moreover, each such proposal either has the previous round as a safe parent, if it was made speculatively, or is made normally upon entering the round with a safe parent chosen by the \textbf{Propose} rule. Inductively, all rounds \(s,\ldots,s+k-1\) are marked safe and voted for before their timers expire. Thus all correct parties commit round \(s+k-1\).

    It remains to consider the case in which the chain starting at round \(s\) is speculative on a round below the window that does not become safe in time. The worst case is that the failed round is \(s-1\): a failed round \(q<s\) can cause the \textbf{Disable} rule to raise timeout flags only for rounds \(q+1,\ldots,q+k-1\), so among rounds \(s,\ldots,s+k-1\) its cascade reaches farthest when \(q=s-1\), and even then reaches only up to round \(s+k-2\).
    The parties that genuinely timed out in round \(q\) issue these raises as soon as they confirm \(q\)'s timeout. After GST, reliable notification makes the corresponding timeout flags confirm within one reliable-notification latency, before the affected rounds can produce another genuine timeout. Moreover, these cascaded raises are recorded by setting \(\var{aborted}[\cdot]\), not \(\var{timed\_out}[\cdot]\), and therefore their confirmations do not trigger another cascade. Consequently, correct parties may disable and skip rounds \(s,\ldots,s+k-2\), but this process does not reach round \(s+k-1\).

    Finally, \(L_{s+k-1}\) cannot have made a speculative proposal on top of the failed chain: the \textbf{Speculative propose} rule for round \(s+k-1\) requires \(\var{safe}[s-1]=\True\). Thus, when \(L_{s+k-1}\) enters round \(s+k-1\), all previous rounds have been marked safe or disabled, and \(L_{s+k-1}\) makes a normal proposal. Since \(L_{s+k-1}\) is correct and this happens after GST, the timing argument of Lemma~\ref{lemma: liveness-gst} applies, and every correct party commits round \(s+k-1\).
\end{proof}

\begin{restatedtheorem}[Liveness]{theorem: liveness-pipelined}
    With a round timeout larger than \((d_t+d_s)\delta\), and a leader schedule that ensures that sequences of $k$ correct leaders are chosen infinitely often, the protocol guarantees that if a correct party submits infinitely many blocks, then all correct parties deliver infinitely many blocks submitted by that party.
\end{restatedtheorem}
\begin{proof}
    By Corollary~\ref{corollary: higher-rounds}, correct parties do not get stuck and keep entering higher rounds. Therefore, there will eventually exist a round \(s_1\) such that the first correct party to enter it does so after GST and the leaders of rounds \(s_1,\ldots,s_1+k-1\) are correct. Then by Lemma~\ref{lemma: liveness-gst-pipeline}, at least round \(s_1+k-1\) will commit its block.

    This holds for all sequences of \(k\) correct leaders in rounds after GST. Since there should be infinitely many of these sequences, and leader rotation is fair, then every party that submits infinitely many blocks gets infinitely many of them delivered.
\end{proof}

\begin{restatedtheorem}[Totality]{theorem: totality-pipelined}
    Under the assumptions of~\Cref{theorem: liveness-pipelined}, if a correct party tob-delivers a block \(b\), then every correct party eventually tob-delivers \(b\).
\end{restatedtheorem}
\begin{proof}
    The proof of Theorem~\ref{theorem: totality} carries over, with windows of \(k\) consecutive correct leaders in place of single correct-leader rounds.
    A correct party tob-delivers \(b\) only upon committing some round \(r\) with \(b \in \fn{Log}(r)\).
    Because sequences of \(k\) correct leaders are scheduled infinitely often and, by Corollary~\ref{corollary: higher-rounds}, correct parties keep entering higher rounds, there is, after GST, a round \(s \ge r\) which is entered for the first time after GST and such that the leaders of rounds \(s,\ldots,s+k-1\) are correct; by Lemma~\ref{lemma: liveness-gst-pipeline}, \emph{every} correct party commits round \(r' = s+k-1 \ge r\).
    Since the safety analysis carries over to the pipelined version, Lemma~\ref{lemma: round-agreement} applies to \(r\) and \(r'\) and shows that \(r\) is an ancestor of \(r'\), so \(\fn{Log}(r)\) is a prefix of \(\fn{Log}(r')\) and \(b \in \fn{Log}(r')\).
    The delivery argument concluding the proof of Theorem~\ref{theorem: totality} then shows that every correct party delivers every block of \(\fn{Log}(r')\), and in particular \(b\).
\end{proof}

\begin{restatedlemma}[Optimistic block time]{lemma: optimistic-block-time-pipelined}
    The optimistic block time of \name variants \(\shortname^s\) and  \(\shortname_{opt}^s\) is \(\delta\).
\end{restatedlemma}
\begin{proof}
    Consider an infinite sequence of rounds all with correct leaders and starting in round \(r\) that start at a time \(t>\text{GST}\). Since all leaders are correct, they follow the protocol and all their broadcasts finish and their blocks are marked as safe. When using speculative proposing, a leader broadcasts a block for its round upon receiving the proposal of the previous round leader (rule \textit{Speculative propose}). This takes one message delay, so after GST a new block is proposed every \(\delta\) time. Since all leaders are correct and all rounds occur after GST, all rounds will have the same duration $D$ (which differs depending on the underlying broadcast protocol used). Then, since a block is proposed every $\delta$ time, a new block will be committed every \(\delta\) time. 
\end{proof}

\begin{restatedlemma}{lemma: bad-round-duration-pipelined}
    In the pipelined-proposal version of \name, after GST, if all correct parties enter round \(r\) at the latest at time \(t\), then all correct parties move to the next round at the latest at time \(t + \Delta_{\text{to}}+d_t\delta\).
\end{restatedlemma}
\begin{proof}
    The proof of Lemma~\ref{lemma: bad-round-duration} carries over, as speculative proposals change how blocks are proposed but not how rounds are exited; we only account for the two differences relevant to the argument.
    First, a correct party may raise its round-\(r\) timeout flag not only through Step \textbf{Timeout} (setting \(\var{timed\_out}[r]\)) but also through Step \textbf{Disable} (setting \(\var{aborted}[r]\)), possibly before even entering round \(r\); since Step \textbf{Advance round} treats \(\var{aborted}[r]\) exactly like \(\var{timed\_out}[r]\), and since raising earlier only anticipates the corresponding \(\Vote\) message of the reliable notification protocol, all the bounds of the proof still hold.
    Second, Step \textbf{Vote} is additionally guarded by \(\var{aborted}[\var{curr\_round}] = \False\), which only further restricts voting.
    It thus remains true that, by time \(t+\Delta_{\text{to}}\), every correct party has entered round \(r+1\), voted to commit round \(r\), or raised its round-\(r\) timeout flag, and both cases of the proof of Lemma~\ref{lemma: bad-round-duration} apply unchanged.
\end{proof}

\begin{restatedcorollary}[Eventual worst-case view duration, pipelined]{corollary: view-duration-pipelined}
    The eventual worst-case view duration (\Cref{def: view-duration}) of the pipelined-proposal version of \name is \(\Delta_{\text{to}}+d_t\delta\): \(5\Delta+2\delta\) for variant \(\shortname^s\) (Bracha-RBC) and \(8\Delta+4\delta\) for variant \(\shortname_{opt}^s\) (Opt-RBC).
\end{restatedcorollary}
\begin{proof}
    Identical to the proof of Corollary~\ref{corollary: view-duration}, using Lemma~\ref{lemma: bad-round-duration-pipelined} in place of Lemma~\ref{lemma: bad-round-duration}.
\end{proof}

\subsection{Reliable Notification Proofs}%
\label{app:rn-proof}
\input{security_analysis_btn}

%% file: security_analysis_btn.tex
\begin{lemma}
    Figure~\ref{fig:tbn} implements a reliable notification protocol.
\end{lemma}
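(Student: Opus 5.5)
We verify the three properties of \Cref{def: reliable-notification} for the protocol of \Cref{fig:tbn}. Validity comes first, since the others build on the same counting. The plan is to identify the first correct party that sends $\sig{\Accept, e}$.

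\textbf{Validity.} A correct party sends $\sig{\Accept, e}$ either via the \textbf{Accept} rule, after $n-f$ $\sig{\Vote, e}$ messages, or via \textbf{Cascade}, after $f+1$ $\sig{\Accept, e}$ messages. In the Cascade case at least one of those Accepts comes from a correct party, which must have sent it earlier. So the first correct Accept sender necessarily used the \textbf{Accept} rule. It therefore received $n-f$ Votes, at least $n-2f$ of them from correct parties, and a correct party sends $\sig{\Vote, e}$ only upon rn-raising $e$. Any correct party that confirms $e$ received $2f+1$ Accepts, at least $f+1$ of them correct, so some correct Accept exists. Hence at least $n-2f$ correct parties raised $e$.

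\textbf{Totality.} If a correct party confirms $e$, it received $2f+1$ Accepts, at least $f+1$ from correct parties. Reliable channels deliver these $f+1$ messages to every correct party. Each correct party then either has already sent Accept or does so via \textbf{Cascade}, so all $n-f\ge 2f+1$ correct parties eventually send Accept. Every correct party then receives $2f+1$ Accepts and confirms. This is the untimed version of the argument in \Cref{lemma: 2Delta-btn}.

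\textbf{Unanimity.} If every correct party raises $e$, then all $n-f$ correct parties broadcast $\sig{\Vote, e}$. Every correct party eventually receives $n-f$ Votes, sends Accept (if it has not already via Cascade), and then receives $n-f\ge 2f+1$ Accepts and confirms.

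The main subtlety is the Validity step: the \textbf{Cascade} rule could seem to let Byzantine parties bootstrap Accepts. The threshold $f+1$ prevents this, and the minimal-time (first correct sender) argument makes that precise. I would also note that $n>3f$ gives $n-f\ge 2f+1$, which is used in both liveness properties.
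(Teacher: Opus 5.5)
Your proposal is correct and follows the paper's proof essentially step for step. Unanimity and Totality use the same quorum counting, with Totality going through the \textbf{Cascade} rule once $f+1$ correct $\sig{\Accept, e}$ messages reach everyone; Validity uses the same observation that the first correct sender of $\sig{\Accept, e}$ cannot have used \textbf{Cascade}, so it received $n-f$ Votes, at least $n-2f$ of them from correct parties that raised $e$. The only difference is presentational: you treat Validity first and spell out explicitly why some correct $\sig{\Accept, e}$ sender must exist.
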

\begin{proof}
\textbf{Unanimity. } If all correct parties invoke \Call{rn\_raise}{$e$}, all of them will send a $\sig{\Vote, e}$ message. Since there are at most $f$ faulty parties, all correct parties will eventually receive such messages from $n-f$ parties, causing them to send an $\sig{\Accept, e}$ message. Similarly, all correct parties will receive such messages from at least $n-f\geq 2f+1$ parties, causing them to invoke \Call{rn\_confirm}{$e$}.

\textbf{Totality. } Suppose that a correct party confirms \(e\). For this to happen, it must have received $2f+1$ $\sig{\Accept, e}$ messages. Among all the parties that sent them, there are at least $f+1$ that are correct, meaning that the $\sig{\Accept, e}$ messages they sent will eventually be received by all the other participants. This will trigger all correct parties to send an $\sig{\Accept, e}$ message, causing all of them to receive such messages from $n-f$ parties, triggering the confirmation of~\(e\).

\textbf{Validity.} Suppose a correct party confirms \(e\). For this to happen, it must have received $\sig{\Accept, e}$ messages from $2f+1$ parties, meaning that at least $f+1$ correct parties sent this type of message. Consider the first correct party that sent such a message. Since $\sig{\Accept, e}$ messages are sent after receiving $\sig{\Accept, e}$ messages from $f+1$ parties or $\sig{\Vote, e}$ messages from $n-f$ parties, and there are at most $f$ faulty parties, the first correct party could not have received $f+1$ $\sig{\Accept, e}$ messages. Thus, it must have sent the message after receiving $\sig{\Vote, e}$ messages from $n-f$ parties. Among those, there must be at least $n-2f$ correct parties, which must have invoked \Call{rn\_raise}{$e$}.

\end{proof}